\crefname{assumption}{Assumption}{Assumptions}
\crefname{lem}{Lemma}{Lemmas}
\crefname{align}{}{}
\crefname{appsec}{Appendix}{Appendices}
\crefname{assumptionc}{Assumption}{Assumptions}
\crefname{lemmac}{Lemma}{Lemmas}
\crefname{propositionc}{Proposition}{Propositions}
\crefname{remarkc}{Remark}{Remarks}
\crefname{corollaryc}{Corollary}{Corollaries}
\let\@afterindenttrue\@afterindentfalse
\pgfplotsset{compat=1.9}
\theoremstyle{empty}
\newcommand{\loading}[1]{\lambda_{#1}}
\newcommand{\Loading}{\Lambda}
\newcommand{\hatLoading}{\hat\Lambda}
\newcommand{\autocorr}{\gamma}
\newcommand{\SN}{\mathds{N}}
\newcommand{\SR}{\mathds{R}}
\newcommand{\lawMP}[1][0]{\mathrm{P}_{#1,n,T}^{\text{MP}}}
\newcommand{\law}[1][0]{\mathrm{P}_{#1,n,T}}
\newcommand{\lawPANIC}[1][0]{\mathrm{P}_{#1,n,T}^{\text{PANIC}}}
\newcommand{\expec}{\mathrm{E}}
\newcommand{\var}[1][]{\operatorname{var}_{#1}}
\newcommand{\ACFgamma}[1]{\gamma_{\eta,#1}}
\newcommand{\ACFgammaF}[1]{\gamma_{f,#1}}
\newcommand{\vareps}{\Sigma_\varepsilon}
\newcommand{\vareta}[1][]{\Sigma_{\eta #1}}
\newcommand{\varF}[1][]{\Sigma_{f #1}}
\newcommand{\proxvareps}{\Psi_\varepsilon}
\newcommand{\proxvarepsT}{\Psi_{\varepsilon}}
\newcommand{\proxvarepsgen}[1][]{\psi_{\varepsilon #1}}
\newcommand{\hatproxvarepsgen}[1][]{\hat\psi_{\varepsilon #1}}
\newcommand{\auxproxvarepsgen}[1][]{{\psi^*_{\varepsilon #1}}}
\newcommand{\longrunvarmatrixeta}{\Omega_\eta}
\newcommand{\longrunvarmatrixetaT}{\Omega_{\eta}}
\newcommand{\LReta}{\Psi_\eta}
\newcommand{\LRetaT}{\Psi_{\eta}}
\newcommand{\hatlongrunvarmatrixeta}{\hat\Omega_\eta}
\newcommand{\longrunvarF}[1]{\omega_{f,#1}^2}
\newcommand{\longrunvarFT}[1]{\omega_{f,#1,T}^2}
\newcommand{\longrunvarmatrixF}{\Omega_{F}}
\newcommand{\longrunvareta}[1]{\omega_{\eta,#1}^2}
\newcommand{\longrunvaretaT}[1]{\omega_{\eta,#1,T}^2}
\newcommand{\longrunvaretaTroot}[1]{\omega_{\eta,#1,T}}
\newcommand{\oslongrunvareta}[1]{\delta_{\eta,#1}}
\newcommand{\oslongrunvaretaT}[1]{\delta_{\eta,#1,T}}
\newcommand{\longrunvar}[1]{\omega^2_{#1}}
\newcommand{\longrunvarT}[1]{\omega^2_{#1,T}}
\newcommand{\oslongrunvarfT}[1]{\delta_{f,#1,T}}
\newcommand{\hatlongrunvareta}[1]{\hat{\omega}_{\eta,#1}^2}
\newcommand{\hatoslongrunvareta}[1]{\hat{\delta}_{\eta,#1}}
\newcommand{\limitfactorloadings}{\Psi_\Loading}
\newcommand{\CSorigMP}{\Delta_{n,T}^{\text{MP}}}
\newcommand{\CSorigPANIC}{\Delta_{n,T}^{\text{PANIC}}}
\newcommand{\FIorigMP}{ J_{n,T}^{\text{MP}}}
\newcommand{\FIorigPANIC}{ J_{n,T}^{\text{PANIC}}}
\newcommand{\CSMPLRV}{\tilde\Delta_{n,T}^{\text{MP}}}
\newcommand{\auxCS}{\Delta^*_{n,T}}
\newcommand{\etaCS}{\Delta_{n,T}}
\newcommand{\hatCS}{\hat{\Delta}_{n,T}}
\newcommand{\FIMP}{ \tilde J_{n,T}^{\text{MP}}}
\newcommand{\hatJ}{ \hat J_{n,T}}
\newcommand{\rnorm}[1]{\left\| #1 \right\|_{\text{spec}}}
\newcommand{\id}[1][]{I_{#1}}
\newcommand{\kron}{\otimes}
\newcommand{\diag}{\operatorname{diag}}
\newcommand{\trans}{^\prime}
\newcommand{\ones}{\iota}
\newcommand{\tr}{\operatorname{tr}}
\newcommand{\mineig}[1]{\lambda_{\text{min}}\left(#1\right)}
\newcommand{\maxeig}[1]{\lambda_{\text{max}}\left(#1\right)}
\newcommand{\Frob}[1]{ \left\| #1 \right\|_F}
\newcommand{\vto}{\stackrel{d}{\longrightarrow}}
\newcommand{\jointlimits}{(n,T\to\infty)}
\newcommand{\calA}{\mathcal{A}}
\newcommand{\A}{A}
\newcommand{\rH}{\mathrm{H}}
\newcommand{\rd}{\mathrm{d}}
\newcommand{\crosseta}[1]{\eta_{\cdot,#1}}
\newcommand{\m}{m}
\newcommand{\asMP}{assumption:serial correlation,assumption:factor_loadings,assumption:rates,ass:MP}
\newcommand{\asPANIC}{assumption:serial correlation,assumption:factor_loadings,assumption:rates,ass:BN}
\newcommand{\asEither}{assumption:serial correlation,assumption:factor_loadings,assumption:rates,assumption:framework}
\newcommand{\hatlrv}{\hat\omega}
\newcommand{\hatoslrv}{\hat\delta}
\newcommand{\hatsquaredomegas}{\hat\phi}
\newcommand{\lrv}{\omega}
\newcommand{\oslrv}{\delta}
\newcommand{\squaredomegas}{\phi}
\newcommand{\bfe}[1][i]{E_{#1}}
\newcommand{\BNbfe}[1][i]{\hat E_{#1}}
\newcommand{\BNbfem}[1][i]{\hat E_{-1,#1}}
\newcommand{\BNrho}{\hat\rho^+}
\newcommand{\hatGamma}{\hat\Loading}
\newcommand{\MPrho}{\rho^+_{\text{pool}}}
\newcommand{\tUMP}{t_{\text{UMP}}}
\newcommand{\tUMPEMP}{t_{\text{UMP}}^{\text{emp}}}
\begin{document}
%
%
%
\begin{frontmatter}
\title{
Local Asymptotic Equivalence of the Bai and Ng (2004) and Moon and Perron (2004)
Frameworks for Panel Unit Root Testing
}
\author[T]{Oliver Wichert}\ead{O.wichert@uvt.nl}
\author[D]{I. Gaia Becheri}\ead{I.G.Becheri@tudelft.nl}
\author[T]{Feike C. Drost}\ead{F.C.Drost@uvt.nl}
\author[T]{Ramon van den Akker}\ead{R.vdnAkker@uvt.nl}
\address[T]{Tilburg University, Department of Econometrics \& Operations Research}
\address[D]{Zurich Insurance Group Ltd.}
%

\begin{abstract}
This paper considers unit-root tests in
large $n$ and large $T$ heterogeneous panels with
cross-sectional dependence generated by unobserved factors. 
We reconsider the two prevalent approaches in the literature, that of  \cite{MoonPerron2004} and the PANIC setup proposed in \cite{BaiNg2004}. While these have been considered as completely different setups, we show that, in case of Gaussian innovations, the frameworks are  asymptotically equivalent in the sense that both experiments are locally asymptotically normal (LAN) with the same central sequence. 
Using Le Cam's theory of statistical experiments we determine the local asymptotic power envelope and derive an optimal test jointly in both setups. We show that the popular
\cite{MoonPerron2004} and \cite{BaiNg2010} tests only attain the power envelope in case there is no heterogeneity in the long-run variance of the idiosyncratic components. The new test is asymptotically uniformly most powerful irrespective of possible heterogeneity. Moreover, it turns out that for any  test, satisfying a mild regularity condition,  the  size and local asymptotic power are the same under both data generating processes. Thus, applied researchers do not need to decide on one of the two frameworks to conduct unit root tests. 
Monte-Carlo simulations corroborate our asymptotic results and document 
significant gains in finite-sample power if the variances of the idiosyncratic shocks differ substantially among the cross sectional units.
\\ \\
\emph{JEL classification:}  C22; C23
\end{abstract}
\begin{keyword}
unit root, Local Asymptotic Normality, limit experiment, asymptotic power envelope, factor model, local-to-unity asymptotics, cross-sectional dependence
\end{keyword}
\end{frontmatter}
Testing for unit roots is an important aspect of time series and panel data analysis.\footnote{See, for example, the textbook \cite{Choi2015} for an overview.}
The presence of unit roots not only determines how to proceed for correct statistical inference but can also have serious policy implications. A well-known problem with univariate unit roots tests is their low power. In the last two decades, increased data availability led to the development of panel unit root tests that increase the statistical power by exploiting the cross-sectional data dimension.

The ``first generation'' of panel unit root tests does not allow for cross-sectional dependence, i.e., panel units are assumed to be independent of each other.\footnote{See, for example, the surveys \cite{Banerjee1999}, \cite{BaltagiKao2000}, \cite{Choi2006}, \cite{BreitungPesaran2008}, and \cite{WesterlundBreitung2013}. Local asymptotic powers of first generation tests have been considered in, for example, \cite{Breitung2000}, and \cite{Madsen2010}. A large scale Monte Carlo study to assess finite-sample powers was conducted in \cite{HlouskovaWagner2006}.}
For many, if not most, empirical applications, however, the assumption of cross-sectional independence is not only empirically hard to justify but has also non-trivial implications for the properties of test statistics. In fact, as shown in 
\cite{OConnell1998} and
\cite{Gutierrez2006}, the dependence between cross-section units can compromise the validity of  ``first generation tests''. For this reason, a ``second generation'' of tests, which are also valid in case of cross-sectional dependence, gained a foothold in the literature.\footnote{Popular second generation tests are proposed in, among others, \cite{PhillipsSul2003}, \cite{BaiNg2004,BaiNg2010}, \cite{MoonPerron2004},  \citeauthor{BreitungDas2005} (\citeyear{BreitungDas2005}, \citeyear{BreitungDas2008}), \cite{Pesaran2007} and \cite{PesaranSmithYamagata2013}.}

This paper reconsiders the two leading second generation classes of data generating processes, namely the PANIC framework proposed in \cite{BaiNg2004} and the framework proposed in \cite{MoonPerron2004}, henceforth MP. Both setups allow for cross-sectional dependence through common, unobserved factors. MP uses an autoregressive structure with the factors appearing in the innovations (errors). For PANIC, the factors are part of the ``mean specification''€™. Consequently, the PANIC framework allows for non-stationarity generated by the factors and for non-stationarity generated by the idiosyncratic components. This is in contrast to the MP framework, for which the factors and the idiosyncratic components have the same order of integration. As \cite{BaiNg2010}, \cite{PesaranSmithYamagata2013} and \cite{Westerlund2015}, this paper will focus on testing for unit roots in the idiosyncratic components. 

This paper offers four main contributions. Firstly, our results imply that 
\emph{for all tests (satisfying a mild regularity condition) it suffices to determine the asymptotic size and local power in one of the frameworks, since the same behaviour automatically holds for the other one.}\footnote{For unit-root testing, it thus is irrelevant to make a specific choice for the data generating process. This is good news for the practitioner, who no longer must decide between two competing frameworks that are typically hard to distinguish based on finite samples.}
Previous papers are based on either MP or PANIC for the construction of test statistics. However, using our first main result, the (same) local asymptotic power function is automatically obtained for the other framework as well.\footnote{In particular, we use the first main result to show that the unit root tests proposed in \cite{MoonPerron2004} are equivalent (in terms of asymptotic size and power) to the tests proposed in \cite{BaiNg2010}.
A first study on the comparison of the behavior of these tests is present in Bai and Ng (2010),
but, to our best knowledge, the equivalence  has not been observed
before.
}
These conclusions are obtained by showing that the PANIC and MP experiments are \emph{both Locally Asymptotically Normal (LAN) with the same central sequence and Fisher information}.\footnote{This means that the limit experiment (in the Le Cam sense) is a Gaussian shift experiment; see, for example, 
\cite{vdVaart2000}.
For unit root problems in (univariate) time
series, limit experiment theory has been exploited by, amongst others, \cite{Jansson2008}
and \cite{BDvdA}.
}

Secondly, exploiting the general theory for LAN experiments we easily obtain the \emph{local asymptotic power envelope, which is, in view of the first main result, the same for PANIC and MP}. This result extends the work by \cite{MoonPerronPhillips2007}, \cite{BDvdA},  \cite{MPP2014}, and \cite{JuodisWesterlund2018} on  first generation frameworks, to the second generation.  It turns out that the level of the local asymptotic power envelope only depends on the (local) deviation to the unit root. The level of the power envelope is thus not affected by the nuisance parameters. \footnote{Westerlund (2015) observed that the local asymptotic power of the tests considered in that paper do depend on the presence of serial and cross-sectional dependence (see Remark 2 in that paper). Consequently, these tests are not globally optimal.} We also provide a new derivation, using our LAN-result, of the local asymptotic power of the popular \cite{MoonPerron2004} and \cite{BaiNg2010} tests.\footnote{\cite{Westerlund2015} derived, via ``triangular array asymptotics'', the local asymptotic power function of the tests proposed in \cite{BaiNg2010}. Using our LAN-result we provide a short derivation at the expense of slightly less general conditions.} A comparison of the power functions to the power envelope shows that these original tests are only optimal in case there is no heterogeneity in the long-run variances of the idiosyncratic components.

Thirdly, we propose a \emph{new test that is asymptotically uniformly most powerful} irrespective of possible heterogeneity in the long-run variance of the idiosyncratic components. Our test is motivated by our expansion of the likelihoods, underlying the LAN results, and its optimality is easily proved by exploiting LAN-theory.  Compared to the tests proposed in \cite{MoonPerron2004} and \cite{BaiNg2010}, the size of the power gains depends on the amount of heterogeneity in the long-run variances of the idiosyncratic components. We report numerical asymptotic powers for commonly encountered amounts of heterogeneity and use Monte-Carlo experiments to show that the new test compares favorably also in finite samples.

Finally, to obtain the LAN result for the PANIC case, we first show that \emph{the model in which we observe both the panel units and the common factors is equivalent to that where the factors are unobserved}. This in contrast to other data generating processes, used in the literature on panel unit roots, where observing factors or correlated covariates does yield additional power; see, for example, \citet{PesaranSmithYamagata2013}, \citet{Becheri:2015gi}, and \cite{JuodisWesterlund2018}. Moreover, \emph{for both the MP and PANIC framework, our results imply that the local asymptotic power envelope for the setting in which all nuisance parameters (this includes incidental intercepts, factor loadings, and coefficients of the linear filters generating serial dependence) are known, can be attained}. In other words, we demonstrate that we are in an adaptive setting.

To obtain the local and asymptotic equivalence of the MP and PANIC frameworks, we need to impose some restrictions. First, we assume that the driving innovations are Gaussian. Second, we impose the deviations to the unit root, under the alternative hypothesis, to be the same for all panel units. And third, we do not allow for (incidental) trends. The Gaussianity facilitates a relatively easy proof of the LAN-result and it seems to be rather difficult to generalize this assumption; even for first-generation frameworks no results are available yet. Nevertheless, for the proposed asymptotically uniformly most powerful test, we stress that Gaussianity is not required to obtain asymptotic size, implying validity of the test in a non-Gaussian environment.  In view of \cite{ BDvdA} we do not expect that a generalization to random deviations from the unit root, under the alternative hypothesis, would affect our main results. The Monte Carlo results seem to confirm this conjecture for finite-samples. Allowing for incidental trends leads to a different convergence rate (see \cite{MoonPerronPhillips2007}) and thus requires a new asymptotic expansion of the log-likelihood ratios, which is beyond the scope of this paper. 

The paper is organized as follows. Section~\ref{sec:setup} presents the model and assumptions. Section~\ref{sec:limexp} derives the common approximation to the local likelihood ratios in the two experiments and derives its limiting distribution. \Cref{sec:UMPtest} introduces our new UMP test based on the limit experiment. \Cref{sec:tests_comparison} computes the local asymptotic power functions of the tests proposed in \cite{MoonPerron2004} and \cite{BaiNg2010} and \cref{sec:MC} compares their asymptotic and finite-sample power to those of the new UMP test. \Cref{sec:concl} concludes. All proofs are organized in several appendices.
\section{Setup and Assumptions}\label{sec:setup}
\subsection{Data-generating processes}
\label{sub:frameworks}
\noindent
We consider observations $Z_{it}$,  $i=1,\dots,n$ and $t=1,\dots,T$, generated by the components specification
\begin{align}
Z_{it}&=m_i+Y_{it}, \label{Zeq} \\
Y_{it}&=\sum_{k=1}^K  \loading{ki}  F_{kt} +  E_{it},\label{Yeq}\\
E_{it}&=\rho E_{i,t-1}+\eta_{it},\label{Eeq}\\
F_{kt}&=\rho_k F_{k,t-1}+f_{kt},\label{Feq}
\end{align}
with $\loading{ki}$ the loading of (unobserved) factor $\{F_{kt}\}$ on panel unit $i$, and $K\in\SN$ being the fixed number of factors. The $m_i$ are fixed effects and we assume zero starting values: $E_{i0}=0$ and $F_{k0}=0$.\footnote{\cite{BaiNg2004} assume that the initial values are bounded while \cite{MoonPerron2004} assume zero starting values for the $\varepsilon_{it}$ (see \cref{eqn:epsilon}). We refer to Section~6.2 in \cite{MoonPerronPhillips2007} for  a discussion on why relaxing initial conditions can be problematic in a panel context and do not pursue this issue further, except by noting that our tests are invariant with respect to the $m_i$.}
 The  assumptions on the innovations $\eta_{it}, f_{kt}$ and factor loadings $\lambda_{ki}$ are discussed in \cref{sub:assumptions} below. 
This setup covers the most widely used setups for second-generation panel unit root tests: for $\rho_k=1, k=1,\dots,K,$ we obtain the PANIC framework of \cite{BaiNg2004} (`PANIC') and with  $\rho_k=\rho, k=1,\dots,K,$ we obtain the framework of \cite{MoonPerron2004} (`MP'),
  in which we can also rewrite the DGP as
\begin{align}
Z_{it}&=m_i+Y_{it},  \\
  Y_{it}&=\rho Y_{i,t-1}+\varepsilon_{it},\\
  \varepsilon_{it}&= \sum_{k=1}^K  \loading{ki}  f_{kt} +  \eta_{it}.\label{eqn:epsilon}
\end{align}
In both frameworks, the hypotheses will be phrased in terms of $\rho$.

\begin{remark}
The PANIC framework does not require the factors to have a unit root. Therefore,  when considering the PANIC framework, we allow the $\{f_{kt}\}$ to be over-differenced (see \cref{assumption:framework} below).
\end{remark}
\begin{remark}
	In both frameworks, we do not allow for `heterogeneous alternatives', i.e., we impose that $\rho$ does not differ across panel units. This helps to unify the treatment of the two setups: A more general MP framework  where $Y_{it}=\rho_i Y_{i,t-1}+\varepsilon_{it}$, can no longer be rewritten in the PANIC form of \cref{Zeq,Yeq,Eeq,Feq}. \cite{BDvdA} prove, for the case without factors, unobserved heterogeneity in the autoregressive parameters has no impact on the power envelope or optimal tests. Therefore, in \cref{sec:MC} we also investigate the performance of our tests in the presence of heterogeneous alternatives; those results seem to confirm their conclusion that there is no impact on power also for the general factor case.
	\end{remark}

\subsection{Matrix notation}
To write the model in matrix form we need some additional notation. We write $I_n$ and $I_T$ for identity matrices of dimension $n$ and $T$, respectively, while $\ones$ denotes a $T$-vector of ones. 
Introduce the $n$-vectors $\loading{k}=(\loading{k1},\dots,\loading{kn})\trans$, $k=1,\dots,K$ and
 the $n\times K$ matrix $\Loading=(\loading{1},\dots,\loading{K})$.
Collect the observations as
$Y= (Y_{11},Y_{12}, \dots,Y_{1T},\dots,Y_{n1},\dots,Y_{nT})\trans$. We also write
$Y_{-1} =(Y_{10},Y_{11}, \dots,Y_{1,T-1},\dots,Y_{n0},\dots,Y_{n,T-1})\trans$, $\Delta Y = Y-Y_{-1}$,
and define $\varepsilon$, $\eta$, $E$, $E_{-1}$, $\Delta E$, $Z$, $Z_{-1}$, and $\Delta Z$ analogously.
Write $m=(m_1,\dots,m_n)\trans$, $\eta_i = (\eta_{i1},\dots,\eta_{iT})\trans$, $i=1,\dots,n$, 
$f_k =(f_{k1},\dots,f_{kT})\trans$, $k=1,\dots,K$,
  and  denote their corresponding covariance matrices by $
  \varF[,k]=\var f_k\in\SR^{T\times T}$
 and \begin{align}
  \vareta=\diag(\vareta[,1],\dots,\vareta[,n])\text{, with }
  \vareta[,i]=\var \eta_i\in\SR^{T\times T}.
\end{align}
The long-run variances of $\{f_{kt}\}$ and $\{\eta_{it}\}$ are denoted by $\longrunvarF{k}$ and $\longrunvareta{i}$, respectively. In addition, we define the approximate long-run variances $\longrunvarFT{k}= \ones'\varF[k] \ones/T$ and $\longrunvaretaT{i}= \ones'\vareta[,i] \ones/T$. 
For a given $T$, these ignore the contribution of any autocovariances further than $T$ apart.
We will use the approximate long-run variances to simplify notation and the structure of our proofs. 
We add the subscript T to the approximate versions to emphasize the difference and define 
\begin{align}
  \longrunvarmatrixeta=\diag(\longrunvaretaT{1},\dots,\longrunvaretaT{n})
  \text{ and }
  \longrunvarmatrixF=\diag(\longrunvarFT{1},\dots,\longrunvarFT{K}).
\end{align}
 
  In addition to this `vectorized' notation, it will also be useful to consider the observations as $T\times n$ matrices.  Thus, let $\tilde \eta=(\eta_1,\dots,\eta_n)$, and define $\tilde \varepsilon$, $\tilde Y$, $\tilde Z$, $\tilde E$, $\tilde f=(f_1,\dots,f_K)$, and $\tilde F$ analogously. 
With this notation, \cref{eqn:epsilon} can be rewritten as
\begin{align}\label{tildeFacModel}
  \tilde \varepsilon=\tilde f\Loading'+\tilde\eta,
  \end{align}
while for the vectorized versions we have
\begin{align}
  \varepsilon=\sum_{k=1}^K \loading{k} \otimes f_k + \eta.
\end{align}

Finally, we introduce the $T \times T$ matrix $\A$ by $\A_{st}:=1$ if $s>t$ and $0$ otherwise
and we put $\calA:= I_n\otimes A\in\SR^{nT\times nT}$, i.e.
\begin{align}
\A= \begin{pmatrix}
0& 0& \dots& 0\\
1& 0& \dots& 0\\
\vdots& \ddots& \ddots& \vdots\\
1& \dots& 1& 0
\end{pmatrix}
\text{ and }
\calA=\begin{pmatrix}
\A &  0_{T\times T} & \hdots &  0_{T\times T} \\
0_{T\times T} & \A  & \hdots &  0_{T\times T} \\
\vdots       &     \ddots  & \ddots  & 0_{T\times T} \\
0_{T\times T}  & \hdots & 0_{T\times T} & \A
\end{pmatrix}.
\end{align}
 The matrix $A$ can be considered a cumulative sum operator and premultiplying the vectorized panel with $\calA$ takes the cumulative sum in the time direction for each panel unit, i.e., we have  $\tilde Y_{-1}=\A \Delta \tilde Y$ and $Y_{-1}=\calA \Delta Y$. It is also related to `approximate one-sided long-run variances', which we can define by $\oslongrunvaretaT{i}=\tr[
A\Sigma_{\eta,i}/T]$ and $\oslongrunvarfT{k}
=\tr[A\Sigma_{f,k}/T]$. Note $A+A'= \ones\ones'-I_T$, so that, analogous to the long-run variances,  we have $2\oslongrunvaretaT{i}
=\longrunvaretaT{i}
-\ACFgamma{i}(0)$.
  
\subsection{Assumptions}\label{sub:assumptions}
Now we can formally state the full specifications of our DGPs \cref{Zeq,Yeq,Eeq,Feq}. The distributional assumptions on the time series of the factors $\{f_{kt}\}$ and idiosyncratic shocks $\{\eta_{it}\}$
are given in \cref{assumption:serial correlation} and we formulate the assumptions on the (deterministic) factor loadings $\loading{ki}$  in
\cref{assumption:factor_loadings}. \Cref{assumption:rates} specifies the joint asymptotics we consider in this paper. Finally, \cref{assumption:framework} differentiates between the two setups discussed in \cref{sub:frameworks}.

%
\begin{assumption}\label{assumption:serial correlation}\text{ } 
\begin{enumerate}[(a),topsep=6pt]
\item Each factor innovation, indexed $k=1,\dots,K$, is a zero-mean ergodic  stationary time series $\{f_{kt}\}$ independent of the other factors and all idiosyncratic parts. Its autocovariance function $\ACFgammaF{k}$ satisfies 
\begin{align}
  \sum_{m=-\infty}^\infty (|m|+1) | \ACFgammaF{k}(m)| <\infty
\end{align}
and is such that the variance of each factor innovation $\{f_{kt}\}$ is strictly positive. 
\item \Label{partb} For each panel unit $i\in\mathds{N}$, the idiosyncratic part $\{\eta_{it}\}$ is a Gaussian zero-mean stationary time series independent of the other idiosyncratic parts and all factors. The autocovariance function $\ACFgamma{i}$ satisfies \begin{align}\label{bddACFEta}
  \sup_{i\in\mathds{N}}\sum_{m=-\infty}^\infty (|m|+1) | \ACFgamma{i}(m)| <\infty
\end{align}
 and is such that the eigenvalues of the $T\times T$ covariance matrices are uniformly bounded away from zero, i.e., $\inf_{i,T} \mineig{\Sigma_{\eta,i}}>0$.
\end{enumerate}
\end{assumption}
\begin{remark}
The imposed restrictions on serial correlation are sometimes phrased in terms of spectral densities. Note that our assumption on the boundedness of the eigenvalues is implied by the spectral density being uniformly bounded away from zero (see, for example, Proposition 4.5.3 in \cite{BrockwellDavis}). Similarly, they are sometimes phrased in terms of linear processes on which analogous assumptions are imposed; see, for example, Assumption C in \cite{BaiNg2004} and Assumption 2 in \cite{MoonPerron2004}. Finally, note that a collection of causal ARMA processes satisfies \cref{assumption:serial correlation} if the roots are uniformly bounded away from the unit-circle.
\end{remark}
%
\begin{remark}\label{rem:bddLRV}
Note that, under \cref{partb},  the long-run variances of the $\{\eta_{it}\}$, $\longrunvareta{i}$, are also uniformly bounded and uniformly bounded away from zero.\footnote{The former directly follows from \cref{bddACFEta} whereas the latter follows from $
 \longrunvareta{i}= \lim_{T\to\infty}\frac{1}{T} \ones'\Sigma_{\eta,i} \ones\ge 
 \lim_{T\to\infty}\frac{1}{T}\mineig{\Sigma_{\eta,i}} \ones' \ones
 \ge\inf_{i,T}\mineig{\Sigma_{\eta,i}}>0$.\label{footnoteLRV}} 
Moreover, the one-sided long-run variances
\begin{align}
\oslongrunvareta{i} =\sum_{m=1}^{\infty} \ACFgamma{i}(m)=\frac{1}{2}\left( \longrunvareta{i} - \ACFgamma{i}(0)\right),\quad   i\in\SN,
\end{align}
 are also well-defined. 
\end{remark}
As already announced, we also need to impose some stability on the factor loadings $\loading{ki}$, which we assume to be fixed. \Cref{assumption:factor_loadings} is standard in the literature, c.f.\ Assumption A in \cite{BaiNg2004} or Assumption 6 in \cite{MoonPerron2004}. It is commonly referred to as the factors being `strong'.
\begin{assumption}\label{assumption:factor_loadings}
There exists a positive definite $K\times K$ matrix $\limitfactorloadings$ such that
$
\lim_{n\to\infty}  \frac{1}{n} \Loading\trans \Loading = \limitfactorloadings.$
Moreover, $\max_{k=1,\dots,K} \sup_{i\in\SN} |\loading{ki}|<\infty$.
\end{assumption}


\Cref{assumption:rates} below specifies the asymptotic framework we consider throughout this paper. We follow \cite{MoonPerron2004}, \cite{BaiNg2010}, and \cite{Westerlund2015} in considering  large `macro panels', where both $n$ and $T$ go to infinity, but $T$ will be the larger dimension. We derive all our results using joint asymptotics, which yields more robust results than taking sequential limits where first $T\to\infty$ and subsequently $n\to\infty$. 

\begin{assumption}\label{assumption:rates}
We consider joint asymptotics (in the \cite{PhillipsMoon1999} sense) with $n/T\to 0$. 
\end{assumption}
Finally, \cref{assumption:framework} below specifies that we either operate in the PANIC (case \ref{ass:BN}) or in the MP (case \ref{ass:MP}) framework. In the PANIC framework, we allow the long-run variance of the factor innovations to be zero, so that we consider both integrated and and stationary factors. This is ruled out in the MP case to enforce that the factors have the same order of integration as the idiosyncratic parts. 
\begin{assumption}
\label{assumption:framework} One of the below holds:
\begin{enumerate}[(a),topsep=6pt]	
\item For each factor ${F_k}, k=1,\dots,K,$ we have $\rho_k=1$, or, \Label{ass:BN}
\item For each factor $k=1,\dots,K$, we have  $\rho_k=\rho$. Moreover, $\{f_{kt}\}$ is \emph{Gaussian} and its long-run variance exists and is strictly positive.\Label{ass:MP}
\end{enumerate}
	
\end{assumption}

\section{Limit Experiment and Power Envelope}\label{sec:limexp}
%
\noindent
In this section we show that the likelihood ratios related to the unit root hypothesis, for the MP and for the PANIC framework, exhibit the same local asymptotic expansion. Both experiments are proved to be Locally Asymptotically Normal (LAN) with the same central sequence and (asymptotic) Fisher information. 
This result allows us to treat the two setups jointly and to obtain three main results. Firstly, we derive the asymptotic power envelopes. Secondly, in \cref{sec:UMPtest} we obtain asymptotically optimal, feasible tests. Thirdly, the LAN result allows us to show that any test, satisfying a mild regularity condition, has the same, perhaps nonoptimal, local asymptotic power function under both data generating processes.

 We phrase our hypotheses about $\rho$ in \cref{Zeq,Yeq,Eeq,Feq} using the local parameterization 
\begin{align}\label{localAlternatives}
	\rho=\rho^{(n,T)}=1+\frac{h}{\sqrt{n}T}.
\end{align}
As shown below, these rates lead to contiguous alternatives, which allow us to obtain the (local) power of our tests.
The unit root hypothesis can be reformulated in terms of the ``local parameter'' $h$:
\begin{align}
\rH_0:\, h=0 \text{ versus }  \rH_a:\, h<0.
\end{align}  

 In both setups, we start by considering the likelihood ratio for observing $Z_{it}$ in case $\rho$ is the only unknown parameter. Hence, the number of factors $K$, the factor loadings $\loading{ki}$, the autocovariance functions, and the fixed effects $m_i$ are considered as known in this section. 
 We will first show, for each model separately, that its likelihood ratio satisfies an expansion, under the null hypothesis, of the form $\log  \rd \law[h]/\rd \law= h\etaCS - h^2 J/2 + o_P(1) $ with Fisher-information $J=1/2$. In \cref{asymnorm}, we consider the limiting distribution of their common central sequence $\etaCS$ and will conclude that both experiments enjoy the LAN-property. In \cref{sec:UMPtest} we demonstrate that the conclusions of this section also hold for the model of interest, where all the parameters are unknown, i.e., that the nuisance parameters can be adaptively estimated.

\subsection{Expanding the likelihood in the PANIC setup}\label{expandPANIC}
For the PANIC case, for now, consider the factors as known. Just as for the other parameters, we show  in \cref{sec:UMPtest} that the resulting likelihood ratio can still be approximated by an observable version (up to a negligible term).\footnote{ 
The result implies that observing the factors in the PANIC framework will not lead to an increase in power. This contrasts with the situation in the MP setting, for which \cite{Becheri:2015gi} report higher powers with observed factors and \cite{JuodisWesterlund2018} show power gains when observing covariates correlated to the innovations.
} Denote the joint law of $F$ and $Z$ under \cref{assumption:serial correlation,assumption:factor_loadings,assumption:rates,ass:BN} by $\lawPANIC[h]$. Using  $\eta\sim N(0,\vareta)$ and $\eta=\Delta E- h E_{-1}/(\sqrt{n}T)$, we obtain the log-likelihood ratio
\begin{align}
\log \frac{ \rd \lawPANIC[h]}{\rd \lawPANIC}&= \frac{h}{\sqrt n T}  \Delta E\trans\calA'\vareta^{-1}\Delta E -\frac{h^2}{2 n T^2}  \Delta E\trans \calA'\vareta^{-1} \calA \Delta E\\
&=:h \CSorigPANIC
-\frac{1}{2} h^2 \FIorigPANIC.
\end{align}
Note, from \cref{tildeFacModel}, $\Delta\tilde E =\Delta\tilde Y -\Delta\tilde F \Loading'$ ,
implying $\Delta E$ is indeed observable in this PANIC framework (with observed factors as considered here). Moreover, under $\lawPANIC$, $\Delta E=\eta$.
We now show that we can replace variances by long-run variances, to obtain simpler versions of the central sequence and empirical Fisher information.
\begin{lemma}\label{lemma:PANICCS}
	Suppose that \cref{\asPANIC} hold.
Then we have, under $\lawPANIC$,
$(\CSorigPANIC,\FIorigPANIC) =(\etaCS,\frac{1}{2})+o_P(1)$, where
\begin{align}
  \etaCS=\frac{1}{\sqrt n T}  \Delta E\trans\calA'\LReta^{-1}\Delta E
 -\frac{1}{\sqrt n}\sum_{i=1}^n \frac{\oslongrunvaretaT{i}}{\longrunvaretaT{i}},\text{ with } \LReta^{-1}=\longrunvarmatrixeta^{-1}\kron I_T.\end{align}
\end{lemma}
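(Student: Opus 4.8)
The plan is to use the block-diagonal structure of $\vareta$ together with Gaussianity to reduce both assertions to moment bounds for quadratic forms in independent Gaussian vectors; the only delicate ingredient is a uniform operator-norm estimate expressing that the cumulative-sum matrix $\A$ ``sees only'' the long-run variances $\longrunvaretaT{i}$. Since $\calA=\id[n]\kron\A$ and $\vareta=\diag(\vareta[,1],\dots,\vareta[,n])$, and since under $\lawPANIC$ we have $\Delta E=\eta$ with the $\eta_i\sim N(0,\vareta[,i])$ independent across $i$,
\begin{align}
\CSorigPANIC=\frac1{\sqrt n T}\sum_{i=1}^n \eta_i\trans \A\trans\vareta[,i]^{-1}\eta_i,\\
\FIorigPANIC=\frac1{nT^2}\sum_{i=1}^n (\A\eta_i)\trans \vareta[,i]^{-1}(\A\eta_i).
\end{align}
Write $C_i:=\vareta[,i]^{-1}-\longrunvaretaT{i}^{-1}\id[T]$ and $\xi_i:=\A\eta_i\sim N(0,V_i)$ with $V_i:=\A\vareta[,i]\A\trans$. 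From $\eta_i\trans\A\trans\eta_i=\tfrac12\eta_i\trans(\A+\A\trans)\eta_i=\tfrac12\bigl((\ones\trans\eta_i)^2-\eta_i\trans\eta_i\bigr)$ together with $\ones\trans\vareta[,i]\ones=T\longrunvaretaT{i}$, $\tr\vareta[,i]=T\ACFgamma{i}(0)$ and $2\oslongrunvaretaT{i}=\longrunvaretaT{i}-\ACFgamma{i}(0)$, one finds $\expec[\eta_i\trans\A\trans\eta_i]=T\oslongrunvaretaT{i}$, whereas $\expec[\eta_i\trans\A\trans\vareta[,i]^{-1}\eta_i]=\tr(\A\trans)=0$; thus the recentering in $\etaCS$ is exactly what removes the deterministic parts, and
\begin{align}
\CSorigPANIC-\etaCS&=\frac1{\sqrt nT}\sum_{i=1}^n\bigl(\eta_i\trans\A\trans C_i\eta_i-\expec\,\eta_i\trans\A\trans C_i\eta_i\bigr),\label{eq:diffs}\\
\FIorigPANIC-\frac1{nT^2}\sum_{i=1}^n\frac{\|\xi_i\|^2}{\longrunvaretaT{i}}&=\frac1{nT^2}\sum_{i=1}^n\xi_i\trans C_i\xi_i.
\end{align}

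The heart of the matter is the pair of uniform estimates
\begin{align}
\sup_{i,T}\rnorm{\A\trans C_i}<\infty\qquad\text{and}\qquad\max_{1\le i\le n}\bigl|\tr(C_i V_i)\bigr|=o(T^2).\label{eq:keyBi}
\end{align}
Heuristically, in the frequency domain $\A$ acts as multiplication by $(1-\e^{\mathrm{i}\lambda})^{-1}$, which blows up like $\lambda^{-1}$ near the origin, whereas $C_i$ acts as multiplication by $(2\pi f_{\eta,i}(\lambda))^{-1}-(2\pi f_{\eta,i}(0))^{-1}$, which \emph{vanishes} at the origin: the uniform weighted summability $\sup_i\sum_m(|m|+1)|\ACFgamma{i}(m)|<\infty$ of \cref{partb} makes each spectral density $C^1$ (uniformly in $i$) and even, while $\inf_{i,T}\mineig{\vareta[,i]}>0$ keeps the reciprocal bounded; the $\lambda^{-1}$ singularity and this vanishing cancel, so $\A\trans C_i$ stays bounded, and the same cancellation makes $C_i$ negligible, in trace, against the near-integrated covariance $V_i$. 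Turning this into rigorous bounds for the genuine finite-$T$ Toeplitz matrices — in particular controlling the boundary corrections uniformly in both $i$ and $T$ — is the step I expect to be the main obstacle; it is a Toeplitz-algebra argument resting on the uniform summability in \cref{partb} and on $\inf_{i,T}\mineig{\vareta[,i]}>0$.

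Granting \eqref{eq:keyBi}, the rest is routine, using that for $X\sim N(0,\Sigma)$ one has $\var(X\trans MX)=2\tr\bigl((M_{\mathrm{sym}}\Sigma)^2\bigr)\le 2\rnorm{M_{\mathrm{sym}}}^2\rnorm{\Sigma}\tr\Sigma$. First, $\rnorm{\A\trans C_i}=O(1)$, $\rnorm{\vareta[,i]}=O(1)$ and $\tr\vareta[,i]=O(T)$ give $\var(\eta_i\trans\A\trans C_i\eta_i)=O(T)$, so by independence across $i$ and Chebyshev, $\var(\CSorigPANIC-\etaCS)=\tfrac1{nT^2}\sum_i\var(\eta_i\trans\A\trans C_i\eta_i)=O(1/T)\to 0$, hence $\CSorigPANIC=\etaCS+o_P(1)$. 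Next, $\|\xi_i\|^2=\sum_{t=1}^T E_{i,t-1}^2$ with $\expec[T^{-2}\longrunvaretaT{i}^{-1}\|\xi_i\|^2]\to\tfrac12$ by standard partial-sum second-moment computations (uniformly in $i$); these summands being uniformly $L^2$-bounded and independent, a cross-sectional law of large numbers gives $\tfrac1{nT^2}\sum_i\longrunvaretaT{i}^{-1}\|\xi_i\|^2\pto\tfrac12$. Finally $\tfrac1{nT^2}\sum_i\xi_i\trans C_i\xi_i$ has mean $\tfrac1{nT^2}\sum_i\tr(C_iV_i)=o(1)$ by \eqref{eq:keyBi} and variance at most $\tfrac2{n^2T^4}\sum_i\rnorm{C_i}^2\rnorm{V_i}\tr V_i=O(1/n)\to 0$, using the crude bounds $\rnorm{C_i}\le\mineig{\vareta[,i]}^{-1}+\longrunvaretaT{i}^{-1}=O(1)$ and $\rnorm{V_i},\tr V_i=O(T^2)$. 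Combining with \eqref{eq:diffs} yields $(\CSorigPANIC,\FIorigPANIC)=(\etaCS,\tfrac12)+o_P(1)$ under $\lawPANIC$, as claimed.
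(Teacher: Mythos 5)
Your overall architecture (blockwise Gaussian quadratic forms, the exact cancellation of the recentering term via $\expec[\eta_i\trans\A\trans C_i\eta_i]=-T\oslongrunvaretaT{i}/\longrunvaretaT{i}$, and mean/variance bounds for the Fisher information) is the same as the paper's, but the step you yourself flag as the main obstacle is a genuine gap, and in fact the estimate you propose there is false. Under \cref{partb} one cannot have $\sup_{i,T}\rnorm{\A\trans C_i}<\infty$ with $C_i=\vareta[,i]^{-1}-\frac{1}{\longrunvaretaT{i}}\id[T]$: take $\{\eta_{it}\}$ a stationary AR(1) with coefficient $\rho\in(0,1)$ and innovation variance $\sigma^2$, so that $\vareta[,i]^{-1}e_T=\sigma^{-2}(0,\dots,0,-\rho,1)\trans$ and $\sigma^2/\longrunvareta{i}=(1-\rho)^2$. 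Then, up to an $O(T^{-1})$ correction from replacing $\longrunvareta{i}$ by $\longrunvaretaT{i}$,
\begin{align}
  \bigl(\A\trans C_i e_T\bigr)_s=\frac{-\rho+1-(1-\rho)^2}{\sigma^2}+O(T^{-1})=\frac{\rho(1-\rho)}{\sigma^2}+O(T^{-1}),
  \qquad s\le T-2,
\end{align}
so $\rnorm{\A\trans C_i}\ge\Frob{\A\trans C_ie_T}\ge c\sqrt{T}$. The frequency-domain cancellation you invoke fails because $\A$ is the finite-horizon partial-sum (Volterra) operator, not a Toeplitz operator with symbol $(1-\e^{\mathrm{i}\lambda})^{-1}$: summing by parts leaves a boundary term proportional to $\ones e_T\trans$, of spectral norm of order $\sqrt{T}$, which the vanishing of the symbol of $C_i$ at frequency zero does not remove and which the corner corrections in $\vareta[,i]^{-1}$ cancel only partially. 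Since your variance bound for $\eta_i\trans\A\trans C_i\eta_i$ scales with $\rnorm{\A\trans C_i}^2$, it then gives only $\var(\CSorigPANIC-\etaCS)=O(1)$, not $o(1)$, so the argument as structured does not close (the lemma is of course still true; the bound is too lossy).

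The repair is to never bound $\A\trans C_i$ in isolation but only its products with the covariance, which is exactly what the paper does: by the identity $\bigl(\frac{1}{\longrunvaretaT{i}}\id[T]-\vareta[,i]^{-1}\bigr)\vareta[,i]=\frac{1}{\longrunvaretaT{i}}\bigl(\vareta[,i]-\longrunvaretaT{i}\id[T]\bigr)$ the problem is transferred from the inverse to the covariance itself, where the elementary summation bound $\sup_i\Frob{\A\trans(\vareta[,i]-\longrunvaretaT{i}\id[T])}=O(\sqrt{T})$ of \cref{lemma:generalLRV} (see also \cref{lemma:etaLRV}) applies; combined with $\var(\eta\trans M\eta)\le\Frob{M\vareta}^2+\Frob{M\vareta}\Frob{\vareta M}$ (the second factor handled by a two-step approximation as in the paper) this yields $\var(\CSorigPANIC-\etaCS)=O(1/T)$. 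The trace inequality $|\tr(C\trans D)|\le\Frob{C}\Frob{D}$ likewise delivers the second half of your key display (which is true) with no Toeplitz machinery. Your remaining steps --- the cross-sectional LLN for $\frac{1}{nT^2}\sum_{i=1}^n\Frob{\A\eta_i}^2/\longrunvaretaT{i}$ and the $O(1/n)$ variance bound for the Fisher-information remainder --- are sound and parallel the paper's Part B.
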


\begin{remark} 
\label{remark:approx}
	The simplified central sequence $\etaCS$ is the result of substituting $\vareta^{-1}$ by $\LReta^{-1}$. It is, however, not the case that $\LReta^{-1}$ is a ``good''  approximation to $\vareta^{-1}$. As evident in $\etaCS$, the replacement necessitates a correction term for the central sequence to be centered. This term arises due to the fact that, contrary to $\vareta^{-1/2}\Delta E$, $\LReta^{-1/2}\Delta E$ does exhibit serial correlation. What we can show is that $\calA\trans \LReta$ approximates $\calA\trans\vareta$ well. This is thanks to $\longrunvaretaT{i}$ being roughly equal to the column sums of $\vareta[,i]$.
\cref{lemma:generalLRV} phrases this phenomenon in a general context.
\end{remark}

In the following subsections we  show that $\etaCS$ also approximates the central sequence in the \cite{MoonPerron2004} setup. 

\subsection{Expanding the likelihood in the \cite{MoonPerron2004} setup}\label{expaMP}
Let us denote the law of $Z$ under \cref{\asMP} by $\lawMP[h]$.
Then the log-likelihood ratio of $\lawMP[h]$ with respect to $\lawMP[0]$
is given by, using  $\varepsilon\sim N(0,\vareps)$ and $\varepsilon=\Delta Y- h Y_{-1}/(\sqrt{n}T)$,
\begin{align}
\log \frac{ \rd \lawMP[h]}{\rd \lawMP}&= \frac{h}{\sqrt n T} \Delta Y'\calA'\vareps^{-1}\Delta Y -\frac{h^2}{2 n T^2} \Delta Y'\calA' \vareps^{-1}\calA\Delta Y\\
&=:h\CSorigMP  -\frac{1}{2}h^2 \FIorigMP.
\end{align}


In this more complicated model, we simplify the central sequence and also the Fisher information in two steps. The first is analogous to the approximation in the PANIC setup, i.e., we replace variances by long-run variances. Note that thanks to our independence assumptions, the $nT\times nT$ covariance matrix of the $\varepsilon$ can be written as
\begin{align}\label{eqn:OmegaEpsilon}
\vareps=\var \varepsilon =\sum_{k=1}^K \left( \loading{k}\loading{k}\trans   \otimes \varF[,k]\right) + \vareta.
\end{align}
 Replacing $\varF[,k]$ by $\longrunvarFT{k} \id[T]$ and $\vareta[,i]$ by $\longrunvaretaT{i}\id[T]$ in \cref{eqn:OmegaEpsilon} we obtain the simplified versions of central sequence
\begin{align}\label{eqn:CS}
\CSMPLRV:=
 \frac{1}{\sqrt{n}T}  \Delta Y\trans \calA\trans \proxvareps^{-1}\Delta Y
-\frac{1}{\sqrt n}\sum_{i=1}^n \frac{\oslongrunvaretaT{i}}
{\longrunvaretaT{i} }
,
\end{align}
where  the $nT\times nT$ matrix $\proxvareps$ is defined by
\begin{align}\label{eqn:Omegavarepsilonprox}
\proxvareps:=\proxvarepsgen\kron\id[T]:=
\left( \Loading \longrunvarmatrixF \Loading\trans + \longrunvarmatrixeta\right)\kron \id[T]
,
\end{align}
with $\longrunvarmatrixeta=\diag( \longrunvaretaT{1},\dots,\longrunvaretaT{n})$ and $\longrunvarmatrixF=\diag(\longrunvarFT{1},\dots,\longrunvarFT{K})$. The following lemma demonstrates that applying these replacements to the central sequence and Fisher information do not affect their asymptotic behavior.
\begin{lemma}\label{lemma:proxyCS}
	Suppose that \cref{\asMP} hold.
Then we have, under $\lawMP$, $(\CSorigMP,\FIorigMP) =(\CSMPLRV,\frac{1}{2})+o_P(1)$.
\end{lemma}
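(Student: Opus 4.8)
Under $\lawMP$ we have $\rho=1$, hence $\Delta Y=\varepsilon$ and $\calA\Delta Y=Y_{-1}=\calA\varepsilon$, so that $\CSorigMP=\tfrac1{\sqrt{n}T}\varepsilon\trans\calA\trans\vareps^{-1}\varepsilon$ and $\FIorigMP=\tfrac1{nT^{2}}\varepsilon\trans\calA\trans\vareps^{-1}\calA\varepsilon$ are quadratic forms in the Gaussian vector $\varepsilon\sim N(0,\vareps)$, Gaussianity being guaranteed by \cref{\asMP}. My plan is to treat both by a mean--variance (Chebyshev) argument; the recurring tools are the Gaussian bound $\var[\varepsilon\trans M\varepsilon]\le2\Frob{\vareps^{1/2}M\vareps^{1/2}}^{2}$, the Woodbury identity, the bounds $\rnorm{\vareps^{-1}}=O(1)$ (from $\inf_{i,T}\mineig{\vareta[,i]}>0$) and $\rnorm{\vareps}=O(n)$ (strong factors and \cref{assumption:factor_loadings}), and that $\calA\trans$ commutes with $\proxvareps^{-1}=\proxvarepsgen^{-1}\kron\id[T]$.

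\emph{Central sequence.} Write $\CSorigMP-\CSMPLRV=\tfrac1{\sqrt{n}T}\varepsilon\trans\calA\trans(\vareps^{-1}-\proxvareps^{-1})\varepsilon+\tfrac1{\sqrt n}\sum_i\oslongrunvaretaT{i}/\longrunvaretaT{i}$. For the expectation, $\expec[\varepsilon\trans\calA\trans\vareps^{-1}\varepsilon]=\tr\calA\trans=0$, while Woodbury applied to $\proxvarepsgen^{-1}=(\Loading\longrunvarmatrixF\Loading\trans+\longrunvarmatrixeta)^{-1}$ gives $(\proxvarepsgen^{-1})_{ii}=1/\longrunvaretaT{i}+O(1/n)$ and $\loading{k}\trans\proxvarepsgen^{-1}\loading{k}=O(1)$ (the Gram matrix $\Loading\trans\longrunvarmatrixeta^{-1}\Loading$ being of order $n$); exploiting the Kronecker structure one then finds $\expec[\varepsilon\trans\calA\trans\proxvareps^{-1}\varepsilon]=\tr[\calA\trans\proxvareps^{-1}\vareps]=T\sum_i\oslongrunvaretaT{i}/\longrunvaretaT{i}+O(T)$, so the displayed difference has expectation $O(1/\sqrt n)$. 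For the variance, factor $\vareps^{-1}-\proxvareps^{-1}=\proxvareps^{-1}(\proxvareps-\vareps)\vareps^{-1}$ and commute to obtain $\vareps^{1/2}\calA\trans(\vareps^{-1}-\proxvareps^{-1})\vareps^{1/2}=\vareps^{1/2}\bigl(\proxvareps^{-1}\calA\trans(\proxvareps-\vareps)\bigr)\vareps^{-1/2}$; it is essential to keep $\proxvareps^{-1}$ \emph{next to} $\proxvareps-\vareps$. Splitting $\proxvareps-\vareps$ into the idiosyncratic block-diagonal part with blocks $\longrunvaretaT{i}\id[T]-\vareta[,i]$ and the factor part $\sum_k\loading{k}\loading{k}\trans\kron(\longrunvarFT{k}\id[T]-\varF[,k])$, the idiosyncratic contribution to $\Frob{\proxvareps^{-1}\calA\trans(\proxvareps-\vareps)}^{2}$ equals $\sum_i(\proxvarepsgen^{-2})_{ii}\Frob{\A\trans(\longrunvaretaT{i}\id[T]-\vareta[,i])}^{2}=O(nT)$, using $\rnorm{\proxvarepsgen^{-1}}=O(1)$ and the one-sided long-run-variance bound $\Frob{\A\trans(\longrunvaretaT{i}\id[T]-\vareta[,i])}^{2}=O(T)$ from \cref{lemma:generalLRV} (the only place the summability $\sum_m(|m|+1)|\ACFgamma{i}(m)|<\infty$ enters), while the factor contribution is $O(T)$ because the contraction $\|\proxvarepsgen^{-1}\loading{k}\|=O(1/\sqrt n)$ (again Woodbury) compensates $\|\loading k\|=O(\sqrt n)$. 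Hence $\Frob{\proxvareps^{-1}\calA\trans(\proxvareps-\vareps)}^{2}=O(nT)$ and $\var[\CSorigMP-\CSMPLRV]\le\tfrac2{nT^{2}}\rnorm{\vareps^{1/2}}^{2}\rnorm{\vareps^{-1/2}}^{2}\,O(nT)=O(n/T)=o(1)$ by \cref{assumption:rates}; Chebyshev gives $\CSorigMP-\CSMPLRV=o_P(1)$.

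\emph{Fisher information.} Writing the factor part of $\vareps$ as $(\Loading\kron\id[T])\tilde\varF(\Loading\trans\kron\id[T])$ with $\tilde\varF=\diag(\varF[,1],\dots,\varF[,K])$, Woodbury yields $\vareps^{-1}=\vareta^{-1}-\vareta^{-1}(\Loading\kron\id[T])G^{-1}(\Loading\trans\kron\id[T])\vareta^{-1}$, $G=\tilde\varF^{-1}+(\Loading\trans\kron\id[T])\vareta^{-1}(\Loading\kron\id[T])$, and therefore $\FIorigMP=\tfrac1{nT^{2}}\sum_i\varepsilon_i\trans\A\trans\vareta[,i]^{-1}\A\varepsilon_i-\tfrac1{nT^{2}}\|G^{-1/2}(\Loading\trans\kron\id[T])\vareta^{-1}\calA\varepsilon\|^{2}$. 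Expanding $\varepsilon_i=\eta_i+\sum_k\loading{ki}f_k$ in the first sum, its $\eta\eta$-piece $\tfrac1{nT^{2}}\sum_i\eta_i\trans\A\trans\vareta[,i]^{-1}\A\eta_i$ is exactly $\FIorigPANIC$, which converges to $\tfrac12$ by \cref{lemma:PANICCS} (that statement involves only the $\eta$'s and the matrices $\vareta[,i]$, so it transfers verbatim). The $\eta f$-cross terms are $o_P(1)$ by a cross-sectional averaging argument (conditional on the factors, the sum over $i$ is a sum of independent, mean-zero summands), while the purely-factor piece is $O_P(1)$ but is cancelled up to $o_P(1)$ by the Woodbury subtraction: $G$ being of order $n$ on the factor-coupling blocks (\cref{assumption:factor_loadings}), $G^{-1}$ contracts the factor part by $1/n$, which is the ``projecting out the factors'' effect. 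Hence $\FIorigMP=\tfrac12+o_P(1)$.

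\emph{Main obstacle.} The delicate point, both for the variance bound above and for the negligibility claims in the Fisher-information step, is that naive spectral-norm estimates are useless: strong factors push $\rnorm{\vareps}$ up to order $n$ while $\rnorm{\calA}$ is of order $T$, and the quantities that must vanish are differences of terms of the same large order. One controls the residuals only by combining (i) the finite-$T$ one-sided long-run-variance approximation of \cref{lemma:generalLRV} with (ii) the fact that the GLS weights $\proxvareps^{-1}$, $\vareps^{-1}$ contract the $K$-dimensional factor subspace by $1/n$; after that $n/T\to0$ closes the argument. Relatedly, $\FIorigMP\pto\tfrac12$ is not a single-unit CLT phenomenon but inherits its deterministic limit from the cross-sectional averaging in \cref{lemma:PANICCS}, available only once the common factors have been removed.
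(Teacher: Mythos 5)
Your treatment of the central sequence is essentially the paper's own argument: a mean--variance computation for a Gaussian quadratic form, with the expectation reduced via Woodbury and $\tr[\A\trans\varF[,k]]=O(T)$, $\tr[\A\trans\vareta[,i]]=O(T)$ to an $O(1/\sqrt n)$ remainder (these are exactly the terms $II$ and $III$ in the paper's proof). For the variance you symmetrize and pay $\rnorm{\vareps}=O(n)$ while gaining the $1/\sqrt n$ contraction of $\proxvarepsgen^{-1}\loading{k}$, whereas the paper keeps the one-sided form $C_\varepsilon\vareps$ and invokes \cref{lemma:specificLRV} ($\Frob{\calA\trans(\vareps-\proxvarepsT)}=O(n\sqrt T)$); both bookkeepings land at $O(n/T)=o(1)$ under \cref{assumption:rates}, so this half is fine and differs from the paper only in accounting.

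The Fisher information is where you genuinely diverge, and where the sketch has a gap. The paper never separates idiosyncratic and factor components: it reruns Part B of the proof of \cref{lemma:PANICCS} verbatim with $\varepsilon,\vareps,\proxvarepsT$ in place of $\eta,\vareta,\LRetaT$, with \cref{lemma:specificLRV,lemma:eigenvalues} supplying the bounds, so no cancellation ever has to be established. Your route (Woodbury on $\vareps^{-1}$, identify the $\eta\eta$ block with $\FIorigPANIC$, dispose of the rest) can be made rigorous, but its decisive step is only asserted. The purely-factor piece of $\tfrac1{nT^2}\varepsilon\trans\calA\trans\vareta^{-1}\calA\varepsilon$ is a nondegenerate $O_P(1)$ random variable, so the lemma holds only if the Woodbury correction reproduces it up to $o_P(1)$; the justification ``$G^{-1}$ contracts the factor part by $1/n$'' does not show this and, read literally, would suggest the correction is negligible, which would make the claim false. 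What actually delivers the cancellation is a projection identity: under the null the factor component of $\calA\varepsilon$ is $\sum_k\loading{k}\kron\A f_k$, which lies exactly in the column space of $\Loading\kron\id[T]$, so with the ridge term $\diag(\varF[,1]^{-1},\dots,\varF[,K]^{-1})$ deleted from your $G$ the correction equals the factor piece \emph{exactly}; reinstating the ridge perturbs the unnormalized quadratic form by only $O_P(\sum_k\|\A f_k\|^2)=O_P(T^2)$, i.e.\ by $O_P(1/n)$ after dividing by $nT^2$, and the $\eta$-contributions inside the correction are $O_P(1/n)$ and, by Cauchy--Schwarz, $O_P(1/\sqrt n)$, thanks to the same $1/n$ order of $G^{-1}$. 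Likewise your $\eta f$ cross terms require the second-moment bound $\expec\|\A\sum_k\loading{ki}f_k\|^2=O(T^2)$ rather than crude spectral-norm bounds, which only give the useless $O(T/n)$. None of this is unworkable, but as written these steps are assertions rather than proofs, and the paper's substitution argument avoids them entirely at the cost of one extra lemma (\cref{lemma:specificLRV}).
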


Exploiting the Sherman-Morrison-Woodbury formula we obtain
\begin{align}\label{eqn:Omegavarepsilonproxinv}
\proxvareps^{-1}=\proxvarepsgen^{-1}\kron\id[T] =
\left(
\longrunvarmatrixeta^{-1} - \longrunvarmatrixeta^{-1} \Loading \left(
\longrunvarmatrixF^{-1} + \Loading\trans  \longrunvarmatrixeta^{-1} \Loading \right)^{-1}\Loading\trans\longrunvarmatrixeta^{-1}
\right)
\kron \id[T] .
\end{align}
%

Note that removing $\longrunvarmatrixF^{-1}$ from \cref{eqn:Omegavarepsilonproxinv} yields a projection matrix corresponding to `projecting out the factors'. 
Thus, basing a central sequence on such a projection matrix would simplify approximating it based on observables by removing the need to estimate $\longrunvarmatrixF^{-1}$ and, more importantly, by ensuring that the factors are projected out. The next lemma shows that using such a projection version ${\auxproxvarepsgen}^{-1}$ of $\proxvarepsgen^{-1}$ in the central sequence does not change its asymptotic behaviour.
\begin{lemma}\label{lemma:auxCS}
	Suppose that \cref{\asMP} hold.
Then we have, under $\lawMP$,
	 $\CSMPLRV =\auxCS+o_P(1)$, where
\begin{align}
  \auxCS=&\frac{1}{\sqrt n T} \Delta Y'\calA'({\auxproxvarepsgen}^{-1}\kron I_T)\Delta Y
 -\frac{1}{\sqrt n}\sum_{i=1}^n \frac{\oslongrunvaretaT{i}}{\longrunvaretaT{i}}\text{, with}\\
  {\auxproxvarepsgen}^{-1} =& \longrunvarmatrixeta^{-1} - \longrunvarmatrixeta^{-1} \Loading \left(
\Loading\trans  \longrunvarmatrixeta^{-1} \Loading \right)^{-1}\Loading\trans\longrunvarmatrixeta^{-1}.\label{auxinvdev}\end{align}
\end{lemma}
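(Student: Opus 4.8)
The plan is to show that replacing $\proxvarepsgen^{-1}$ by ${\auxproxvarepsgen}^{-1}$ in the quadratic form $\frac{1}{\sqrt n T}\Delta Y'\calA'(\cdot\kron I_T)\Delta Y$ changes the value only by $o_P(1)$; the correction term $-\frac{1}{\sqrt n}\sum_i \oslongrunvaretaT{i}/\longrunvaretaT{i}$ is identical in both $\CSMPLRV$ and $\auxCS$, so it plays no role. First I would control the difference of the two inner matrices. Writing $B=\Loading\trans\longrunvarmatrixeta^{-1}\Loading$, a $K\times K$ matrix, one has
\begin{align}
\proxvarepsgen^{-1}-{\auxproxvarepsgen}^{-1}
=\longrunvarmatrixeta^{-1}\Loading\left[B^{-1}-\left(\longrunvarmatrixF^{-1}+B\right)^{-1}\right]\Loading\trans\longrunvarmatrixeta^{-1}
=\longrunvarmatrixeta^{-1}\Loading\,B^{-1}\longrunvarmatrixF^{-1}\left(\longrunvarmatrixF^{-1}+B\right)^{-1}\Loading\trans\longrunvarmatrixeta^{-1},
\end{align}
using the resolvent identity $X^{-1}-(X+D)^{-1}=X^{-1}D(X+D)^{-1}$ with $X=B$, $D=\longrunvarmatrixF^{-1}$. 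Now under \cref{assumption:factor_loadings}, $\frac1n\Loading\trans\Loading\to\limitfactorloadings\succ0$, and since the $\longrunvaretaT{i}$ are uniformly bounded and bounded away from zero (\cref{rem:bddLRV}), $\frac1n B=\frac1n\Loading\trans\longrunvarmatrixeta^{-1}\Loading$ is bounded and eventually bounded below, so $B^{-1}=O(1/n)$. The factor $\longrunvarmatrixF$ is fixed ($K\times K$), hence $\longrunvarmatrixF^{-1}=O(1)$ and $(\longrunvarmatrixF^{-1}+B)^{-1}=O(1/n)$. Multiplying the pieces, the $K\times K$ bracket is $O(1/n^2)$, and sandwiching between $\longrunvarmatrixeta^{-1}\Loading=O(\sqrt n)$ on each side (in an appropriate norm, since each column of $\Loading$ has squared Euclidean norm $O(n)$) gives $\proxvarepsgen^{-1}-{\auxproxvarepsgen}^{-1}=O(1/n)$ in spectral norm — in fact it is a rank-$\le K$ matrix of this order.

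Next I would bound the quadratic form. Since $\calA'(\cdot\kron I_T)$ acts blockwise and $\|\calA\|_{\text{spec}}=\|A\|_{\text{spec}}=O(T)$, and $\Delta Y=\varepsilon$ under $\lawMP$, we have
\begin{align}
\left|\CSMPLRV-\auxCS\right|
=\left|\frac{1}{\sqrt n T}\,\varepsilon'\calA'\left((\proxvarepsgen^{-1}-{\auxproxvarepsgen}^{-1})\kron I_T\right)\varepsilon\right|
\le \frac{\|A\|_{\text{spec}}}{\sqrt n T}\,\rnorm{\proxvarepsgen^{-1}-{\auxproxvarepsgen}^{-1}}\;\|\varepsilon\|^2 .
\end{align}
Here $\expec\|\varepsilon\|^2=\tr\vareps=O(nT)$ (the trace of $\sum_k\loading{k}\loading{k}\trans\kron\varF[,k]$ is $O(nT)$ by \cref{assumption:serial correlation}(a) and \cref{assumption:factor_loadings}, and $\tr\vareta=O(nT)$ by \cref{partb}), so $\|\varepsilon\|^2=O_P(nT)$. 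Plugging in $\rnorm{\cdot}=O(1/n)$ and $\|A\|_{\text{spec}}=O(T)$ yields a bound of order $\frac{T}{\sqrt n T}\cdot\frac1n\cdot nT=\frac{T}{\sqrt n}$, which is \emph{not} $o_P(1)$. This crude bound therefore loses too much, and the real work is to exploit cancellation: the rank-$K$ structure means $\varepsilon'\calA'(G\kron I_T)\varepsilon$ with $G=\sum_{j}g_j g_j\trans$ (a sum of $\le K$ outer products with $\|g_j\|=O(1/\sqrt n)$, since $G=O(1/n)$ has rank $\le K$) reduces to finitely many scalar terms of the form $(\sum_i g_{j,i}\eta_i')A(\sum_\ell g_{j,\ell}\eta_\ell)$ plus factor cross-terms. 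Each such bilinear form in the Gaussian $\eta_i$'s (and the $f_k$'s) has mean and variance that I would bound directly using the summable-autocovariance assumption and $\|g_j\|=O(1/\sqrt n)$, showing each contributes $o_P(\sqrt n T)$ after dividing by $\sqrt n T$.

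The main obstacle is precisely this last point: a naive spectral-norm bound on the difference of quadratic forms is off by a polynomial factor, so I would instead decompose $G:=\proxvarepsgen^{-1}-{\auxproxvarepsgen}^{-1}$ into its (at most $K$) rank-one pieces $g_j g_j\trans$, write $\varepsilon=\sum_k\loading{k}\kron f_k+\eta$, and expand $\varepsilon'\calA'(g_j g_j\trans\kron I_T)\varepsilon=(g_j\trans\otimes I_T)\varepsilon)'\,\calA\,((g_j\trans\otimes I_T)\varepsilon)$ — wait, more carefully, $(g_j g_j\trans\kron I_T)$ picks out $v_j:=\sum_{i}g_{j,i}\,\tilde\varepsilon_i\in\SR^T$ and the form is $v_j'A v_j$. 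Then I would show $\expec|v_j'A v_j|=O(T\|g_j\|^2\cdot n\cdot\max_i\longrunvaretaT{i})+\dots$; since $\|g_j\|^2=O(1/n^2)$ one picks up an extra $1/n$ relative to the crude bound, giving $v_j'A v_j=O_P(T/n)$ and hence $|\CSMPLRV-\auxCS|=O_P\bigl(\frac{T}{\sqrt n T}\cdot\frac{T}{n}\bigr)$—I would need to track the $T$-powers here carefully and use \cref{assumption:rates} ($n/T\to0$). In fact the cleanest route is to reuse the machinery already invoked for \cref{lemma:proxyCS} and \cref{lemma:PANICCS} (the general lemma \cref{lemma:generalLRV} alluded to in \cref{remark:approx}), which presumably provides exactly the bilinear-form bounds needed, so that the proof of \cref{lemma:auxCS} becomes a short verification that $G$ has the stated low-rank, small-norm structure and then an appeal to that lemma. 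I expect the bookkeeping of the $n$- and $T$-exponents in these bilinear forms, together with getting the rank-$K$ cancellation to buy the decisive extra power of $n$, to be where essentially all the difficulty lies.
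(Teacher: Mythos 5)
Your computation of the difference matrix $G:=\proxvarepsgen^{-1}-{\auxproxvarepsgen}^{-1}$ and its order ($\Frob{G}=O(n^{-1})$, rank at most $K$) matches the paper's first step, and you correctly diagnose that a spectral-norm bound on the quadratic form loses a factor $T$. But the repair you propose is not the one that works, and as written it does not close the gap. The cancellation the paper exploits is not the rank-$K$ structure of $G$ but the symmetrization of $A$ inside the trace: since $G$ is symmetric, $\tilde\varepsilon G\tilde\varepsilon'$ is symmetric, so
\begin{align}
\tr(A\tilde\varepsilon G\tilde\varepsilon')=\tr\Bigl(\tfrac{A+A'}{2}\,\tilde\varepsilon G\tilde\varepsilon'\Bigr),
\qquad A+A'=\ones\ones'-I_T,
\end{align}
and then trace Cauchy--Schwarz gives $|\CSMPLRV-\auxCS|\le\frac{1}{\sqrt nT}\Frob{G}\,\Frob{\tilde\varepsilon'\tfrac{A+A'}{2}\tilde\varepsilon}\le\frac{1}{2\sqrt nT}\Frob{G}\bigl(\Frob{\ones'\tilde\varepsilon}^2+\Frob{\tilde\varepsilon}^2\bigr)$. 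By \cref{sumeta,sumf} of \cref{lemma:extrasFeasible}, $\Frob{\ones'\tilde\varepsilon}^2+\Frob{\tilde\varepsilon}^2=O_P(nT)$, so the difference is $O(n^{-1})\,O_P(nT)/(\sqrt nT)=O_P(n^{-1/2})$, with no interplay between $n$ and $T$ required. This is exactly where the $\rnorm{A}=O(T)$ that ruined your crude bound disappears; nothing about the eigenstructure of $G$ is used beyond $\Frob{G}=O(n^{-1})$.

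Your alternative route has concrete problems. First, the bookkeeping: $\rnorm{G}=O(1/n)$ means the nonzero eigenvalues of $G$, i.e.\ the $\|g_j\|^2$ in $G=\sum_j g_jg_j'$, are $O(1/n)$, not $O(1/n^2)$, so the ``extra $1/n$'' you are counting on is not there from this source; moreover $v_j=\sum_i g_{j,i}\tilde\varepsilon_i$ contains the factor terms $(g_j'\loading{k})f_k$ with $|g_j'\loading{k}|$ only $O(1)$ (Cauchy--Schwarz with $\Frob{\loading{k}}=O(\sqrt n)$), so smallness of $\|g_j\|$ alone does not control the dominant contribution---one needs $\Loading'G\Loading=O(1)$, or more simply the symmetrization above. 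Second, even granting your intermediate claims, your final rate $O_P\bigl(\frac{T}{\sqrt nT}\cdot\frac{T}{n}\bigr)=O_P(Tn^{-3/2})$ does not vanish under \cref{assumption:rates}: $n/T\to0$ makes $T$ large relative to $n$ (take $T=n^2$, giving $\sqrt n\to\infty$), so appealing to that assumption points in the wrong direction. Third, the fallback of ``reusing \cref{lemma:generalLRV}'' does not supply the missing estimate: that lemma bounds $\Frob{A(\Sigma_h-\longrunvar{h}I_T)}$ and drives \cref{lemma:PANICCS,lemma:proxyCS}, but the paper's proof of \cref{lemma:auxCS} does not invoke it at all. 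A direct mean/variance (Gaussian quadratic form) calculation of $\tr(A\tilde\varepsilon G\tilde\varepsilon')$ could also be made to work, but it hinges on $\Loading'G\Loading=O(1)$ and $\tr G=O(1/n)$ rather than on the per-eigenvector counting you sketch.
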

\subsection{Asymptotic normality}\label{asymnorm}

Having simplified each framework's central sequence and Fisher information  separately, we are now ready to show that they are asymptotically equivalent and the central sequences converge to a normal distribution.
We begin this section by showing that the central sequence in the MP framework is asymptotically equivalent to the one in the PANIC framework.\begin{lemma}\label{lemma:NormalityCS}
	Suppose that \cref{\asEither} hold.
Then we have, under $\lawPANIC$ and $\lawMP$,
$\auxCS =\etaCS+o_P(1)$.
\end{lemma}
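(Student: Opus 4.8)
The plan is to turn the difference $\auxCS-\etaCS$ into a single Gaussian quadratic form in $\eta$ and then control it in mean square. Note first that the deterministic correction term $\frac{1}{\sqrt n}\sum_{i=1}^n\oslongrunvaretaT{i}/\longrunvaretaT{i}$ is identical in $\auxCS$ and in $\etaCS$, so only the quadratic parts have to be compared. Under both $\lawPANIC$ and $\lawMP$ the null fixes $\rho=1$, so that $\Delta\tilde Y=\tilde f\Loading\trans+\tilde\eta$ and $\Delta E=\eta$ (for MP this is immediate from $\Delta Y_{it}=\varepsilon_{it}$; for PANIC it uses $\rho_k=1$, giving $\Delta\tilde F=\tilde f$); in vectorised form $\Delta Y=\sum_{k=1}^K\loading{k}\kron f_k+\eta$. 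The computation below therefore applies verbatim in both frameworks, and it suffices to argue under the two null laws (the statement then carries over to the local alternatives $\lawPANIC[h],\lawMP[h]$ by contiguity, which follows from the LAN property established in \cref{asymnorm}).

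The structural observation is that ${\auxproxvarepsgen}^{-1}$ kills the factor loadings: from \cref{lemma:auxCS}, ${\auxproxvarepsgen}^{-1}\Loading=\longrunvarmatrixeta^{-1}\Loading-\longrunvarmatrixeta^{-1}\Loading\bigl(\Loading\trans\longrunvarmatrixeta^{-1}\Loading\bigr)^{-1}\bigl(\Loading\trans\longrunvarmatrixeta^{-1}\Loading\bigr)=0$. Since $\calA=\id[n]\kron A$ commutes with $B\kron\id[T]$ for any $n\times n$ matrix $B$, we have $\calA\trans({\auxproxvarepsgen}^{-1}\kron\id[T])={\auxproxvarepsgen}^{-1}\kron A\trans$ and $({\auxproxvarepsgen}^{-1}\kron A\trans)(\loading{k}\kron f_k)=({\auxproxvarepsgen}^{-1}\loading{k})\kron(A\trans f_k)=0$, and likewise the left factor terms vanish, so that $\Delta Y\trans\calA\trans({\auxproxvarepsgen}^{-1}\kron\id[T])\Delta Y=\eta\trans({\auxproxvarepsgen}^{-1}\kron A\trans)\eta$ \emph{exactly}. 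Combining this with the representation of $\etaCS$ from \cref{lemma:PANICCS} (and $\Delta E=\eta$) gives
\begin{align}
\auxCS-\etaCS=-\frac{1}{\sqrt n\,T}\,\eta\trans\bigl(R\kron A\trans\bigr)\eta,\qquad R:=\longrunvarmatrixeta^{-1}-{\auxproxvarepsgen}^{-1}=\longrunvarmatrixeta^{-1}\Loading\bigl(\Loading\trans\longrunvarmatrixeta^{-1}\Loading\bigr)^{-1}\Loading\trans\longrunvarmatrixeta^{-1},
\end{align}
with $R$ positive semidefinite of rank at most $K$. Note the law of $\{f_{kt}\}$ played no role, which is why the PANIC case with possibly non-Gaussian factors is automatically covered.

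It then remains to show $\frac{1}{\sqrt n\,T}\eta\trans(R\kron A\trans)\eta\pto0$, which by Chebyshev reduces to showing that its mean and variance both vanish. Here I would use two uniform bounds. (a) By \cref{assumption:factor_loadings} together with the uniform two-sided bounds on $\longrunvaretaT{i}$ from \cref{rem:bddLRV}, $\Loading\trans\longrunvarmatrixeta^{-1}\Loading$ has smallest eigenvalue of order $n$ while $\Loading\trans\longrunvarmatrixeta^{-2}\Loading$ has largest eigenvalue $O(n)$; since $\tr R=\tr\bigl((\Loading\trans\longrunvarmatrixeta^{-1}\Loading)^{-1}\Loading\trans\longrunvarmatrixeta^{-2}\Loading\bigr)$ and $\tr R^2=\tr\bigl(((\Loading\trans\longrunvarmatrixeta^{-1}\Loading)^{-1}\Loading\trans\longrunvarmatrixeta^{-2}\Loading)^2\bigr)$, both are $O(1)$, and $R\succeq0$ forces $0\le R_{ii}=O(1/n)$ uniformly in $i$. (b) By the summability \eqref{bddACFEta}, the column sums of $\vareta[,i]$ are $O(1)$ uniformly, so $\tr(A\trans\vareta[,i])=T\oslongrunvaretaT{i}=O(T)$; moreover, writing $B:=A+A\trans=\ones\ones\trans-\id[T]$ and expanding $B\vareta[,i]=\ones(\ones\trans\vareta[,i])-\vareta[,i]$, one finds $\tr(B\vareta[,j]B\vareta[,i])=T^2\longrunvaretaT{i}\longrunvaretaT{j}+O(T)$ uniformly in $i,j$. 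With these, $\expec[\eta\trans(R\kron A\trans)\eta]=\tr\bigl((R\kron A\trans)\vareta\bigr)=\sum_i R_{ii}\tr(A\trans\vareta[,i])=T\sum_i R_{ii}\oslongrunvaretaT{i}=O(T)$ since $\sum_i R_{ii}=\tr R=O(1)$ and $|\oslongrunvaretaT{i}|=O(1)$; and the variance formula for the quadratic form in the Gaussian vector $\eta\sim N(0,\vareta)$, with $M+M\trans=R\kron B$ for $M=R\kron A\trans$, gives $\var\bigl(\eta\trans(R\kron A\trans)\eta\bigr)=\tfrac{1}{2}\tr\bigl((R\kron B)\vareta(R\kron B)\vareta\bigr)=\tfrac{1}{2}\sum_{i,j}R_{ij}^2\tr(B\vareta[,j]B\vareta[,i])=\tfrac{1}{2}\sum_{i,j}R_{ij}^2\bigl(T^2\longrunvaretaT{i}\longrunvaretaT{j}+O(T)\bigr)=O(T^2)$ since $\sum_{i,j}R_{ij}^2=\tr R^2=O(1)$ and $\longrunvaretaT{i}=O(1)$. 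Dividing by $\sqrt n\,T$, the mean is $O(1/\sqrt n)$ and the variance $O(1/n)$, so both tend to $0$, proving $\auxCS-\etaCS=o_P(1)$ under $\lawPANIC$ and under $\lawMP$.

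I expect the variance bound in (b) to be the only delicate point: $B$ contains the rank-one block $\ones\ones\trans$ of spectral norm $T$, so a crude operator-norm estimate of the variance would be far too large. The resolution is that this block enters only through the exact identity $\tr(\ones\ones\trans\vareta[,j]\ones\ones\trans\vareta[,i])=T^2\longrunvaretaT{i}\longrunvaretaT{j}$, which is then weighted against $\sum_{i,j}R_{ij}^2=\tr R^2$, and it is precisely the strong-factor assumption (\cref{assumption:factor_loadings}) that forces $\tr R^2=O(1)$ rather than diverging with $n$. The other ingredient that makes everything go through — the exact cancellation ${\auxproxvarepsgen}^{-1}\Loading=0$ — is essentially free, but it is what reduces the whole problem to a Gaussian moment calculation in $\eta$ alone.
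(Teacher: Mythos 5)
Your proposal is correct, and its first, decisive step is exactly the paper's: since ${\auxproxvarepsgen}^{-1}\Loading=0$, the factor part of $\Delta Y$ drops out exactly under both nulls, so $\auxCS-\etaCS$ is a pure quadratic form in $\eta$ with matrix $-\frac{1}{\sqrt nT}\,R\kron A\trans$, $R=\longrunvarmatrixeta^{-1}-{\auxproxvarepsgen}^{-1}$. Where you diverge is in how this remainder is killed. The paper bounds it pathwise: it symmetrizes via $A+A\trans=\iota\iota\trans-I_T$, factorizes $R=(\longrunvarmatrixeta^{-1}\Loading)(\Loading\trans\longrunvarmatrixeta^{-1}\Loading)^{-1}(\Loading\trans\longrunvarmatrixeta^{-1})$, and applies Cauchy--Schwarz to get $|\tr(A\tilde\eta R\tilde\eta\trans)|\le\Frob{(\Loading\trans\longrunvarmatrixeta^{-1}\Loading)^{-1}}\bigl(\Frob{\iota\trans\tilde\eta\longrunvarmatrixeta^{-1}\Loading}^2+\Frob{\tilde\eta\longrunvarmatrixeta^{-1}\Loading}^2\bigr)=O(n^{-1})O_P(nT)$, the stochastic orders coming from \cref{lemma:extrasFeasible} via first moments only. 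You instead compute the exact mean and variance of the Gaussian quadratic form and apply Chebyshev, with $\tr R,\tr R^2=O(1)$ and the uniform expansion $\tr\bigl((\iota\iota\trans-I_T)\vareta[,j](\iota\iota\trans-I_T)\vareta[,i]\bigr)=T^2\longrunvaretaT{i}\longrunvaretaT{j}+O(T)$; both routes ultimately lean on the same strong-factor input $(\Loading\trans\longrunvarmatrixeta^{-1}\Loading)^{-1}=O(n^{-1})$ and the same careful treatment of the rank-one $\iota\iota\trans$ block. What each buys: the paper's argument uses only second moments of $\eta$ (so this particular step does not need Gaussianity) and recycles bounds that are needed again for the feasible statistic in \cref{lemma:hatCS}; yours is self-contained, yields $L^2$ convergence at the explicit rate $O(1/\sqrt n)$, but the Wick/Isserlis variance identity ties it to Gaussian $\eta$ (fine under \cref{partb}, just less robust). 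Two cosmetic points: the lemma is stated under the nulls, so the contiguity remark is unneeded, and positive semidefiniteness plus $\tr R=O(1)$ only gives $R_{ii}=O(1)$ — the sharper $R_{ii}=O(1/n)$ follows from $\sup_i|\loading{ki}|<\infty$ in \cref{assumption:factor_loadings}, though your argument in fact only uses $R_{ii}\ge0$ and $\tr R=O(1)$.
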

Finally, we consider the weak limit of the central sequence $\etaCS$ (and therefore also of $\auxCS$), showing that both experiments are locally asymptotically normal.
\begin{proposition}\label{prop:limitexperiment}
Suppose that \cref{assumption:serial correlation,assumption:factor_loadings,assumption:rates,assumption:framework} hold.
Then we have, under $\lawPANIC$ and $\lawMP$,
  $\etaCS\vto N(0,J)$ with $J=\frac{1}{2}$.
\end{proposition}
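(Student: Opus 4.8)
The plan is to recognise $\etaCS$, under either null hypothesis, as a normalised sum of independent centred quadratic forms in Gaussian vectors and to conclude by a Lyapunov central limit theorem for triangular arrays. The first step is to reduce the two cases to one. The statistic $\etaCS$ of \cref{lemma:PANICCS} depends on the data only through $\Delta E$, and under both $\lawPANIC$ (at $h=0$) and $\lawMP$ (at $h=0$) we have $E_{i0}=0$ and $E_{it}=E_{i,t-1}+\eta_{it}$, so $\Delta E=\eta$; by \cref{partb} the law of $\eta$ --- Gaussian, independent across $i$, with $\eta_i\sim N(0,\vareta[,i])$ --- is the same in both frameworks. Hence $\etaCS$ has an identical distribution under $\lawPANIC$ and $\lawMP$, and it suffices to establish $\etaCS\vto N(0,\tfrac12)$ under $\lawPANIC$.

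Since $\calA\trans\LReta^{-1}=\longrunvarmatrixeta^{-1}\kron A\trans$ is block diagonal and $\Delta E=\eta$, the quadratic form in $\etaCS$ decouples across panel units:
\begin{align}
	\etaCS=\frac{1}{\sqrt n}\sum_{i=1}^n X_{i,T},\qquad X_{i,T}:=\frac{1}{\longrunvaretaT{i}}\Big(\frac{1}{T}\,\eta_i\trans A\trans\eta_i-\oslongrunvaretaT{i}\Big),
\end{align}
where the $X_{i,T}$ are independent across $i$ and each is centred, because $\eta_i\trans A\trans\eta_i$ is scalar and $\expec[\eta_i\trans A\trans\eta_i]=\tr(A\trans\vareta[,i])=\tr(A\vareta[,i])=T\oslongrunvaretaT{i}$ by the definition $\oslongrunvaretaT{i}=\tr[A\vareta[,i]/T]$. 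For the variances I would symmetrise via $\eta_i\trans A\trans\eta_i=\tfrac12\,\eta_i\trans(\ones\ones\trans-\id[T])\eta_i$ (using $A+A'=\ones\ones'-\id[T]$) and use the Gaussian identity $\var(\eta_i\trans S\eta_i)=2\tr(S\vareta[,i]S\vareta[,i])$ with $S=\tfrac12(\ones\ones\trans-\id[T])$. Expanding the trace isolates the leading term $\tfrac14(\ones\trans\vareta[,i]\ones)^2=\tfrac14 T^2(\longrunvaretaT{i})^2$; the remaining terms are of the form $\ones\trans\vareta[,i]^2\ones$ and $\tr(\vareta[,i]^2)$, and by the uniform absolute summability in \cref{partb} the matrices $\vareta[,i]$ and $\vareta[,i]^2$ have row sums bounded uniformly in $(i,T)$, so those terms are $O(T)$ uniformly in $i$. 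Dividing by $T^2(\longrunvaretaT{i})^2$ and using that $\longrunvaretaT{i}$ is bounded away from zero uniformly (\cref{rem:bddLRV}) gives $\var(X_{i,T})=\tfrac12+O(T^{-1})$ uniformly in $i$, hence $\var(\etaCS)=\tfrac1n\sum_{i=1}^n\var(X_{i,T})\to\tfrac12=J$ as $(n,T)\to\infty$.

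Finally, each $X_{i,T}$ is a centred quadratic form in a Gaussian vector with variance bounded uniformly in $(i,T)$; writing its fourth central moment in terms of traces of powers of $S\vareta[,i]$ and using $\tr((S\vareta[,i])^4)\le(\tr((S\vareta[,i])^2))^2$ shows $\sup_{i,T}\expec[X_{i,T}^4]<\infty$, so the Lyapunov ratio $n^{-2}\sum_{i=1}^n\expec[X_{i,T}^4]=O(n^{-1})\to0$. The Lyapunov CLT (with exponent $4$) for row-wise independent triangular arrays then yields $\etaCS\vto N(0,\tfrac12)$ under $\lawPANIC$, and by the reduction above also under $\lawMP$. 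I expect the only delicate point to be the variance bookkeeping --- checking that every term other than $\tfrac14 T^2(\longrunvaretaT{i})^2$ in $\tr(S\vareta[,i]S\vareta[,i])$ is $O(T)$ uniformly in $i$ --- but this follows routinely from the strengthened summability of the autocovariances in \cref{partb}; everything else is a direct application of the Gaussian moment identities and the standard Lyapunov theorem.
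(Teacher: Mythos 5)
Your proof is correct, and it reaches the result by a genuinely different route than the paper. The paper obtains the proposition as the special case $a_{i,n,T}=1$ of \cref{lemma:normality}: using $A+A'=\ones\ones'-I_T$ and $2\oslongrunvaretaT{i}=\longrunvaretaT{i}-\ACFgamma{i}(0)$ it rewrites $\etaCS$ as $\frac{1}{2\sqrt n}\sum_{i=1}^n\bigl(\bigl(\ones'\eta_i/\sqrt{T\longrunvaretaT{i}}\bigr)^2-1\bigr)$ minus $\frac{1}{2\sqrt n}\sum_{i=1}^n\frac{1}{\longrunvaretaT{i}}\bigl(\tfrac{1}{T}\eta_i'\eta_i-\ACFgamma{i}(0)\bigr)$; since $\ones'\eta_i/\sqrt{T\longrunvaretaT{i}}$ is exactly standard normal, the first term is, for every $(n,T)$, a normalized sum of i.i.d.\ centred $\chi^2(1)$ variables and converges to $N(0,\tfrac12)$ by the ordinary CLT, while the second, mean-zero term is negligible by the Chebyshev bound $\sup_i\var\bigl(\tfrac{1}{T}\eta_i'\eta_i\bigr)=O(T^{-1})$. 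You instead keep $\etaCS$ as a single normalized sum of independent, heterogeneous centred Gaussian quadratic forms $X_{i,T}$, compute $\var(X_{i,T})=\tfrac12+O(T^{-1})$ uniformly from $\var(\eta_i'S\eta_i)=2\tr(S\vareta[,i]S\vareta[,i])$ with $S=\tfrac12(\ones\ones'-I_T)$, and verify a Lyapunov condition via fourth moments of Gaussian quadratic forms and the inequality $\tr((S\vareta[,i])^4)\le(\tr((S\vareta[,i])^2))^2$ (valid because $S\vareta[,i]$ is similar to the symmetric matrix $\vareta[,i]^{1/2}S\vareta[,i]^{1/2}$ and so has real eigenvalues --- worth one explicit line in a write-up). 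Both arguments use the same symmetrization and the same observation that under the null $\Delta E=\eta$ has the same law under $\lawPANIC$ and $\lawMP$. What the paper's decomposition buys is that Gaussianity enters only through the exact law of $\ones'\eta_i$, so no Lindeberg/Lyapunov verification or quadratic-form moment bookkeeping is needed; what your argument buys is that it is mechanical and extends verbatim to bounded weights $a_{i,n,T}$, i.e.\ to the full statement of \cref{lemma:normality} used for \cref{prop:local_powers_BN_and_MP}, where the paper itself has to invoke a double-array CLT anyway. Two small points: the uniform lower bound on $\longrunvaretaT{i}$ is not literally \cref{rem:bddLRV} (which concerns $\longrunvareta{i}$) but follows from $\longrunvaretaT{i}\ge\inf_{i,T}\mineig{\vareta[,i]}>0$ as in the proof of \cref{lemma:eigenvalues}; and, as in the paper, you should note that because your variance and moment bounds are uniform in $T$, the triangular-array CLT applies along every path with $n,T\to\infty$, which is what the joint asymptotics of \cref{assumption:rates} require.
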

\begin{remark}\label{remark:equivalence} Under the null hypothesis, the model equations of both models coincide. Hence, the additional distributional  \cref{ass:MP} implies that under the null, the MP framework is a special case of the PANIC framework. Therefore, it is sufficient to show the desired convergence for $\lawPANIC$.  This principle applies to all calculations under the hypothesis. As the central sequences are equal as well and thanks to the LAN result below, it even extends to many calculations under alternatives, through Le Cam's Third Lemma.
	\end{remark}

	\Cref{prop:limitexperiment} is an important result as it establishes that the unit root testing problem in both models is locally asymptotically normal, i.e., it is asymptotically equivalent to testing $h=0$ against $h<0$ based on one observation $X\sim N(Jh,J)$. This equivalence prescribes how to perform asymptotically optimal inference and yields the asymptotic local power envelope and the power functions of various test statistics: The asymptotic representation theorem \cite[see, for example][Ch. 9]{vdVaart2000} implies that in our framework no unit root test can have higher power than the optimal test in the limit experiment.
This best test is clearly rejecting for small values of $X$, leading to a power (for a level-$\alpha$ test) of $
  \Phi(\Phi^{-1}(\alpha)-J^{1/2}h).$
Thus, with $J=1/2$, this constitutes the power envelope for our unit root testing problems:\footnote{As this section assumes the nuisance parameters to be known, for now we can only present an upper bound on the attainable power. In \cref{sec:UMPtest} we show that the power envelope of \cref{cor:theoretical_power_envelope} can be attained.}
\begin{corollary}\label{cor:theoretical_power_envelope}
Suppose that \cref{assumption:serial correlation,assumption:factor_loadings,assumption:rates,ass:BN}  hold. Let $\phi_{n,T}=\phi_{n,T}( Z_{11},\dots,Z_{nT})$ be a sequence of tests and denote their powers, under $\lawPANIC[h]$, by
$\pi_{n,T}(h)$.
If the sequence $\phi_{n,T}$ is asymptotically of level $\alpha\in(0,1)$,  i.e.
$ \limsup_{n,T\to\infty}   \pi_{n,T}(0)\leq
\alpha$,  we have, for all $h\leq 0$,
\begin{align}\label{eqn:powerenvelope}
\limsup_{n,T\to\infty} \pi_{n,T}(h)\leq \Phi\left(\Phi^{-1}(\alpha)-\frac{h}{\sqrt{2}}\right).
\end{align}
Replacing \cref{ass:BN} by \cref{ass:MP}, the same bound applies to powers under $\lawMP[h]$.
\end{corollary}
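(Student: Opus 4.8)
Proof proposal.

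The plan is to derive the bound entirely from the LAN structure established in \Cref{prop:limitexperiment} via the asymptotic representation theorem. First I would observe that the expansion $\log \rd\lawPANIC[h]/\rd\lawPANIC = h\etaCS - h^2 J/2 + o_P(1)$ together with $\etaCS\vto N(0,J)$, $J=\tfrac12$, says precisely that the sequence of localized experiments $(\{\lawPANIC[h]:h\le 0\})_{n,T}$ converges in the Le Cam sense to the Gaussian shift experiment $\mathcal E=(\{N(Jh,J):h\le0\})$, i.e.\ to observing a single $X\sim N(Jh,J)$; by Le Cam's first lemma the same expansion yields mutual contiguity of $\lawPANIC[h]$ and $\lawPANIC$, so local powers are the relevant quantities. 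The case $h=0$ is immediate, since there the right-hand side of \eqref{eqn:powerenvelope} equals $\alpha$ and the asymptotic level assumption is exactly $\limsup \pi_{n,T}(0)\le\alpha$; so fix $h<0$.

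Next I would do the subsequence bookkeeping: pass to a subsequence along which $\pi_{n,T}(h)\to\limsup_{n,T}\pi_{n,T}(h)$, and then, since $(\pi_{n,T}(0),\pi_{n,T}(h))\in[0,1]^2$, to a further subsequence along which $\pi_{n,T}(0)$ also converges, necessarily to a limit $\le\alpha$. Applying the asymptotic representation theorem \cite[see, for example][Ch.~9]{vdVaart2000} to the $[0,1]$-valued statistics $\phi_{n,T}$ along this subsequence produces a randomized test $\phi=\phi(X)$ in the limit experiment $\mathcal E$ with $\expec_{N(0,J)}\phi=\lim\pi_{n,T}(0)\le\alpha$ and $\expec_{N(Jh,J)}\phi=\lim\pi_{n,T}(h)=\limsup_{n,T}\pi_{n,T}(h)$.

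It then remains to bound the power of a level-$\alpha$ test in $\mathcal E$, which is the Neyman--Pearson computation: among tests of $h=0$ against the fixed alternative $h<0$ with size at most $\alpha$, the most powerful one rejects for small $X$, i.e.\ for $X< J^{1/2}\Phi^{-1}(\alpha)$, and its power is $\Phi(\Phi^{-1}(\alpha)-J^{1/2}h)$. Hence $\expec_{N(Jh,J)}\phi\le\Phi(\Phi^{-1}(\alpha)-J^{1/2}h)$, and substituting $J=\tfrac12$ gives exactly the right-hand side of \eqref{eqn:powerenvelope}, completing the PANIC case. For the MP case I would invoke \Cref{remark:equivalence}: \Cref{lemma:proxyCS,lemma:auxCS,lemma:NormalityCS,prop:limitexperiment} hold verbatim under $\lawMP$, so the localized MP experiments converge to the same Gaussian shift $\mathcal E$ with the same $J=\tfrac12$, and the identical argument applies.

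The main obstacle is the bookkeeping around the representation theorem — extracting one subsequence that simultaneously pins down the level point $h=0$ and the alternative $h<0$, and phrasing the convergence so that the limit test $\phi$ has level controlled by $\limsup\pi_{n,T}(0)\le\alpha$ while its power at $h$ equals $\limsup\pi_{n,T}(h)$; the Neyman--Pearson step and the Gaussian power formula are routine. A self-contained alternative avoiding the representation theorem is to bound $\pi_{n,T}(h)$ by the power of the exact Neyman--Pearson test at level $\expec_{\lawPANIC}[\phi_{n,T}]$ and to compute the latter's limiting power through Le Cam's third lemma (which gives $\etaCS\vto N(Jh,J)$ under $\lawPANIC[h]$) combined with monotonicity in the level; I would mention this but carry out the representation-theorem version.
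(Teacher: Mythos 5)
Your proposal is correct and follows essentially the same route as the paper: the LAN statement of \Cref{prop:limitexperiment} identifies the limit experiment as the Gaussian shift $N(Jh,J)$ with $J=\tfrac12$, the asymptotic representation theorem transfers any asymptotically level-$\alpha$ test to a test in that limit experiment, and the Neyman--Pearson bound $\Phi(\Phi^{-1}(\alpha)-J^{1/2}h)$ gives \eqref{eqn:powerenvelope}, with the MP case handled exactly as you do via the identical LAN expansion under $\lawMP$. Your added subsequence bookkeeping is a fine (and slightly more explicit) way to make the representation-theorem step rigorous.
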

The above power envelope would be reached by any of our previously introduced central sequences.\footnote{This always holds in LAN experiments and follows from Le Cam's Third Lemma \cite[see, for example,][Ch. 6]{vdVaart2000}.} In the next section we show that we can approximate these central sequences based on observables, yielding a feasible test that attains the asymptotic power envelope.


\section{An Asymptotically UMP Test}\label{sec:UMPtest}

In the previous section we derived a testing procedure that reaches the power envelope for the unit root testing problem. This test, however, is not feasible when the nuisance parameters are unknown. In this section, we demonstrate how to estimate the nuisance parameters to obtain a feasible version that also attains the power envelope. We provide a feasible version of $\auxCS$, which is motivated by the likelihood ratio in the MP experiment. As \cref{auxinvdev} projects out the factors, basing our feasible version on $\auxCS$ instead of $\etaCS$ spares us the approximation of the idiosyncratic parts. 

Recalling our LAN results in \cref{sec:limexp} and that the central sequences are asymptotically equivalent across the two setups (see \cref{lemma:NormalityCS}) it is clear that a feasible version of $\auxCS$ would be optimal.
 Therefore, we show that replacing all nuisance parameters with estimates does not change the limiting behavior of $\auxCS$. Specifically, we need estimates $\hatLoading$ of the factor loadings, as well as estimates $\hatoslongrunvareta{i}$ and $\hatlongrunvareta{i}$ of the (one-sided) long-run variances of each idiosyncratic part. The feasible test statistic is then \begin{align}\label{hatCS}
  \hatCS =& \frac{1}{\sqrt n T} \sum_{t=2}^T \sum_{s=2}^{t-1} \Delta Z_{\cdot, s}\trans \hatproxvarepsgen^{-1} \Delta Z_{\cdot, t} -\frac{1}{\sqrt n}\sum_{i=1}^n \frac{\hatoslongrunvareta{i}}{\hatlongrunvareta{i}}\text{, where}\\
\label{hatinvdev}
  \hatproxvarepsgen^{-1} :=& \hatlongrunvarmatrixeta^{-1} - \hatlongrunvarmatrixeta^{-1}\hat\Loading (\hat\Loading\trans \hatlongrunvarmatrixeta^{-1} \hat\Loading)^{-1}\hat\Loading\trans\hatlongrunvarmatrixeta^{-1}.
\end{align}

\begin{assumption}\label{assumption:estimators}
Let $\hatoslongrunvareta{i}$, $\hatlongrunvareta{i}$ and $\hat\Loading$  be 
estimators of $\oslongrunvareta{i}$, $\longrunvareta{i}$ and $\Loading$ satisfying, under $\lawMP$ and $\lawPANIC$,
\begin{enumerate}
 \item  \label{nuisanceDeltas}$\sup_{i\in\mathbb{N}} |\hatoslongrunvareta{i} - \oslongrunvareta{i}|^2 = o_P(1/ n) $, 
  
 \item \label{nuisanceOmegas}$\sup_{i\in\mathbb{N}} |\hatlongrunvareta{i} - \longrunvareta{i}|^2 = o_P(1/ n) $, and
 \item  \label{nuisanceLoading}for a $K\times K$ matrix $H_K$ satisfying $\Frob{H_K}=O_P(1)$ and $\Frob{H_K^{-1}}=O_P(1)$, we have $\Frob{\Loading H_K -\hat \Loading}=o_P(1)$.
\end{enumerate}
\end{assumption}
Under suitable restrictions on the bandwidth and the kernel, \cref{nuisanceDeltas,nuisanceOmegas} hold for kernel spectral density estimates; see Remark 2.9 in \cite{MPP2014}. \cref{nuisanceLoading} is stronger that the results in \cite{MoonPerron2004}, so we show in \Cref{lem:FactEst} that it indeed holds under our assumptions.

\begin{lemma}\label{lem:FactEst}
Let $\bar\Loading$ be $\sqrt{n}$ times the $n\times K$ matrix containing the $K$ orthonormal eigenvectors corresponding to the $K$ largest eigenvalues of $\frac{\Delta\tilde Z'\Delta\tilde Z}{nT}$. Take $
  \hat\Loading=\frac{\Delta\tilde Z'\Delta\tilde Z}{nT}\bar\Loading.$
There exists a $K\times K$ matrix $H_K$ such that, under $\lawMP$ and $\lawPANIC$, $
  \Frob{\Loading H_K -\hat \Loading}=o_p(1)$ 
and both $\Frob{H_K}$ and $\Frob{H_K^{-1}}$ are $O_P(1)$.
\end{lemma}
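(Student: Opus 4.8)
The plan is to adapt the standard principal-components argument of \cite{BaiNg2002,MoonPerron2004} to the present setup, but with the key simplification that under the null hypothesis (which, by \cref{remark:equivalence}, suffices) the observed first differences $\Delta\tilde Z$ coincide with $\Delta\tilde Y=\Delta\tilde Z$ up to the fixed effects, which disappear under differencing, so that $\Delta\tilde Z = \tilde f\Loading' + \tilde\eta$ exactly (with $\tilde f$ being the factor innovations, integrated or stationary). First I would record that $\Delta\tilde Z'\Delta\tilde Z/(nT)$ decomposes into a ``signal'' term $\Loading(\tilde f'\tilde f/T)\Loading'/n$ plus cross terms involving $\tilde\eta'\tilde f$ and $\tilde\eta'\tilde\eta$. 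Under \cref{assumption:factor_loadings} we have $\Loading'\Loading/n\to\limitfactorloadings$, and under \cref{assumption:serial correlation} together with \cref{assumption:rates} ($n/T\to0$) the factor-innovation Gram matrix $\tilde f'\tilde f/T$ is $O_P(1)$ (and for the PANIC case, when factors are integrated, one uses $\tilde f$ = innovations, so it is simply an average of stationary ergodic terms converging to $\diag(\gamma_{f,k}(0))$ or to the relevant covariance), while the idiosyncratic cross terms are $o_P(1)$: $\Frob{\tilde\eta'\tilde f}/(nT) = o_P(1)$ by independence and the moment bounds, and $\Frob{\tilde\eta'\tilde\eta/(nT) - \frac1n\sum_i \Sigma_{\eta,i}(0,0)/T}$-type terms vanish at the required rate given $n/T\to0$.

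Next I would invoke a perturbation/eigenvalue argument: writing $M_{nT}:=\Delta\tilde Z'\Delta\tilde Z/(nT)$, its $K$ largest eigenvalues are bounded away from zero and infinity (the signal part has rank $K$ with eigenvalues bounded below, by positive-definiteness of $\limitfactorloadings$ and the lower bound on the factor-innovation variances), while the remaining eigenvalues are $O_P(1/\min(n,T))$ — here the spectral-norm control $\rnorm{\tilde\eta'\tilde\eta/(nT) - (\text{diagonal})}$ is what pins down the eigenvalue gap. With $\bar\Loading$ the scaled top-$K$ eigenvectors and $\hat\Loading = M_{nT}\bar\Loading$ as defined, the Davis–Kahan-type bound (or the direct algebraic identity used in Bai–Ng) gives that the column space of $\hat\Loading$ is within $o_P(1)$ of that of $\Loading$; defining $H_K := (\Loading'\Loading/n)(\tilde f'\tilde f/T)(\bar\Loading'\Loading/n)'$ or the analogous rotation matrix that appears naturally in the decomposition of $\hat\Loading$, one shows $\Frob{\Loading H_K - \hat\Loading} = o_P(1)$ by plugging the decomposition of $M_{nT}$ into $M_{nT}\bar\Loading$ and collecting the $o_P(1)$ remainder terms. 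Invertibility of $H_K$ with $\Frob{H_K^{-1}}=O_P(1)$ then follows because $H_K H_K' = (\hat\Loading'\hat\Loading)/n + o_P(1)$ and $\hat\Loading'\hat\Loading/n$ is the diagonal matrix of the top-$K$ eigenvalues of $M_{nT}$ (up to scaling), which are bounded away from zero; alternatively one reads invertibility off $\bar\Loading'\Loading/n$ being asymptotically nonsingular.

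The main obstacle I expect is obtaining the rate $\Frob{\Loading H_K - \hat\Loading}=o_P(1)$ \emph{jointly and uniformly} under both $\lawMP$ and $\lawPANIC$: in the PANIC case the ``factors'' can be over-differenced or genuinely integrated, so one must be careful that it is always the innovation matrix $\tilde f$ (not a cumulated $\tilde F$) that enters $\Delta\tilde Z$, and that $\tilde f'\tilde f/T$ and $\tilde\eta'\tilde f/(nT)$ behave as required under \cref{assumption:serial correlation}(a) even when $\longrunvarF{k}=0$; the summability condition $\sum_m(|m|+1)|\gamma_{f,k}(m)|<\infty$ and ergodicity are exactly what is needed here. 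A secondary technical point is that \cref{assumption:serial correlation}(b) only gives uniform-in-$i$ control of the autocovariances of $\{\eta_{it}\}$, so the spectral-norm bounds on $\tilde\eta'\tilde\eta/(nT)$ must be derived from $\sup_i\sum_m(|m|+1)|\gamma_{\eta,i}(m)|<\infty$ together with $n/T\to 0$, rather than from an i.i.d.\ structure; this is standard but requires the Gaussianity (for fourth-moment control) supplied by \cref{partb}. Everything else is bookkeeping of $o_P$ terms along the lines of \cite{BaiNg2004} and \cite{MoonPerron2004}.
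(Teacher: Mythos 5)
Your proposal is correct and follows essentially the same route as the paper: the paper also writes $\hat\Loading-\Loading H_K=\frac{1}{nT}(\tilde\eta'\tilde f\Loading'+\Loading\tilde f'\tilde\eta+\tilde\eta'\tilde\eta)\bar\Loading$ with the Moon--Perron rotation $H_K=\frac{\tilde f'\tilde f}{T}\frac{\Loading'\bar\Loading}{n}$ (your candidate has a superfluous $\Loading'\Loading/n$ factor, but your hedge covers the right choice), bounds $\Frob{\tilde\eta'\tilde f}$ via the autocovariance summability, uses Gaussianity (Isserlis) and $n/T\to0$ to get $\rnorm{\tilde\eta'\tilde\eta}=o_p(\sqrt{n}T)$, and obtains $\Frob{H_K^{-1}}=O_P(1)$ exactly as in your alternative, from asymptotic nonsingularity of $\Loading'\bar\Loading/n$ via the lower bound on the $K$th eigenvalue of $\tilde\varepsilon'\tilde\varepsilon/(nT)$. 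The only cosmetic slips are the identity $H_KH_K'=\hat\Loading'\hat\Loading/n+o_P(1)$ (it should involve $\limitfactorloadings$) and the stated $O_P(1/\min(n,T))$ rate for the noise eigenvalues, neither of which affects the argument.
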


\begin{remark}
These factor estimates are the same 	as those used in \cite{MoonPerron2004} and correspond to factor estimates based on classical principal component analysis.
\end{remark}

\begin{remark}\label{rotations}	
The factors are only identified up to a `rotation' $H_K$. Note that $\auxCS$ is (indeed) invariant under such rotations, as  ${\auxproxvarepsgen}^{-1}$ also equals
\begin{align}
  \longrunvarmatrixeta^{-1} - \longrunvarmatrixeta^{-1} \Loading H_K\left( H_K'
\Loading\trans  \longrunvarmatrixeta^{-1} \Loading H_K\right)^{-1}H_K'\Loading\trans\longrunvarmatrixeta^{-1}.
\end{align}
\end{remark}

\begin{lemma} \label{lemma:hatCS}
Under \cref{assumption:estimators,assumption:factor_loadings,assumption:framework,assumption:serial correlation,assumption:rates}
we have, under $\lawMP$ and $\lawPANIC$, $
\hatCS=\auxCS+o_P(1).$
\end{lemma}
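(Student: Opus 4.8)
The plan is to show that replacing the true nuisance parameters $(\oslongrunvareta{i},\longrunvareta{i},\Loading)$ in $\auxCS$ by their estimates from \cref{assumption:estimators} perturbs the statistic only by $o_P(1)$. I would split $\hatCS-\auxCS$ into two pieces matching the structure of \cref{hatCS}: the ``drift correction'' term $-\tfrac{1}{\sqrt n}\sum_i(\hatoslongrunvareta{i}/\hatlongrunvareta{i}-\oslongrunvaretaT{i}/\longrunvaretaT{i})$ and the ``quadratic form'' term $\tfrac{1}{\sqrt n T}\sum_{t=2}^T\sum_{s=2}^{t-1}\Delta Z_{\cdot,s}\trans(\hatproxvarepsgen^{-1}-{\auxproxvarepsgen}^{-1})\Delta Z_{\cdot,t}$. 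For the first piece, I would first note that $\oslongrunvaretaT{i}/\longrunvaretaT{i}-\oslongrunvareta{i}/\longrunvareta{i}=o(1/\sqrt n)$ uniformly (the approximate long-run variances differ from the true ones by $o(1)$ uniformly by \cref{assumption:serial correlation}\ref{partb}, using summability of $(|m|+1)|\ACFgamma{i}(m)|$ and the tail of $A\Sigma_{\eta,i}/T$), so it suffices to control $\hatoslongrunvareta{i}/\hatlongrunvareta{i}-\oslongrunvareta{i}/\longrunvareta{i}$. Writing this difference over a common denominator and using that $\longrunvareta{i}$ is uniformly bounded away from $0$ (\cref{rem:bddLRV}) while $\hatlongrunvareta{i}$ is bounded away from $0$ with probability tending to one (by \cref{nuisanceOmegas}), the numerator is bounded by a constant times $|\hatoslongrunvareta{i}-\oslongrunvareta{i}|+|\hatlongrunvareta{i}-\longrunvareta{i}|$. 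Summing over $i$, multiplying by $1/\sqrt n$, and applying Cauchy--Schwarz gives a bound of order $\sqrt n\cdot\sup_i(|\hatoslongrunvareta{i}-\oslongrunvareta{i}|+|\hatlongrunvareta{i}-\longrunvareta{i}|)$, which is $o_P(1)$ by \cref{nuisanceDeltas,nuisanceOmegas}.

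For the quadratic-form piece, I would exploit the rotation-invariance noted in \cref{rotations}: since ${\auxproxvarepsgen}^{-1}$ can be written with $\Loading H_K$ in place of $\Loading$, I may compare $\hatproxvarepsgen^{-1}$ (built from $\hat\Loading$) directly against the version of ${\auxproxvarepsgen}^{-1}$ built from $\Loading H_K$, and by \cref{lem:FactEst,nuisanceLoading} we have $\Frob{\Loading H_K-\hat\Loading}=o_P(1)$ with $\Frob{H_K},\Frob{H_K^{-1}}=O_P(1)$. I would then show the map $(\Gamma,D)\mapsto D^{-1}-D^{-1}\Gamma(\Gamma\trans D^{-1}\Gamma)^{-1}\Gamma\trans D^{-1}$ (with $D=\longrunvarmatrixeta$ diagonal) is Lipschitz, in an appropriate operator norm, in a neighbourhood of $(\Loading H_K,\longrunvarmatrixeta)$: the ingredients are that $\longrunvarmatrixeta$ and $\hatlongrunvarmatrixeta$ have entries bounded and bounded away from $0$ (so their inverses have bounded spectral norm), that $\tfrac1n\Loading\trans\longrunvarmatrixeta^{-1}\Loading$ is bounded and bounded away from singular for large $n$ (from \cref{assumption:factor_loadings} and \cref{rem:bddLRV}, hence the inner $K\times K$ inverse is stable), and $\Frob{\cdot}$-perturbation bounds for matrix inverses. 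This yields $\rnorm{\hatproxvarepsgen^{-1}-{\auxproxvarepsgen}^{-1}_{(H_K)}}=o_P(1/n)$ after accounting for the $1/n$ scaling hidden in $\Loading\trans\longrunvarmatrixeta^{-1}\Loading$. It then remains to bound $\tfrac{1}{\sqrt n T}|\Delta Z\trans\calA_{\text{strict}}'(\Delta_{\text{pert}}\kron I_T)\Delta Z|$ where $\Delta_{\text{pert}}=\hatproxvarepsgen^{-1}-{\auxproxvarepsgen}^{-1}$; here $\Delta Z=\Delta Y$ under the null and $\Delta Y$ has the factor-plus-idiosyncratic structure \cref{tildeFacModel}, so $\tfrac1{nT}\Delta\tilde Z\trans\Delta\tilde Z=O_P(1)$ (its eigenvalues are controlled as in \cref{lem:FactEst}), and the strictly-lower-triangular cumulative-sum operator contributes the remaining factor. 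Combining, $|\tfrac{1}{\sqrt nT}\Delta Z\trans\calA'(\Delta_{\text{pert}}\kron I_T)\Delta Z|\lesssim \sqrt n\cdot n\cdot\rnorm{\Delta_{\text{pert}}}\cdot O_P(1)=o_P(1)$.

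The main obstacle is the quadratic-form piece, specifically getting the bookkeeping of the $n$-powers right: the statistic carries a $1/\sqrt n$ out front, the double time-sum over a cumulative-sum operator contributes an extra factor that must be shown to be $O_P(n/T)$ or better after using $n/T\to0$ (\cref{assumption:rates}), and the perturbation $\Delta_{\text{pert}}$ must be shown to be $o_P(1/n)$ in the right norm — not merely $o_P(1)$ — because it is sandwiched between two vectors whose quadratic form is of order $n$. The delicate step is therefore establishing that the Woodbury-type expression for ${\auxproxvarepsgen}^{-1}$ is Lipschitz with the correct normalisation: one must carry the explicit $n^{-1}$ factors through $\Loading\trans\longrunvarmatrixeta^{-1}\Loading$ and verify, via \cref{assumption:factor_loadings} and \cref{nuisanceLoading}, that $\Frob{\Loading H_K-\hat\Loading}=o_P(1)$ translates into $\Frob{\tfrac1n(\hat\Loading\trans\hatlongrunvarmatrixeta^{-1}\hat\Loading-H_K\trans\Loading\trans\longrunvarmatrixeta^{-1}\Loading H_K)}=o_P(1)$, which combined with the invertibility of the limit gives the required $o_P(1/n)$ bound on the full difference of projection-type matrices. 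Everything else is routine given the earlier lemmas and \cref{assumption:estimators}.
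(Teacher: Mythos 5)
Your overall strategy (plug-in decomposition, rotation invariance via $H_K$, perturbation of the Woodbury/projection matrix, uniform rates from \cref{assumption:estimators}) is the same as the paper's, but the step you yourself flag as delicate --- the power-of-$n$ bookkeeping in the quadratic-form piece --- does not close as written. You require $\rnorm{\hatproxvarepsgen^{-1}-{\auxproxvarepsgen}^{-1}}=o_P(1/n)$, but the assumptions cannot deliver that rate: the projection part of the difference contains terms such as $\longrunvarmatrixeta^{-1}\Loading H_K\left(H_K\trans\Loading\trans\longrunvarmatrixeta^{-1}\Loading H_K\right)^{-1}(\hatLoading-\Loading H_K)\trans\longrunvarmatrixeta^{-1}$, whose Frobenius norm is of order $O(\sqrt n)\,O(n^{-1})\,o_P(1)=o_P(n^{-1/2})$ and no better, because \cref{lem:FactEst} only gives $\Frob{\hatLoading-\Loading H_K}=o_P(1)$; likewise the diagonal part is controlled only at $\sup_i|\hatlongrunvareta{i}-\longrunvareta{i}|=o_P(n^{-1/2})$ under \cref{assumption:estimators}, hence the same rate for the inverses. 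Moreover your final display is internally inconsistent: $\sqrt n\cdot n\cdot o_P(1/n)=o_P(\sqrt n)$, not $o_P(1)$. The correct accounting shows that only $o_P(n^{-1/2})$ is needed: writing $\Delta_{\mathrm{pert}}=\hatproxvarepsgen^{-1}-{\auxproxvarepsgen}^{-1}$ and symmetrizing with $A+A\trans=\ones\ones\trans-\id[T]$, one gets $|\tr(A\trans\tilde\varepsilon\,\Delta_{\mathrm{pert}}\,\tilde\varepsilon\trans)|\le\tfrac12\rnorm{\Delta_{\mathrm{pert}}}\left(\Frob{\ones\trans\tilde\varepsilon}^2+\Frob{\tilde\varepsilon}^2\right)=\rnorm{\Delta_{\mathrm{pert}}}\,O_P(nT)$, so the contribution to the statistic is $\sqrt n\,\rnorm{\Delta_{\mathrm{pert}}}\,O_P(1)$, and the rate $o_P(n^{-1/2})$ --- which is exactly what the paper establishes in \cref{lem:estimatedTerms}, in Frobenius norm --- suffices. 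Note that a term-by-term bound $|\Delta Z_{\cdot,s}\trans\Delta_{\mathrm{pert}}\Delta Z_{\cdot,t}|\le n\rnorm{\Delta_{\mathrm{pert}}}O_P(1)$ summed over the double time index, without exploiting the $\ones\ones\trans-\id[T]$ cancellation, leaves an extra factor $T$ and fails; your appeal to $\tfrac1{nT}\Delta\tilde Z\trans\Delta\tilde Z=O_P(1)$ does not by itself give the needed $O_P(nT)$ control of $\tilde\varepsilon\trans A\trans\tilde\varepsilon$ in the dual norm.

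A second, smaller gap: your two pieces do not add up to $\hatCS-\auxCS$. The quadratic form in $\hatCS$ runs over $s\ge2$, whereas $\auxCS$ uses the full cumulative-sum operator ($Y_{-1}=\calA\Delta Y$, so $s\ge1$). The omitted cross term $\tfrac{1}{\sqrt nT}\sum_{t=2}^T\varepsilon_{\cdot,1}\trans\hatproxvarepsgen^{-1}\varepsilon_{\cdot,t}$ is not trivially negligible: it is $O_P(\sqrt{n/T})$ and vanishes only through \cref{assumption:rates}; the paper treats it as a separate term in its proof. The drift-correction piece of your argument --- the $\sqrt n\sup_i$ bound from \cref{assumption:estimators} together with the $O(\sqrt n/T)$ adjustment from the $T$-subscripted to the true long-run variances --- is fine and matches the paper's treatment.
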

Although \cref{lemma:hatCS} only concerns adaptivity under the null hypothesis $\rH_0$, we can use Le Cam's First Lemma to obtain that, thanks to contiguity, also under $\lawMP[h]$ or $\lawPANIC[h]$, $\hatCS$ has the same limiting distribution as $\auxCS$, so that tests based on $\hatCS$ will be uniformly most powerful. Formally, the size and power properties of our optimal test follow from the following theorem.

\begin{theorem}Let $\tUMP=\sqrt{2}\hatCS$. 
Under \cref{assumption:estimators,assumption:factor_loadings,assumption:framework,assumption:serial correlation,assumption:rates}  we have, under $\lawMP[h]$ and $\lawPANIC[h]$,
\begin{align}
\tUMP \vto N\left(\frac{1}{\sqrt{2}} h,1\right).
\end{align}
Rejecting $\rH_0$ for $\tUMP\leq \Phi^{-1}(\alpha)$, $\alpha\in (0,1)$,
and $\tUMP$ an asymptotic power of leads to an asymptotic power of $\Phi\left(\Phi^{-1}(\alpha)-\frac{h}{\sqrt{2}}\right)$, implying that $\tUMP$ is asymptotically uniformly most powerful.
\end{theorem}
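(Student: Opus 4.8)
The plan is to assemble the theorem from the ingredients already in place in Sections~\ref{sec:limexp}--\ref{sec:UMPtest} together with Le Cam's first and third lemmas. The three facts I would invoke are: the common LAN expansion $\log \rd\law[h]/\rd\law = h\,\etaCS - \tfrac14 h^2 + o_P(1)$ under $\rH_0$ (which follows from \cref{lemma:PANICCS} in the PANIC case and from \cref{lemma:proxyCS,lemma:auxCS} combined with \cref{lemma:NormalityCS} in the MP case); the weak limit $\etaCS \vto N(0,\tfrac12)$ under $\rH_0$ from \cref{prop:limitexperiment}; and the adaptivity chain $\hatCS = \auxCS + o_P(1) = \etaCS + o_P(1)$ under $\rH_0$, obtained by stringing together \cref{lemma:hatCS} and \cref{lemma:NormalityCS}. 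First I would settle the null behaviour: combining the adaptivity chain with \cref{prop:limitexperiment} gives $\hatCS \vto N(0,\tfrac12)$ under both $\lawMP[0]$ and $\lawPANIC[0]$, hence $\tUMP = \sqrt2\,\hatCS \vto N(0,1)$, and in particular the power of the rejection rule $\tUMP \le \Phi^{-1}(\alpha)$ at $h=0$ converges to $\alpha$.

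Next I would pass to the local alternatives. The LAN expansion exhibits $\log\rd\law[h]/\rd\law$ as an $o_P(1)$-perturbation of the fixed affine function $h\,\etaCS - \tfrac14 h^2$ of $\etaCS$, so the marginal convergence in \cref{prop:limitexperiment} upgrades for free to joint convergence of $(\etaCS,\log\rd\law[h]/\rd\law)$ to a bivariate normal law whose off-diagonal entry equals $h\cdot\tfrac12$. That same expansion shows the log-likelihood ratio converges to $N(-\tfrac14 h^2,\tfrac12 h^2)$, verifying the contiguity condition, so Le Cam's first lemma yields mutual contiguity of $\law[h]$ and $\law$; consequently the $o_P(1)$ remainder in $\hatCS - \etaCS$ stays $o_P(1)$ under $\law[h]$, and Le Cam's third lemma shifts the limit of $\etaCS$ — hence of $\hatCS$ — by the covariance $\tfrac{h}{2}$, giving $\hatCS \vto N(\tfrac{h}{2},\tfrac12)$ under both $\lawMP[h]$ and $\lawPANIC[h]$. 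Multiplying through by $\sqrt2$ produces $\tUMP \vto N(\tfrac{h}{\sqrt2},1)$, the asserted limit.

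The power and optimality statements are then immediate. From $\tUMP \vto N(h/\sqrt2,1)$ and continuity of $\Phi$, the power of the test at $h$ converges to $\Phi(\Phi^{-1}(\alpha) - h/\sqrt2)$, which equals $\alpha$ at $h=0$ and exceeds $\alpha$ for every $h<0$. Comparing this with the upper bound in \cref{cor:theoretical_power_envelope} shows that no asymptotically level-$\alpha$ test can beat it for any $h\le 0$; since $\tUMP$ attains the envelope, it is asymptotically uniformly most powerful.

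The main obstacle is not analytic but organisational: all the genuine work (the long-run-variance replacements, the equivalence of the two central sequences, the adaptive estimation of the nuisance parameters, and the CLT for $\etaCS$) is already carried out in the cited lemmas, so the only thing needing care is the transition from $\rH_0$ to the local alternatives. There one must observe that joint asymptotic normality of the statistic and the likelihood ratio is automatic here — the ratio being, up to $o_P(1)$, a deterministic affine image of $\etaCS$ — and then use contiguity to transport both the $o_P(1)$ negligibility in the adaptivity chain and the weak-convergence conclusion from the null to the contiguous alternatives. Once that is in hand, the remaining power computation is the textbook Gaussian-shift calculation.
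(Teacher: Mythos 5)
Your proposal is correct and follows essentially the same route as the paper: the null-hypothesis adaptivity chain $\hatCS=\auxCS+o_P(1)=\etaCS+o_P(1)$ together with \cref{prop:limitexperiment}, then Le Cam's first lemma (contiguity, transporting the $o_P(1)$ remainders) and the third lemma (shifting the limit by the covariance $h/2$ read off the LAN expansion), and finally comparison with \cref{cor:theoretical_power_envelope} to conclude asymptotic optimality. Nothing is missing.
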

\begin{remark}
The asymptotic size of our test can also be obtained under much weaker assumptions not exploiting Gaussianity, see \cref{footnoteGaussianity,remarkGaussianity}. In such a situation, our test is still valid although perhaps nonoptimal. For optimal inference with non-Gaussian innovations a new analysis of the likelihood ratio would be needed, but this is not feasible here.
\end{remark}

\begin{remark}
	Note that the limiting distribution of $\tUMP$ does not depend on the autocorrelations or the heterogeneity of the long-run variances.  This shows that the decrease in asymptotic power attributed to these features, for example in Remark 2 of \cite{Westerlund2015} was due to the specific tests under consideration rather than being a feature of the unit root testing problem.
\end{remark}

\begin{remark}
	Note that $\hatCS$ only involves differenced data, so that our test is invariant with respect to the incidental intercepts $m_i$. 
\end{remark}

Here is one way to obtain the UMP test in practice:
\begin{enumerate}
\item Compute an estimator $\hat K$ of the number of common factors on the basis of the observations $\Delta Z_{\cdot t}$, $t=2,\dots,T$ using information criteria from \cite{BaiNg2002}.\footnote{As $\jointlimits$, these criteria select the correct number of factors with probability one. Therefore, we can treat the number of factors as known in our asymptotic analyses.}
\item Use the observations $\Delta Z_{\cdot t}$, $t=2,\dots,T$, and $\hat K$ to determine the factor loadings  $\hatLoading$ and the factor residuals  $\hat\eta_{\cdot t}$, $t=2,\dots,T$,  using principal components.
\item Determine estimates $\hatlongrunvareta{i}$ of $\longrunvareta{i}$  and estimates $\hatoslongrunvareta{i}$ of $\oslongrunvareta{i}$
from $\hat\eta_{\cdot t}$, $t=2,\dots,T$, using kernel spectral density estimates.  
Let $\hat\Omega=\diag(\hatlongrunvareta{1},\dots,\hatlongrunvareta{n} )$.
\item Calculate the estimated central sequence $\hatCS$ as in \cref{hatCS} and reject when $\tUMP=\sqrt{2}\hatCS\leq \Phi^{-1}(\alpha)$. Alternatively, based on small sample considerations, also estimate the empirical  Fisher information \begin{align}
	\hatJ:= \frac{1}{ n T^2}\sum_{t=2}^T \sum_{s=2}^{t-1} \Delta Z_{\cdot, s}\trans \hatproxvarepsgen^{-1}  \sum_{u=2}^{t-1}\Delta Z_{\cdot, u},
\end{align}
and reject the null hypothesis when $\tUMPEMP:=\hat \Delta_{n,T}/\sqrt{\hatJ}  \leq \Phi^{-1}(\alpha)$.
\end{enumerate}
\begin{remark}
Although the uniformly most powerful test $\tUMP$ does not require a complicated estimate of the known $J=1/2$, it can be undersized in small samples, whereas the empirical version $\tUMPEMP$ behaves very well in most DGPs, both in terms of size and power. Thus we recommend to use the $\tUMPEMP$ in small samples. See \cref{sec:MC} for details. 
\end{remark}

\section{Comparing Powers Across Tests and Frameworks}\label{sec:tests_comparison}
This section derives the asymptotic powers of commonly used tests in both the \cite{MoonPerron2004} and the \cite{BaiNg2004} frameworks. We start by formalizing our observation that local powers are equal across the two frameworks.
\begin{corollary}\label{equivalence}
	Let $t_{n,T}$ be a test statistic that, under $\lawPANIC$, converges in distribution jointly with $\etaCS$. Then, for all $x\in\mathds{R}$, and all $h$, \begin{align}
 \lim_{\jointlimits} \lawMP[h][t_{n,T}\le x]=\lim_{\jointlimits}\lawPANIC[h][t_{n,T}\le x].
\end{align}
If, more specifically, $t_{n,T}\overset{\lawPANIC}{\to} N(\mu,\sigma^2)$ and if $t_{n,T}$ and $\etaCS$ are jointly asymptotically normal under $\lawPANIC$ with asymptotic covariance $\sigma_{\Delta,t}$, its limiting distribution under local alternatives is given by 
\begin{align}
  t_{n,T}\overset{\lawPANIC[h]}{\to} N(\mu+h\sigma_{\Delta,t},\sigma^2)\text{, and } t_{n,T}\overset{\lawMP[h]}{\to} N(\mu+h\sigma_{\Delta,t},\sigma^2).
\end{align}
Thus, rejecting for small values of $t_{n,T}$ leads to an asymptotic power for a level-$\alpha$ test of $\Phi(\Phi^{-1}(\alpha)-h\sigma_{\Delta,t}/\sigma)$ in both frameworks.
\end{corollary}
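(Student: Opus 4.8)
The plan is to deduce the statement from the LAN structure established in \cref{sec:limexp} together with Le Cam's Third Lemma, using \cref{remark:equivalence} to transport the relevant joint weak limit from the PANIC null to the MP null. The first step is to rephrase the hypothesis ``$t_{n,T}$ converges in distribution jointly with $\etaCS$ under $\lawPANIC$'' so that it involves only the distribution of $Z$. By \cref{lemma:auxCS,lemma:NormalityCS} we have $\etaCS=\auxCS+o_P(1)$ under both $\lawPANIC$ and $\lawMP$, while $\auxCS$ is a function of $\Delta Z$ and the fixed population nuisance parameters $\Loading,\longrunvarmatrixeta$ and, unlike $\etaCS$, does not involve the realised factor paths. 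Hence the hypothesis is equivalent to $(t_{n,T},\auxCS)\vto(S,\Delta)$ under $\lawPANIC$ for some pair $(S,\Delta)$, where $\Delta\sim N(0,\tfrac{1}{2})$ by \cref{prop:limitexperiment}. Since $(t_{n,T},\auxCS)$ is a measurable function of $Z$, and since \cref{remark:equivalence} shows that under $\rH_0$ the model equations coincide, so that the law of $Z$ generated by $\lawMP$ coincides with that generated by $\lawPANIC$ for an admissible choice of the PANIC nuisance parameters (the Gaussianity of the $\{f_{kt}\}$ required in \cref{ass:MP} being compatible with \cref{ass:BN}), the pair $(t_{n,T},\auxCS)$ has the same weak limit $(S,\Delta)$ under $\lawMP$; consequently $(t_{n,T},\etaCS)$ has this limit under both null laws.

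Next I would invoke Le Cam's Third Lemma separately in each framework. From the likelihood expansion in \cref{expandPANIC} and \cref{lemma:PANICCS}, $\log\rd\lawPANIC[h]/\rd\lawPANIC=h\etaCS-\tfrac{1}{4}h^2+o_P(1)=h\auxCS-\tfrac{1}{4}h^2+o_P(1)$ under $\lawPANIC$, and from \cref{expaMP} together with \cref{lemma:proxyCS,lemma:auxCS,lemma:NormalityCS}, $\log\rd\lawMP[h]/\rd\lawMP=h\auxCS-\tfrac{1}{4}h^2+o_P(1)$ under $\lawMP$; in both cases the leading term converges to $h\Delta\sim N(0,\tfrac{1}{2}h^2)$, so that \cref{prop:limitexperiment} and Le Cam's First Lemma give contiguity of $\lawPANIC[h]$ to $\lawPANIC$ and of $\lawMP[h]$ to $\lawMP$. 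Under $\lawPANIC$ the pair $\bigl(t_{n,T},\log\rd\lawPANIC[h]/\rd\lawPANIC\bigr)$ converges jointly to $\bigl(S,\,h\Delta-\tfrac{1}{4}h^2\bigr)$, and Le Cam's Third Lemma yields that under $\lawPANIC[h]$ the statistic $t_{n,T}$ has limiting law $B\mapsto\expec\bigl[\mathds{1}_B(S)\,\e^{h\Delta-h^2/4}\bigr]$ (which integrates to one since $\expec\,\e^{h\Delta-h^2/4}=1$). Running the identical argument under $\lawMP$, where the joint limit of $\bigl(t_{n,T},\log\rd\lawMP[h]/\rd\lawMP\bigr)$ is the very same $\bigl(S,h\Delta-\tfrac{1}{4}h^2\bigr)$, gives the same limiting law under $\lawMP[h]$ --- which is exactly the asserted equality of the two limits for every $x$ and every $h$.

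For the more explicit second assertion I would specialise to jointly Gaussian $(S,\Delta)$ with $S\sim N(\mu,\sigma^2)$, $\Delta\sim N(0,\tfrac{1}{2})$ and $\cov(S,\Delta)=\sigma_{\Delta,t}$: a short exponential-tilting computation (the Gaussian form of Le Cam's Third Lemma) identifies $B\mapsto\expec\bigl[\mathds{1}_B(S)\,\e^{h\Delta-h^2/4}\bigr]$ with $N(\mu+h\sigma_{\Delta,t},\sigma^2)$, which by the previous paragraph is the limiting law of $t_{n,T}$ under both $\lawPANIC[h]$ and $\lawMP[h]$. The one-sided level-$\alpha$ test that rejects for $t_{n,T}\le\mu+\sigma\Phi^{-1}(\alpha)$ (the asymptotic null quantile) then has asymptotic power $\Phi\bigl(\Phi^{-1}(\alpha)-h\sigma_{\Delta,t}/\sigma\bigr)$ in either framework. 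No single step is computationally heavy; the main point requiring care is the cross-framework transfer of the joint weak limit in the first paragraph --- one must route the hypothesis through the observable surrogate $\auxCS$ so that it becomes a statement purely about the distribution of $Z$, and check that the MP null configuration is admissible as a PANIC configuration so that \cref{remark:equivalence} applies verbatim. Everything else is a direct appeal to the likelihood expansions of \cref{sec:limexp} and to Le Cam's First and Third Lemmas.
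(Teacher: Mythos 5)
Your proof is correct and follows essentially the paper's own route: joint convergence of $(t_{n,T},\etaCS)$ transferred across the two null laws via \cref{remark:equivalence}, the common LAN expansion $h\etaCS-\tfrac{1}{4}h^2+o_P(1)$ of both log-likelihood ratios, and then the general and Gaussian forms of Le Cam's Third Lemma. The only difference is your detour through the observable surrogate $\auxCS$, which is a harmless but not strictly necessary refinement, since under the null the MP data-generating process is literally a PANIC one, so the joint law of $(t_{n,T},\etaCS)$ itself already coincides across the two frameworks.
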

Once again, our result on the asymptotic equivalence of the two experiments allows us to obtain results for both frameworks at the same time. By demonstrating the joint normality under the null as in \cref{equivalence} we obtain simple proofs of the powers of commonly used tests in these frameworks, without ever relying on triangular array calculations. 

To show the elegance of this approach, we include here the full proof of the first part of this lemma. The second part follows immediately from a more specific version of Le Cam's third lemma, which directly prescribes the desired normal distribution under alternatives. We can use this simple way to obtain powers under local alternatives thanks to our LAN results of \cref{sec:limexp}.
\begin{proof}
	Denote the weak limit of $(t_{n,T},\etaCS)$ under $\lawMP$ by $(t,\Delta)$. Thanks to our results in \cref{sec:limexp}, both $(t_{n,T},\frac{ \rd \lawPANIC[h]}{\rd \lawPANIC})$ and $(t_{n,T},\frac{ \rd \lawMP[h]}{\rd \lawMP})$ converge in distribution to $(t,\exp(h\Delta-h^2/4))$. By a general form of Le Cam's third lemma, the distribution of $t_{n.T}$ under local alternatives only depends on this joint limiting law and is thus equal across the two frameworks.\footnote{See Theorem 6.6 in \cite{vdVaart2000}. 
}
	\end{proof}


%
%
Before we apply \cref{equivalence} to derive asymptotic powers, we first describe the relevant test statistics in some detail. We focus on the tests proposed in \cite{BaiNg2010} (`BN tests') and \cite{MoonPerron2004} (`MP tests').
Following these papers, we denote
\begin{align}\label{notation:add_BN}
 \lrv^2 &= \lim_{n \to \infty} \frac{1}{n} \sum_{i=1}^n \longrunvareta{i}, \enskip
 \squaredomegas^4 = \lim_{n \to \infty} \frac{1}{n} \sum_{i=1}^n \left(\longrunvareta{i}\right)^2,\enskip
\oslrv = \lim_{n\to\infty} \frac{1}{n} \sum_{i=1}^n \oslongrunvareta{i},
\end{align}
all assumed to be positive, and their estimated counterparts
\begin{align}
 \hatlrv^2 =  \frac{1}{n} \sum_{i=1}^n \hatlongrunvareta{i}, \enskip
 \hatsquaredomegas^4 =  \frac{1}{n} \sum_{i=1}^n \left(\hatlongrunvareta{i}\right)^2 \text{, and }
 \hatoslrv =  \frac{1}{n} \sum_{i=1}^n \hatoslongrunvareta{i}.\end{align}
Finally, we define $\omega^4 = (\omega^2)^2$ and $\hat\omega^4 = (\hat\omega^2)^2$.

Both the MP and BN tests rely on a two stage procedure. In the first stage, the unobserved idiosyncratic innovations $E$ are estimated. Subsequently, a pooled regression procedure is used to estimate the (pooled) autoregression parameter.
This pooled estimator is then used to construct a $t$-test.
 The main difference between the MP and the BN procedures  lies in the way the idiosyncratic innovations are estimated.

\cite{BaiNg2010} propose to estimate the idiosyncratic errors $E$ by the PANIC approach introduced in \cite{BaiNg2004},
which in turn relies on principal component analysis applied to the differences $\Delta Y_{it}$. Denoting this estimator of $\bfe$ by $\BNbfe$, the BN tests are 
\begin{align}
P_a =& \frac{\sqrt n T (\BNrho-1)}{\sqrt{2\hatsquaredomegas^4/\hatlrv^4}} \text{ and } \\
P_b =& \sqrt n T (\BNrho-1)\sqrt{\frac{1}{nT^2}\sum_{i=1}^n \BNbfem[i]^\prime\BNbfem[i]\frac{\hatlrv^2}{\hatsquaredomegas^4}}\text{, where}\\
\BNrho =& \frac{\sum_{i=1}^n \BNbfem[i]^\prime\BNbfe[i]- nT \hatoslrv}{\sum_{i=1}^n \BNbfem[i]^\prime \BNbfem[i]}\end{align}
is a  bias-corrected pooled estimator for the autoregressive coefficients.

\begin{remark}
	Recall that $\tUMPEMP$ is a modification of $\tUMP$ that replaces the asymptotic Fisher Information $J=1/2$, with its finite sample equivalent in the MP setup, $\FIMP$. The resulting statistics can be considered a version of $P_b$: In the case of homogeneous long-run variances, inserting the true long-run variances into $\tUMPEMP$ yields $P_b$. Conversely, $\tUMPEMP$ is a version of $P_b$ that takes into account the heterogeneity in the long-run variances.
\end{remark}

The MP tests are based on a different estimator of $\rho$. 
The  idiosyncratic components $\bfe[i]$ are estimated by projecting the data on the space orthogonal to the common factors. Let $\hat\Loading$ be a consistent estimators for $\Loading$ as defined in \cite[p. 89-90]{MoonPerron2004}, and $Y_{\cdot,t} = (Y_{1t}, \dots, Y_{nt})^\prime$. Then the MP test statistics are given by
\begin{align}
t_a =& \frac{\sqrt n T (\MPrho-1)}{\sqrt{2\hatsquaredomegas^4/\hatlrv^4}},\text{ and } \\ t_b =& \sqrt n T (\MPrho-1)\sqrt{\frac{1}{nT^2}\sum_{t=1}^T Y_{\cdot,t-1}^\prime Q_{\hat\gamma} Y_{\cdot,t-1}\frac{\hatlrv^2}{\hatsquaredomegas^4}}\text{, where}\\
\MPrho =& \frac{\sum_{t=1}^T Y_{\cdot,t}^\prime Q_{\hat\gamma} Y_{\cdot,t-1} - nT \hatoslrv}{\sum_{t=1}^T Y_{\cdot,t-1}^\prime Q_{\hat\gamma} Y_{\cdot,t-1}},
\text{ and } Q_{\hat\gamma} = I - \hatGamma(\hatGamma^\prime\hatGamma)^{-1}\hatGamma^\prime.\end{align}

We are now ready to compute the asymptotic behaviour of the MP and BN tests under local alternatives by an application of \cref{equivalence}. 
The power of the MP tests in the MP framework has been derived in \cite{MoonPerron2004} and that of the BN tests in the PANIC framework has been derived in \cite{Westerlund2015}.
Given our LAN result, we can provide simple independent proofs of these results. These rely on the second part of \cref{equivalence}; we demonstrate the required joint asymptotic normality in a supplementary appendix.
More importantly, our approach also leads to new results, namely the asymptotic powers of the MP test in the PANIC framework and the asymptotic powers of the BN tests in the MP framework. In fact, those results can be considered an immediate consequence of the first part of \cref{equivalence} and the existing power results in the literature.


\begin{proposition}\label{prop:local_powers_BN_and_MP}
Suppose that \cref{assumption:serial correlation,assumption:factor_loadings,assumption:rates,assumption:framework,assumption:estimators} hold.
Then, under $\lawPANIC[h]$ or $\lawMP[h]$, as $\jointlimits$, the test statistics $P_a, P_b, t_a$, and $t_b$ all converge in distribution to a normal distribution with mean $h\sqrt{\frac{\omega^4}{2\phi^4}}$ and variance one. Rejecting for small values of any of these statistics leads to an asymptotic power for a level-$\alpha$ test of $\Phi(\Phi^{-1}(\alpha)-h\sqrt{\frac{\omega^4}{2\phi^4}})$ in both frameworks.
\end{proposition}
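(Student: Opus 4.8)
The plan is to obtain all four convergences at once from \cref{equivalence}, so that the only genuinely new ingredient is a joint asymptotic normality statement \emph{under the null}. Recall from \cref{remark:equivalence} that under $\rH_0$ the MP model is a submodel of the PANIC model; hence it suffices to establish, under $\lawPANIC$, that each of $P_a,P_b,t_a,t_b$ is asymptotically normal \emph{jointly} with the central sequence $\etaCS$. Granting this, the first part of \cref{equivalence} transports the limiting law of each statistic under $\rH_a$ from its ``native'' framework --- $\lawPANIC[h]$ for $P_a,P_b$, whose power is given in \cite{Westerlund2015}, and $\lawMP[h]$ for $t_a,t_b$, whose power is given in \cite{MoonPerron2004} --- to the other framework, while the second part identifies the mean under local alternatives as $h\,\sigma_{\Delta,t}$, with $\sigma_{\Delta,t}$ the asymptotic covariance of the statistic with $\etaCS$.

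For the joint normality I would expand the two pooled autoregressive estimators. Using \cref{lem:FactEst} and \cref{assumption:estimators} to replace the estimated idiosyncratic residuals $\BNbfe[i]$ (and, for $t_a,t_b$, the factor-projected levels $Q_{\hat\gamma}Y_{\cdot,t}$) by the true idiosyncratic levels $\bfe[i]$, and the estimated (one-sided) long-run variances by their population counterparts, one obtains under $\lawPANIC$, up to $o_P(1)$,
\begin{align}
\sqrt n\,T(\BNrho-1)&=\frac{2}{\lrv^2}\cdot\frac1{\sqrt n}\sum_{i=1}^n\xi_{i,T}, &
\etaCS&=\frac1{\sqrt n}\sum_{i=1}^n\frac{\xi_{i,T}}{\longrunvaretaT{i}},
\end{align}
where $\xi_{i,T}:=\tfrac1T\,\eta_i\trans\A\,\eta_i-\oslongrunvaretaT{i}$ (note $\eta_i\trans\A\,\eta_i=\sum_{t}E_{i,t-1}\eta_{it}$) and $\tfrac1{nT^2}\sum_i\BNbfem[i]\trans\BNbfem[i]\pto\tfrac12\lrv^2$; the analogous reduction holds for $\MPrho$ once $Q_{\hat\gamma}$ has removed the common components. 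Since the $\xi_{i,T}$ are independent across $i$ with $\xi_{i,T}\vto\tfrac12\longrunvareta{i}\bigl(W_i(1)^2-1\bigr)$, a Lindeberg CLT for triangular arrays --- with $n/T\to0$ absorbing the various remainders --- yields joint asymptotic normality of $\bigl(\sqrt n\,T(\BNrho-1),\etaCS\bigr)$ with asymptotic variances $2\squaredomegas^4/\lrv^4$ and $\tfrac12$ and asymptotic covariance $1$. Dividing by the scaling $\sqrt{2\hatsquaredomegas^4/\hatlrv^4}$ in $P_a,t_a$ (and by the asymptotically equivalent factor in $P_b,t_b$) then gives, in all four cases, a null limit $N(0,1)$ jointly with $\etaCS$, with covariance $\sigma_{\Delta,t}=\sqrt{\lrv^4/(2\squaredomegas^4)}$, where $\lrv^4=(\lrv^2)^2$.

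Assembling: by the second part of \cref{equivalence}, each of $P_a,P_b,t_a,t_b$ converges under $\lawPANIC[h]$ and under $\lawMP[h]$ to $N\bigl(h\sqrt{\lrv^4/(2\squaredomegas^4)},1\bigr)$, and rejecting for small values gives level-$\alpha$ power $\Phi\bigl(\Phi^{-1}(\alpha)-h\sqrt{\lrv^4/(2\squaredomegas^4)}\bigr)$ in both frameworks, as claimed. One may even skip the explicit variance--covariance computation: once joint convergence with $\etaCS$ is in hand, the first part of \cref{equivalence} forces the power in the non-native framework to equal the already-known one, which pins down $\sigma_{\Delta,t}$. The main obstacle is the null joint-normality step, and within it the control of the two-stage estimation error --- replacing $\BNbfe[i]$ by $\bfe[i]$ for the BN statistics and handling the cross-sectional projection $Q_{\hat\gamma}$ applied to the \emph{levels} $Y_{\cdot,t}$ for the MP statistics --- so that these errors are negligible jointly with $\etaCS$; this uses \cref{assumption:rates} together with the estimator rates of \cref{assumption:estimators} and \cref{lem:FactEst}, and is the material we relegate to the supplementary appendix.
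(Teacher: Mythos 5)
Your proposal is correct and follows essentially the same route as the paper's proof: reduce each statistic, via adaptivity of the two-stage estimation and the convergence of the denominator, to the common pooled quantity $\frac{1}{\sqrt n T}\sum_i E_{-1,i}'\Delta E_i-\sqrt n\,\oslrv$ (your $\frac{2}{\lrv^2}\frac{1}{\sqrt n}\sum_i\xi_{i,T}$ is the paper's $\tilde P_a$ up to scaling), establish joint asymptotic normality with $\etaCS$ under the null, and conclude via \cref{equivalence} and Le Cam's third lemma, with the same covariance $\sigma_{\Delta,t}=\sqrt{\omega^4/(2\phi^4)}$. The only cosmetic differences are that the paper obtains the joint CLT by applying \cref{lemma:normality} with weights $a_{i,n,T}=\alpha\,\longrunvaretaT{i}/\sqrt{\phi^4/2}+\beta$ and Cram\'er--Wold, and handles the adaptivity/projection reductions by citing the corresponding lemmas of \cite{BaiNg2010} and \cite{MoonPerron2004} rather than re-deriving them.
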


\begin{remark}
It turns out that the powers are equal, no matter which test statistic and which framework is considered. We have discussed in some detail that, for a given test, the equality of powers across frameworks is a general phenomenon. The fact that in each framework, the power of the MP tests is equal to that of the BN tests, on the other hand, is a `coincidence'.
Originally, the MP tests have been developed for the MP experiment, whereas the BN tests are designed for the PANIC experiment. 
It has been noted in \cite{BaiNg2010} that the MP tests are valid in term of size in the PANIC setup for testing the idiosyncratic component of the innovation for a unit root but their (local and asymptotic) power in the PANIC framework has not been considered. More discussion on the use of the MP tests in the PANIC setup can be found in \cite{BaiNg2010} and \cite{GengenbachPalmUrbain2010}. Similarly, to the best of our knowledge there are no studies on the power of the BN tests in the MP framework.
\end{remark}
  The Cauchy-Schwarz inequality implies $\frac{\omega^4}{\phi^4} \leq  1$, thus Proposition~\ref{prop:local_powers_BN_and_MP} shows that, in general, the local asymptotic power of the MP and BN tests lies below the power envelope. In fact, they are all asymptotically UMP only when $\frac{\omega^4}{\phi^4} = 1$. This condition is satisfied when the long-run variances of the idiosyncratic shocks $\eta_{it}$ are homogeneous across $i$. The proposed test $\tUMP$ is asymptotically UMP irrespective of possible heterogeneity. In \cref{sec:MC} we assess whether the asymptotic power gains, compared to the MP and BN tests, are also reflected in finite samples for realistic parametric settings.

\section{Simulation results}\label{sec:MC}
This section reports the results of a Monte-Carlo study with three main goals: firstly, to assess the finite sample performance of our proposed test $\tUMP$, secondly, to see how the asymptotic equivalence between the \cite{MoonPerron2004} and PANIC setups is reflected in finite samples, and, finally, to check the robustness of our results to deviations from our assumptions. 
\subsection{The DGPs}
We generate the data from \cref{Zeq,Yeq,Eeq,Feq} with $m_i=0$.\footnote{Recall that our tests are invariant with respect to $m_i$.} Using sample sizes $n=25,50,100$ and $T=n,2n,4n$, we simulate both the MP and the PANIC setups. Recall that, for a local alternative $h$, we take $\rho=1+\frac{h}{\sqrt{n}T}$ in both setups. In the MP case we also set $\rho_k=\rho$, whereas in the PANIC case we set $\rho_k=1$ under the null and all alternatives. 
The factor loadings $\Lambda$ are drawn from a normal distribution with mean $K^{-1/2}$ and covariance matrix $K^{-1}I_K$.\footnote{As done in \cite{MoonPerron2004}, we scale by $\sqrt{K}$ to ensure the contribution of the factors is comparable across specifications.}  Most of the simulations are run with $K=1$ but we also explore what happens with more factors. Throughout this section we assume the number of factors to be known.\footnote{This number can be estimated consistently, so this makes no difference for the asymptotic analysis. See, for example, Section 2.3 in \cite{MoonPerron2004} and Section 5 in \cite{BaiNg2010} for a discussion of this issue.} For the innovation processes $f_{kt}$ and $\eta_{it}$ we examine Gaussian i.i.d., MA(1), and AR(1) processes. We fix the MA or AR parameter at 0.4 and set the variance such that the long-run variances of the $f_{kt}$ equal one, and the long-run variance of the $\eta_{it}$ is $\omega_i^2$. The $\omega_i^2$ are drawn i.i.d.\ from a lognormal distribution whose parameters are chosen to match different values of $\omega^4/\phi^4$ and a mean of one.\footnote{Recall from \cref{sec:tests_comparison} that the asymptotic relative efficiency of the existing tests compared to our UMP test depends on the heterogeneity of the long-run variances and more specifically on the ratio $\omega^4/\phi^4$. Therefore, the sample size at which it becomes worthwhile to estimate the heterogeneous long-run variances (i.e., use the asymptotically UMP tests suggested here) mainly depends on this ratio.  
 We present simulation results for $\sqrt{\omega^4/\phi^4}$ between 0.6 and 1, where lower values indicate more heterogeneity. A cursory look at a few typical applications reveals that these ratios are mostly between 0.6 and 0.8 and match the skewed nature of the lognormal distribution.}

\begin{figure}[!htb]
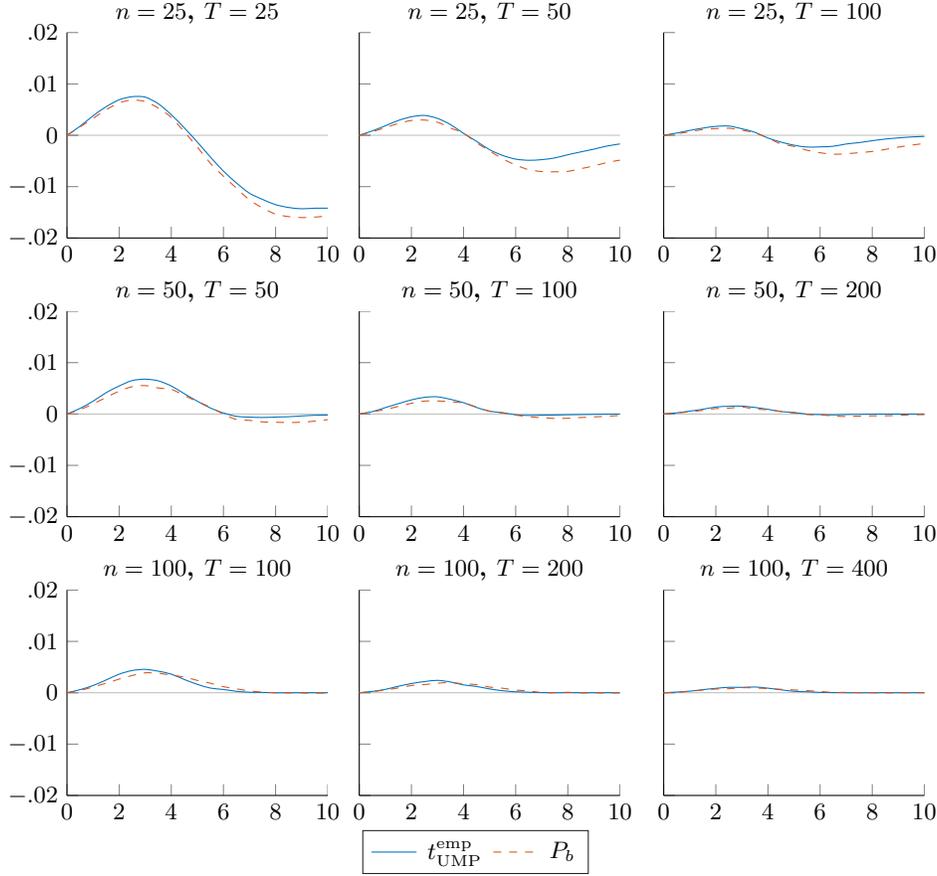
\footnotesize\begin{subfigure}[b]{0.36\textwidth}\input{graphs/stamps/NpowersMPDif25_25_1_0.8_iid_0.4_iid_0.4_MP_DifMP.tex}\end{subfigure}\begin{subfigure}[b]{0.32\textwidth}\input{graphs/stamps/NpowersMPDif50_25_1_0.8_iid_0.4_iid_0.4_MP_DifMP.tex}\end{subfigure}\begin{subfigure}[b]{0.32\textwidth}\input{graphs/stamps/NpowersMPDif100_25_1_0.8_iid_0.4_iid_0.4_MP_DifMP.tex}\end{subfigure}\\ \begin{subfigure}[b]{0.36\textwidth}\input{graphs/stamps/NpowersMPDif50_50_1_0.8_iid_0.4_iid_0.4_MP_DifMP.tex}\end{subfigure}\begin{subfigure}[b]{0.32\textwidth}\input{graphs/stamps/NpowersMPDif100_50_1_0.8_iid_0.4_iid_0.4_MP_DifMP.tex}\end{subfigure}\begin{subfigure}[b]{0.32\textwidth}\input{graphs/stamps/NpowersMPDif200_50_1_0.8_iid_0.4_iid_0.4_MP_DifMP.tex}\end{subfigure}\\ \begin{subfigure}[b]{0.36\textwidth}\input{graphs/stamps/NpowersMPDif100_100_1_0.8_iid_0.4_iid_0.4_MP_DifMP.tex}\end{subfigure}\begin{subfigure}[b]{0.32\textwidth}\input{graphs/stamps/NpowersMPDif200_100_1_0.8_iid_0.4_iid_0.4_MP_DifMP.tex}\end{subfigure}\begin{subfigure}[b]{0.32\textwidth}\input{graphs/stamps/NpowersMPDif400_100_1_0.8_iid_0.4_iid_0.4_MP_DifMP.tex}\end{subfigure}\\ \begin{subfigure}[b]{\textwidth}\centering\input{graphs/stamps/lNpowersMPDif400_100_1_0.8_iid_0.4_iid_0.4_MP_DifMP.tex}\end{subfigure}\normalsize\caption{\label{DifMPNpowersMPDifiidBN0.811}Difference between powers in the MP vs the PANIC framework as a function of $-h$ with i.i.d.\ factor innovations and i.i.d.\ idiosyncratic parts and $\sqrt{\omega^4/\phi^4}=0.8$. Based on \num{1000000} replications.}\end{figure}
\subsubsection{The test statistics}
In addition to the tests proposed in \cref{sec:UMPtest}, $\tUMP$ and $\tUMPEMP$, we consider the MP tests of \cite{MoonPerron2004} and the BN tests of \cite{BaiNg2010}. However, the powers and sizes of the (MP) $t_b$ and (BN) $P_b$ tests were very similar also in finite samples, so we only report results for $P_b$. We omit the comparison with $P_a$ and $t_a$ since they tend to show large biases in terms of size (see, for example, the Monte Carlo studies in \cite{GengenbachPalmUrbain2010} and \cite{BaiNg2010}).  


The sizes of all considered tests are highly sensitive to estimation of the (one-sided) long-run variances. 
We have considered a variety of methods, for example, using a Bartlett or quadratic spectral kernel and selection of the bandwidth according to the \cite{NeweyWest1994} or the \cite{Andrews1991} rule with/without various forms of prewhitening.  Whereas the differences from using different kernels are small, the selection of both the bandwidth and the prewhitening are essential. Our preferred method employs a Bartlett kernel with prewhitening.\footnote{As in \cite{MPP2014}, the prewhitening model is selected based on the BIC between four simple ARMA models. 
} There is a size-power tradeoff between using the \cite{Andrews1991} and the \cite{NeweyWest1994} bandwidth selection: The \cite{Andrews1991} bandwidth leads to higher powers for the smallest sample sizes, but an oversized test when the innovations have a strong MA component. The decision which bandwidth to use thus depends on the preferences of the researcher. In this section, all results are based on the \cite{Andrews1991} bandwidth. However, the sizes and powers based on the \cite{NeweyWest1994} bandwidth can be found in a supplementary appendix.

\begin{figure}[htb!]\footnotesize\begin{subfigure}[b]{0.3466\textwidth}\input{graphs/stamps/Npowers25_25_1_0.8_iid_0.4_iid_0.4_BN_.tex}\end{subfigure}\begin{subfigure}[b]{0.3267\textwidth}\input{graphs/stamps/Npowers50_25_1_0.8_iid_0.4_iid_0.4_BN_.tex}\end{subfigure}\begin{subfigure}[b]{0.3267\textwidth}\input{graphs/stamps/Npowers100_25_1_0.8_iid_0.4_iid_0.4_BN_.tex}\end{subfigure}\\ \begin{subfigure}[b]{0.3466\textwidth}\input{graphs/stamps/Npowers50_50_1_0.8_iid_0.4_iid_0.4_BN_.tex}\end{subfigure}\begin{subfigure}[b]{0.3267\textwidth}\input{graphs/stamps/Npowers100_50_1_0.8_iid_0.4_iid_0.4_BN_.tex}\end{subfigure}\begin{subfigure}[b]{0.3267\textwidth}\input{graphs/stamps/Npowers200_50_1_0.8_iid_0.4_iid_0.4_BN_.tex}\end{subfigure}\\ \begin{subfigure}[b]{0.3466\textwidth}\input{graphs/stamps/Npowers100_100_1_0.8_iid_0.4_iid_0.4_BN_.tex}\end{subfigure}\begin{subfigure}[b]{0.3267\textwidth}\input{graphs/stamps/Npowers200_100_1_0.8_iid_0.4_iid_0.4_BN_.tex}\end{subfigure}\begin{subfigure}[b]{0.3267\textwidth}\input{graphs/stamps/Npowers400_100_1_0.8_iid_0.4_iid_0.4_BN_.tex}\end{subfigure}\\ \begin{subfigure}[b]{\textwidth}\centering\input{graphs/stamps/lNpowers400_100_1_0.8_iid_0.4_iid_0.4_BN_.tex}\end{subfigure}\normalsize\caption{\label{NpowersiidBN0.811}Size-corrected power of unit-root tests as a function of $-h$ for varying sample sizes in the PANIC framework with i.i.d.\ factor innovations and i.i.d.\ idiosyncratic parts and $\sqrt{\omega^4/\phi^4}=0.8$. Based on \num{100000} replications.}\end{figure}
\subsection{Sizes}
\Cref{table:sizesAn1_0.4_0.4_BN_} reports the sizes of our tests for the baseline DGP based on the Andrews bandwidth.
Many other specifications can be found in the supplemental appendix. Recall that the sizes depend considerably on how the long-run variances are estimated. Using the method described above, the sizes of $\tUMPEMP$ reasonable across most DGPs and generally comparable to those of $P_b$. $\tUMP$, on the other hand, is undersized in many specifications, so that we focus on its empirical version $\tUMPEMP$ in the remainder. Only in the MA(1) example, both $\tUMPEMP$ and $P_b$ are oversized ($\tUMPEMP$ is more oversized for the smallest sample sizes and marginally less oversized in the larger ones). Thus, when a strong MA component is suspected, we recommend to use tests based on the \cite{NeweyWest1994} bandwidth. Generally, the  \cite{NeweyWest1994} bandwidth provides better sizes, especially in the MA case. However, small sample powers are slightly lower. Both sizes and powers based on the \cite{NeweyWest1994} bandwidth can be found in a supplementary appendix.

\sisetup{table-format=2.1}
\begin{table}
\resizebox{\textwidth}{!}{
\begin{tabular}{@{}S[table-figures-decimal=0,table-figures-integer=3]S[table-figures-decimal=0,table-figures-integer=3]S[table-figures-decimal=1,table-figures-integer=1]cSSScSSScSSS@{}}
\toprule
&&&&\multicolumn{3}{c}{i.i.d.}&&\multicolumn{3}{c}{AR(1)}&&\multicolumn{3}{c}{MA(1)}\\ \cmidrule{5-7} \cmidrule{9-11} \cmidrule{13-15}
{$n$} & {$T$} & {$\sqrt{\omega^4/\phi^4}$} & {} & {$\tUMP$} & {$\tUMPEMP$} & {$P_b$} & {} & {$\tUMP$} & {$\tUMPEMP$} & {$P_b$} & {} & {$\tUMP$} & {$\tUMPEMP$} & {$P_b$} \\
\midrule
25.0000 & 25.0000 & 0.6000 &  & 0.6278 & 2.8300 & 3.1393 &  & 1.7592 & 4.5324 & 4.1730 &  & 2.2143 & 6.9812 & 5.6133 \\
25.0000 & 50.0000 & 0.6000 &  & 1.3642 & 4.6906 & 4.0342 &  & 1.7387 & 4.8714 & 3.5933 &  & 3.0835 & 8.9345 & 6.2100 \\
25.0000 & 100.0000 & 0.6000 &  & 1.8153 & 5.4963 & 4.6073 &  & 2.3184 & 6.0900 & 4.1149 &  & 3.8533 & 10.0570 & 6.6694 \\
50.0000 & 50.0000 & 0.6000 &  & 1.9867 & 4.3321 & 3.6601 &  & 2.5053 & 4.4973 & 3.5272 &  & 5.2619 & 9.9448 & 6.6498 \\
50.0000 & 100.0000 & 0.6000 &  & 2.5939 & 5.1117 & 4.2180 &  & 2.8715 & 5.2431 & 3.7118 &  & 6.0827 & 10.9587 & 6.9966 \\
50.0000 & 200.0000 & 0.6000 &  & 2.9027 & 5.4544 & 4.5615 &  & 3.3695 & 5.9381 & 4.0929 &  & 5.2512 & 9.1976 & 6.0557 \\
100.0000 & 100.0000 & 0.6000 &  & 3.2141 & 4.9931 & 4.1720 &  & 3.3284 & 4.8677 & 3.8115 &  & 9.0711 & 13.1456 & 8.2002 \\
100.0000 & 200.0000 & 0.6000 &  & 3.5535 & 5.2922 & 4.4927 &  & 3.6555 & 5.2959 & 4.0748 &  & 6.9634 & 9.9569 & 6.6438 \\
100.0000 & 400.0000 & 0.6000 &  & 3.6374 & 5.3489 & 4.5417 &  & 4.3000 & 6.1066 & 4.4998 &  & 4.9484 & 7.0682 & 5.1413 \\
25.0000 & 25.0000 & 0.8000 &  & 0.8577 & 3.1353 & 3.4974 &  & 1.8084 & 4.2566 & 4.6914 &  & 2.3562 & 6.6509 & 6.3883 \\
25.0000 & 50.0000 & 0.8000 &  & 1.7769 & 5.0621 & 4.6238 &  & 1.6805 & 4.3540 & 3.9552 &  & 3.1463 & 8.2977 & 7.1776 \\
25.0000 & 100.0000 & 0.8000 &  & 2.2970 & 5.8434 & 5.2254 &  & 2.1766 & 5.2984 & 4.5579 &  & 3.8913 & 9.2820 & 7.7690 \\
50.0000 & 50.0000 & 0.8000 &  & 2.3612 & 4.6210 & 4.2170 &  & 2.3909 & 4.1594 & 4.2157 &  & 5.1408 & 9.2920 & 8.2863 \\
50.0000 & 100.0000 & 0.8000 &  & 3.0065 & 5.4040 & 4.8028 &  & 2.6241 & 4.6078 & 4.2616 &  & 5.8681 & 10.1062 & 8.4963 \\
50.0000 & 200.0000 & 0.8000 &  & 3.2812 & 5.6841 & 5.1922 &  & 3.0695 & 5.2160 & 4.6565 &  & 5.0026 & 8.3802 & 7.1121 \\
100.0000 & 100.0000 & 0.8000 &  & 3.4702 & 5.1371 & 4.6244 &  & 3.0940 & 4.4182 & 4.3760 &  & 8.7044 & 12.3456 & 10.3841 \\
100.0000 & 200.0000 & 0.8000 &  & 3.7985 & 5.4521 & 4.9804 &  & 3.3254 & 4.7271 & 4.5268 &  & 6.5692 & 9.2071 & 7.8790 \\
100.0000 & 400.0000 & 0.8000 &  & 3.9236 & 5.5008 & 5.0867 &  & 3.9261 & 5.4829 & 5.0494 &  & 4.7237 & 6.5561 & 5.8989 \\
25.0000 & 25.0000 & 1.0000 &  & 0.9962 & 3.3244 & 3.9258 &  & 1.8804 & 4.2743 & 5.4298 &  & 2.3944 & 6.5077 & 7.1917 \\
25.0000 & 50.0000 & 1.0000 &  & 2.0143 & 5.2155 & 5.1491 &  & 1.6980 & 4.2196 & 4.5150 &  & 3.2266 & 8.1170 & 8.2168 \\
25.0000 & 100.0000 & 1.0000 &  & 2.5867 & 5.9941 & 5.8380 &  & 2.1366 & 5.0396 & 5.0741 &  & 3.8652 & 8.9230 & 8.8367 \\
50.0000 & 50.0000 & 1.0000 &  & 2.4784 & 4.6938 & 4.5690 &  & 2.3690 & 4.0249 & 4.9541 &  & 5.1579 & 9.1845 & 10.0636 \\
50.0000 & 100.0000 & 1.0000 &  & 3.1383 & 5.4317 & 5.2045 &  & 2.5755 & 4.4297 & 4.7568 &  & 5.8210 & 9.8498 & 9.9993 \\
50.0000 & 200.0000 & 1.0000 &  & 3.4412 & 5.7348 & 5.5827 &  & 2.9529 & 4.9738 & 5.1350 &  & 4.9495 & 8.1587 & 8.1262 \\
100.0000 & 100.0000 & 1.0000 &  & 3.5983 & 5.2415 & 4.9421 &  & 2.9685 & 4.2383 & 4.9281 &  & 8.5661 & 12.0875 & 12.5631 \\
100.0000 & 200.0000 & 1.0000 &  & 3.9398 & 5.5272 & 5.3232 &  & 3.2433 & 4.5877 & 4.9112 &  & 6.4836 & 9.0147 & 9.1229 \\
100.0000 & 400.0000 & 1.0000 &  & 4.0304 & 5.5980 & 5.4843 &  & 3.8030 & 5.2766 & 5.4550 &  & 4.5878 & 6.3552 & 6.4043 \\
\midrule\multicolumn{3}{c}{Mean abs. dev. from 5\%}  &  &2.3077 &0.58924 &0.57321 &  &2.2752 &0.51656 &0.6014 &  &1.3964 &4.0578 &2.7073\\
\bottomrule
\end{tabular}}
\caption{Sizes (in percent) of nominal 5\% level tests  with no heterogeneity in the alternatives. Based on \num{1000000} replications. Andrews Bandwidth.}
\label{table:sizesAn1_0.4_0.4_BN_}
\end{table}

\begin{figure}[htb!]\footnotesize\begin{subfigure}[b]{0.3466\textwidth}\input{graphs/stamps/Npowers25_25_1_1_iid_0.4_iid_0.4_BN_Dif.tex}\end{subfigure}\begin{subfigure}[b]{0.3267\textwidth}\input{graphs/stamps/Npowers50_25_1_1_iid_0.4_iid_0.4_BN_Dif.tex}\end{subfigure}\begin{subfigure}[b]{0.3267\textwidth}\input{graphs/stamps/Npowers100_25_1_1_iid_0.4_iid_0.4_BN_Dif.tex}\end{subfigure}\\ \begin{subfigure}[b]{0.3466\textwidth}\input{graphs/stamps/Npowers50_50_1_1_iid_0.4_iid_0.4_BN_Dif.tex}\end{subfigure}\begin{subfigure}[b]{0.3267\textwidth}\input{graphs/stamps/Npowers100_50_1_1_iid_0.4_iid_0.4_BN_Dif.tex}\end{subfigure}\begin{subfigure}[b]{0.3267\textwidth}\input{graphs/stamps/Npowers200_50_1_1_iid_0.4_iid_0.4_BN_Dif.tex}\end{subfigure}\\ \begin{subfigure}[b]{0.3466\textwidth}\input{graphs/stamps/Npowers100_100_1_1_iid_0.4_iid_0.4_BN_Dif.tex}\end{subfigure}\begin{subfigure}[b]{0.3267\textwidth}\input{graphs/stamps/Npowers200_100_1_1_iid_0.4_iid_0.4_BN_Dif.tex}\end{subfigure}\begin{subfigure}[b]{0.3267\textwidth}\input{graphs/stamps/Npowers400_100_1_1_iid_0.4_iid_0.4_BN_Dif.tex}\end{subfigure}\\ \begin{subfigure}[b]{\textwidth}\centering\input{graphs/stamps/lNpowers400_100_1_1_iid_0.4_iid_0.4_BN_Dif.tex}\end{subfigure}\normalsize\caption{\label{DifNpowersiidBN111}(Size-corrected) power gains from using $\tUMPEMP$ over $P_b$ for varying values of $\sqrt{\omega^4/\phi^4}$ and sample sizes in the PANIC framework with i.i.d.\ factor innovations and i.i.d.\ idiosyncratic parts. Based on \num{100000} replications.}\end{figure}
	
\subsection{Powers}
We start this subsection by investigating the finite-sample differences between the MP and the PANIC setups. Recall that we have shown that the asymptotic, local power functions are the same and that (under some regularity conditions) all tests have the same asymptotic power in the MP framework as they do in the PANIC framework. \cref{DifMPNpowersMPDifiidBN0.811} compares the powers of $\tUMPEMP$ and $P_b$ across the two frameworks.
 Indeed, also in small samples the powers are very similar. Moreover, both a larger $n$ and a larger $T$ contribute to reduce the difference. When the factor is stationary under the hypothesis, the difference is considerably smaller still. Noting the small scale on the $y$ axis in these plots, in the remainder we will only present results for the PANIC framework, as the lines would otherwise be mostly indistinguishable.

We now turn to comparing the performance of the UMP tests to existing ones. As discussed in \cref{sec:UMPtest}, we need to estimate the individual long-run variance of each idiosyncratic part in order to attain the power envelope. Of course, this becomes easier with a larger time series dimension and is more beneficial when the long-run variances differ substantially between series.

\cref{NpowersiidBN0.811} presents the baseline power results for a medium amount of heterogeneity ($\sqrt{\omega^4/\phi^4}=0.8$). It is evident that even for relatively small samples using the optimal test pays off: except for $n=T=25$, the power of $\tUMPEMP$ is uniformly higher than that of $P_b$. 

%

Next, \cref{DifNpowersiidBN111} presents the power difference between the optimal test and $P_b$ for varying degrees of heterogeneity. As expected, the higher the amount of heterogeneity, the more beneficial it is to use the optimal test, also in finite samples. In the case of perfect homogeneity, the losses from estimating individual long-run variances are minor, except for the $n=T=25$ case.


In the supplemental appendix we investigate the effects of serial correlation and multiple factors. Qualitatively, the power results are not affected by these variations in the DGP. 
We also consider the robustness of our results to deviations of our assumptions:  we consider the power against heterogeneous alternatives and investigate the effects of non-Gaussian innovations.

\section{Conclusion and Discussion}\label{sec:concl}
This paper shows that the MP and PANIC frameworks are equivalent, for unit root testing, from a local and asymptotic point of view. Using the underlying LAN-result, the local asymptotic power envelope for the MP and PANIC frameworks readily follows. We show that the tests proposed in \cite{MoonPerron2004} and \cite{BaiNg2010} only attain this bound in case the long-run variances of the idiosyncratic component are sufficiently homogeneous. We develop an asymptotically uniformly most powerful test; a Monte Carlo study demonstrates that this test also improves on existing tests for finite-samples.

To obtain the local and asymptotic equivalence of the MP and PANIC frameworks, we need to impose some restrictions. First, we assume that the driving innovations are Gaussian. Second, we impose the deviations to the unit root, under the alternative hypothesis, to be the same for all panel units. And third, we do not allow for (incidental) trends. The Gaussianity facilitates a relatively easy proof of the LAN-result and it seems to be rather difficult to generalize this assumption; even for first-generation frameworks no results are available yet.  For the proposed asymptotically uniformly most powerful test, we stress that Gaussianity is not required for its validity.  In view of \cite{ BDvdA} we do not expect that imposing constant deviations to the unit root, under the alternative hypothesis, affects our main results. The Monte Carlo results seem to confirm this conjecture for finite-samples. To allow for incidental trends the proper strategy seems to be to first determine the maximal invariant (i.e. determine which part of the observations is invariant with respect to the incidental trends), and to analyze if the resulting maximal invariant satisfies a LAN-result (yielding the power envelope).  On basis of \cite{MoonPerronPhillips2007} we expect that the reduction of the data to the maximal invariant will result in a different localizing rate compared to the situation in which there are no incidental trends. This indicates that the generalization to incidental trends really requires a separate analysis.
\section*{References}
\bibliographystyle{ecta.bst}
\bibliography{papers}

\clearpage
\begin{appendices}
\crefalias{section}{appsec}
\section{Proof of Main Results}
\subsection{Preliminaries}\label{app:proof_prelims}
\noindent
This section present some preliminary results that are heavily exploited in the proofs of our main results.

First, we  recall some elementary  results from linear algebra (throughout we only consider real matrices); see, e.g.,  \cite{Lutkepohl1996}  and \cite{MagnusNeudecker1999}.
Let $\tr [C]$ denote the trace of a square, real matrix $C$ and let $\mineig{C}$ (and $\maxeig{C}$) denote the minimal (maximal) eigenvalue of a symmetric, real matrix $C$. For any real matrix $C$, let
$\Frob{C}= \sqrt{ \tr \left[ C\trans C\right]}=\Frob{C'}$ denote its Frobenius norm, while $\rnorm{C}= \sqrt{ \maxeig{C\trans C}}=\rnorm{C'}$ denotes its spectral norm. Recall $\rnorm{C}\le \Frob{C}$. 
%

The inequality $\Frob{CD}\leq \rnorm{C}\Frob{D}$ is immediate from Raleigh's quotient. It follows that the Frobenius is submultiplicative, $\Frob{C D}\le\Frob{C}\Frob{D}$. Moreover, the identity $\Frob{C\kron D}=\Frob{C}\Frob{D}$ easily follows from the alternative interpretation of the Frobenius norm being the square-root of the sum of all squared individual matrix entries.  
 Finally, we note that for square matrices $\langle C,D\rangle_F=\tr[C'D]$ defines an inner product, so  we have the Cauchy-Schwarz inequality $|\tr[C'D]|\le \Frob{C}\Frob{D}$.

%
Next, we present a general lemma on approximating variances with long-run variances. The results we present in this appendix are the main keys to many proofs in \cref{sec:limexp}.
Moreover, they may be of general interest.
\begin{lemma} \label{lemma:generalLRV}%
Consider an indexed collection of stationary time series $\{X^{(h)}_t\}$, $h\in\mathcal{H}$. 
Denote the $T\times T$ covariance matrix of $(X^{(h)}_1,\dots,X^{(h)}_T)$ by $\Sigma_h$, the $m$-th autocovariance of $\{X^{(h)}_t\}$ by $\autocorr_h(m)$, and its long run variance by $\omega_h^2<\infty$. Also write $\omega^2_{h,T}= \ones'\Sigma_h \ones/T$.  
If $\sup_{h\in\mathcal{H}} \sum_{\m=-\infty}^{\infty} (|\m|+1)|\autocorr_h(\m)|<\infty$, then 
\begin{enumerate}
	\item $\sup_{h\in\mathcal{H}} |\longrunvarT{h}-\longrunvar{h}|=O(T^{-1})$,\label{part:difLRV}
	\item $\sup_{h\in\mathcal{H}} \Frob{A'(\Sigma_h-\longrunvar{h}I_T)}+\sup_{h\in\mathcal{H}} \Frob{A(\Sigma_h-\longrunvar{h}I_T)}=O(\sqrt{T})$,\label{part:generalLRV}
	\item $\sup_{h\in\mathcal{H}} \Frob{A'(\Sigma_h-\longrunvarT{h}I_T)}+\sup_{h\in\mathcal{H}} \Frob{A(\Sigma_h-\longrunvarT{h}I_T)}=O(\sqrt{T})$,  \label{part:generalLRVT}
	\item $\sup_{h\in\mathcal{H}} \Frob{A'\Sigma_h}+\sup_{h\in\mathcal{H}} \Frob{A\Sigma_h}=O(T).$\label{part:generalOSLRV}
\end{enumerate}
\end{lemma}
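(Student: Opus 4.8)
The plan is to establish the four parts in order, obtaining the last three from the first two by the triangle inequality. Write $C:=\sup_{h\in\mathcal H}\sum_{m}(|m|+1)|\autocorr_h(m)|<\infty$; every bound below will be phrased through $C$ alone, so uniformity over $h\in\mathcal H$ is automatic.

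For \ref{part:difLRV}, I would use the elementary identity $\ones\trans\Sigma_h\ones=\sum_{|m|\le T-1}(T-|m|)\autocorr_h(m)$, which rearranges to $\longrunvarT{h}-\longrunvar{h}=-\sum_{|m|\ge T}\autocorr_h(m)-\tfrac1T\sum_{|m|\le T-1}|m|\,\autocorr_h(m)$. The second sum is $O(C/T)$ at once; for the first, bounding $|\autocorr_h(m)|\le\tfrac{|m|}{T}|\autocorr_h(m)|$ when $|m|\ge T$ gives $\bigl|\sum_{|m|\ge T}\autocorr_h(m)\bigr|\le\tfrac1T\sum_m|m|\,|\autocorr_h(m)|\le C/T$. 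Hence $\sup_h|\longrunvarT h-\longrunvar h|\le 2C/T=O(T^{-1})$.

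Part \ref{part:generalLRV} is the heart of the argument and the only place where genuine work is needed. The key point is that premultiplication by $\A$ (a cumulative-sum operator) turns the off-diagonal decay of $\Sigma_h-\longrunvar{h}\id[T]$ into column sums that are themselves tails of $\autocorr_h$, and it is this that produces the rate $\sqrt T$ rather than $T$. Concretely, a direct entry-wise computation shows $(\A(\Sigma_h-\longrunvar h\id[T]))_{st}=-\sum_{j\le t-s}\autocorr_h(j)-\sum_{j\ge t}\autocorr_h(j)$ when $t<s$, and $=\sum_{j=t-s+1}^{t-1}\autocorr_h(j)$ when $t\ge s$ (the latter a tail over indices $\ge1$). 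Setting $b_k:=\sum_{j\ge k}|\autocorr_h(j)|$ and $c_k:=\sum_{j\le-k}|\autocorr_h(j)|$, one has $\sum_{k\ge1}(b_k+c_k)=\sum_j|j|\,|\autocorr_h(j)|\le C$ and therefore $\sum_k b_k^2+\sum_k c_k^2\le C^2$; summing the squared entries over the two regions then bounds $\Frob{\A(\Sigma_h-\longrunvar h\id[T])}^2$ by a fixed multiple of $C^2T$, i.e.\ $\Frob{\A(\Sigma_h-\longrunvar h\id[T])}=O(\sqrt T)$. For the $\A\trans$ term I would use $\A\trans=J\A J$, where $J$ is the $T\times T$ reversal matrix, together with $\Frob{JXJ}=\Frob X$ and the symmetry of $\autocorr_h$, reducing it to the case just treated. (A variant kept in reserve: expand $\Sigma_h-\longrunvar h\id[T]=\sum_{0<|m|<T}\autocorr_h(m)(T_m-\id[T])-\bigl(\sum_{|m|\ge T}\autocorr_h(m)\bigr)\id[T]$ with $T_m$ the $m$-th Toeplitz shift, bound $\Frob{\A(T_m-\id[T])}\le\sqrt{2|m|T}$ by counting its $\pm1$ entries, and sum using $\sqrt{|m|}\le|m|+1$.)

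Parts \ref{part:generalLRVT} and \ref{part:generalOSLRV} then follow by the triangle inequality with $\Frob{\A}=\sqrt{T(T-1)/2}=O(T)$ and $\Frob{\A\trans}=\Frob{\A}$. For \ref{part:generalLRVT}, $\Frob{\A(\Sigma_h-\longrunvarT h\id[T])}\le\Frob{\A(\Sigma_h-\longrunvar h\id[T])}+|\longrunvar h-\longrunvarT h|\,\Frob{\A}=O(\sqrt T)+O(T^{-1})\,O(T)=O(\sqrt T)$ by \ref{part:difLRV} and \ref{part:generalLRV}. For \ref{part:generalOSLRV}, first $\sup_h\longrunvarT h\le\sup_h\longrunvar h+O(T^{-1})\le C+O(T^{-1})=O(1)$ since $\longrunvar h=\sum_m\autocorr_h(m)\le C$; then $\Frob{\A\Sigma_h}\le\Frob{\A(\Sigma_h-\longrunvarT h\id[T])}+\longrunvarT h\,\Frob{\A}=O(\sqrt T)+O(T)=O(T)$, and likewise for $\A\trans$. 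The main obstacle is thus entirely contained in \ref{part:generalLRV}: recognizing that $\A(\Sigma_h-\longrunvar h\id[T])$ has entries that are honest tails of $\autocorr_h$, and organizing the resulting double sum so that the $(|m|+1)$-weighted summability hypothesis delivers precisely $O(\sqrt T)$ and not $O(T)$.
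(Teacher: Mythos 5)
Your proposal is correct and takes essentially the same route as the paper: part (1) via the triangular-weight identity, part (2) by recognizing the entries of $A(\Sigma_h-\omega_h^2 I_T)$ as autocovariance tails and summing their squares using the $(|m|+1)$-weighted summability (the paper writes out the same double sum directly and arrives at the same bound of order $5TC^2$), and parts (3)--(4) by the triangle inequality with $\Frob{A}=O(T)$ together with the boundedness of the (approximate) long-run variances. The only differences are cosmetic: you reduce the $A\trans$ term to the $A$ term via the reversal matrix, where the paper simply records that the two Frobenius norms coincide, and your Toeplitz-shift decomposition is kept in reserve and not needed.
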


\begin{proof}
\label{proof:generalLRV}
\Cref{part:difLRV} follows from $\longrunvarT {h}=\frac{1}{T}\sum_{m<T}(T-|m|)\gamma_h(m)$ and $\longrunvar {h}=\sum_{m=-\infty}^\infty\gamma_h(m)$, so
\begin{align}
   |\longrunvarT{h}-\longrunvar{h}|
  &=\left|\frac{1}{T}\sum_{m=-\infty}^{\infty}(\min(|m|,T)\gamma_h(m)\right|,
\end{align}
which is indeed $O(T^{-1})$ uniformly in $h$.

For \cref{part:generalLRV}, tedious but elementary calculations yield 
\begin{align}
\MoveEqLeft\Frob{A(\Sigma_h-\omega_h^2I_T)}^2
	 =
	\Frob{A'(\Sigma_h-\omega_h^2I_T)}^2\\
	&=\sum_{s=1}^{T} \sum_{t=1}^{T}\left( \sum_{\m=s-t+1}^{T-t} \autocorr_h(\m)-\omega_h^2 1_{s<t} \right)^2 \\
	&= \sum_{s=1}^{T-1}\Bigg(\sum_{t=1}^{s}\left(\sum_{\m=s+1}^{T}\autocorr_h(\m-t)\right)^2 \\
	&\quad+\sum_{t=s+1}^{T}\left(\sum_{\m=-\infty}^{s}\autocorr_h(\m-t)+\sum_{\m=T+1}^{\infty}\autocorr_h(\m-t)\right)^2\Bigg)\\
	&=\sum_{s=1}^{T-1}\sum_{t=1}^{T-s}\left(\left(\sum_{\m=s}^{T-t}\autocorr_h(\m)\right)^2+\left(\sum_{\m=s}^{\infty}\autocorr_h(\m)+\sum_{\m=t}^{\infty}\autocorr_h(\m)\right)^2\right)\\
	&\le 5T\sum_{s=1}^{T}\left(\sum_{\m=s}^{\infty}|\autocorr_h(\m)|\right)^2\\
	&\le5T\left(  \sum_{\m=-\infty}^{\infty} |\autocorr_h(\m)|\right) \sum_{\m=1}^{\infty}\min(\m,T)|\autocorr_h(\m)|.
\end{align}
Taking suprema, \cref{part:generalLRV} follows immediately from this bound. \Cref{part:generalLRVT} follows by combining the first two parts and $\Frob{A}=\sqrt{\frac{T(T-1)}{2}}=O(T)$. The order on $\Frob{A}$ also yields
\begin{align}
  \sup_{h\in\mathcal{H}} \Frob{A'\Sigma_h}
  \le& \sup_{h\in\mathcal{H}} \Frob{A'(\Sigma_h-\omega_h^2I_T)}+
  \sup_{h\in\mathcal{H}} \omega_h^2\Frob{A'}\\
  =&O(\sqrt{T})+O(1)O(T).
\end{align}
Again, the second part of \cref{part:generalOSLRV} is analogous.
 \end{proof}

Recall the covariance matrices $\vareta$ and $\vareps$ and their rough approximations $\LReta$ and $\proxvareps$ defined in \cref{lemma:PANICCS,eqn:Omegavarepsilonprox}, respectively. The following three lemmas use \cref{lemma:generalLRV} to show that these approximations do work well when considering partial sums.

\begin{lemma}\label{lemma:eigenvalues}
Under \cref{partb}, $\rnorm{\vareta^{-1}}$, $\rnorm{ \LReta^{-1}}$, $\rnorm{\vareps^{-1}},$ and $\rnorm{\proxvareps^{-1}}$ are all $O(1)$ as $n,T\to\infty$.
\end{lemma}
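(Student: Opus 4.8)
The plan is to reduce all four bounds to the single uniform ellipticity condition $\inf_{i,T}\mineig{\Sigma_{\eta,i}}>0$ of \cref{partb}, using only elementary linear algebra: for a symmetric positive-definite matrix $C$ one has $\rnorm{C^{-1}}=\maxeig{C^{-1}}=1/\mineig{C}$; the minimal eigenvalue is monotone under addition of positive-semidefinite matrices, i.e.\ $C\preceq C'$ implies $\mineig{C}\le\mineig{C'}$; and $\mineig{C\kron D}=\mineig{C}\mineig{D}$ for symmetric positive-semidefinite $C,D$. Each of the four matrices in the statement is symmetric positive definite, so it suffices to bound its minimal eigenvalue away from zero uniformly in $n,T$.

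First I would treat $\vareta^{-1}$. Since $\vareta=\diag(\Sigma_{\eta,1},\dots,\Sigma_{\eta,n})$ is block diagonal, $\mineig{\vareta}=\min_{i\le n}\mineig{\Sigma_{\eta,i}}\ge\inf_{i,T}\mineig{\Sigma_{\eta,i}}>0$, whence $\rnorm{\vareta^{-1}}=1/\mineig{\vareta}=O(1)$. Next, for $\LReta^{-1}=\longrunvarmatrixeta^{-1}\kron I_T$ I would note $\mineig{\LReta}=\min_{i\le n}\longrunvaretaT{i}$ and invoke the Rayleigh-quotient estimate already recorded in the footnote to \cref{rem:bddLRV}, namely $\longrunvaretaT{i}=\ones'\Sigma_{\eta,i}\ones/T\ge\mineig{\Sigma_{\eta,i}}\ge\inf_{i,T}\mineig{\Sigma_{\eta,i}}>0$, giving $\rnorm{\LReta^{-1}}=O(1)$.

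For the remaining two I would exploit the additive structure $\vareps=\sum_{k=1}^K(\loading{k}\loading{k}\trans\kron\varF[,k])+\vareta$ from \cref{eqn:OmegaEpsilon} and $\proxvareps=(\Loading\longrunvarmatrixF\Loading\trans+\longrunvarmatrixeta)\kron I_T$ from \cref{eqn:Omegavarepsilonprox}. In each case the term added to (the Kronecker lift of) $\longrunvarmatrixeta$ or $\vareta$ is positive semidefinite: $\loading{k}\loading{k}\trans\kron\varF[,k]\succeq0$ as a Kronecker product of positive-semidefinite matrices, and $\Loading\longrunvarmatrixF\Loading\trans\succeq0$ because $\longrunvarmatrixF$ is diagonal with nonnegative entries. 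Hence $\mineig{\vareps}\ge\mineig{\vareta}$ and $\mineig{\proxvareps}\ge\mineig{\longrunvarmatrixeta\kron I_T}=\mineig{\LReta}$, so $\rnorm{\vareps^{-1}}$ and $\rnorm{\proxvareps^{-1}}$ are $O(1)$ by the first two cases.

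There is essentially no substantive obstacle here; the two points meriting a line of care are that the approximate idiosyncratic long-run variances $\longrunvaretaT{i}$ are uniformly bounded below (which is precisely the footnote computation in \cref{rem:bddLRV}), and that the positive-semidefiniteness of $\Loading\longrunvarmatrixF\Loading\trans$ relies only on nonnegativity of the $\longrunvarFT{k}$, not strict positivity, so the argument covers the PANIC case where some factor long-run variances may vanish.
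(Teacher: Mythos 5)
Your proof is correct and takes essentially the same route as the paper's: both reduce all four bounds to the uniform ellipticity condition $\inf_{i,T}\mineig{\vareta[,i]}>0$ via monotonicity of the minimal eigenvalue under positive-semidefinite perturbations, using $\vareps-\vareta\succeq 0$ and $\proxvareps-\LReta\succeq 0$. The only (harmless) difference is in bounding $\mineig{\LReta}=\min_{i}\longrunvaretaT{i}$: you apply the Rayleigh-quotient inequality directly at finite $T$, whereas the paper writes $\longrunvaretaT{i}\ge\inf_i\longrunvareta{i}-\sup_i|\longrunvaretaT{i}-\longrunvareta{i}|$ and invokes \cref{lemma:generalLRV} together with the footnote to \cref{rem:bddLRV}; your version is, if anything, slightly more direct.
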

\begin{proof}\label{lem:normsep}
Note that $\vareps-\vareta$ and $\proxvareps-\Psi_\eta$ are positive semidefinite. Hence $\mineig{\vareps}\geq\mineig{\vareta}  \geq \inf_{i,T}  \mineig{\vareta[,i]}>0$ and, using \cref{rem:bddLRV} (\cref{footnoteLRV}) and \cref{part:difLRV} of \cref{lemma:generalLRV}, 
\begin{align}
  \mineig{\proxvareps }\geq &\mineig{ \Psi_\eta}=\mineig{ \longrunvarmatrixeta\kron \id[T]}
  =\min_{i=1,\dots,n}\longrunvaretaT{i}\\
  \ge& \inf_{i\in\mathds{N}}\longrunvareta{i}
  -\sup_{i\in\mathds{N}}|\longrunvaretaT{i}-\longrunvareta{i}|
  \to \inf_{i\in\mathds{N}}\longrunvareta{i}>0.
\end{align}
This shows the boundedness of all four norms.
\end{proof}

\begin{lemma}
\label{lemma:etaLRV}
	Under \cref{partb} we have, as $n,T\to\infty$,
	\begin{align}
	 \Frob{  \calA\trans\left(  \vareta  -\LRetaT  \right) }
	+ \Frob{  \calA\left(   \vareta-\LRetaT   \right) }
	=O(\sqrt{nT})=o(\sqrt{n}T).
	\end{align}
\end{lemma}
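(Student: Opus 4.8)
The plan is to reduce the claim to the per-unit bound already contained in \cref{lemma:generalLRV} by exploiting the block-diagonal structure of the matrices involved. Recall $\calA = \id[n]\kron A$, $\vareta = \diag(\vareta[,1],\dots,\vareta[,n])$, and $\LRetaT = \longrunvarmatrixeta\kron\id[T] = \diag(\longrunvaretaT{1}\id[T],\dots,\longrunvaretaT{n}\id[T])$. Consequently $\calA\trans(\vareta-\LRetaT)$ is block diagonal with $i$-th diagonal block $A\trans(\vareta[,i]-\longrunvaretaT{i}\id[T])$, and likewise $\calA(\vareta-\LRetaT)$ has blocks $A(\vareta[,i]-\longrunvaretaT{i}\id[T])$. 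Since the squared Frobenius norm of a block-diagonal matrix is the sum of the squared Frobenius norms of its blocks, this gives
\[
\Frob{\calA\trans(\vareta-\LRetaT)}^2 = \sum_{i=1}^n \Frob{A\trans(\vareta[,i]-\longrunvaretaT{i}\id[T])}^2,
\qquad
\Frob{\calA(\vareta-\LRetaT)}^2 = \sum_{i=1}^n \Frob{A(\vareta[,i]-\longrunvaretaT{i}\id[T])}^2.
\]

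Next I would invoke \cref{lemma:generalLRV} with the abstract index set $\mathcal{H}=\mathbb{N}$ taken to be the panel index and $X^{(i)}_t = \eta_{it}$. Under \cref{partb}, the summability hypothesis $\sup_{i\in\mathbb{N}}\sum_{m=-\infty}^\infty (|m|+1)|\ACFgamma{i}(m)|<\infty$ required by that lemma is exactly \cref{bddACFEta}, and the $T\times T$ covariance matrix of $(\eta_{i1},\dots,\eta_{iT})$ is $\vareta[,i]$ with $\ones'\vareta[,i]\ones/T = \longrunvaretaT{i}$. Hence \cref{part:generalLRVT} of \cref{lemma:generalLRV} yields
\[
\sup_{i\in\mathbb{N}} \Frob{A\trans(\vareta[,i]-\longrunvaretaT{i}\id[T])} + \sup_{i\in\mathbb{N}} \Frob{A(\vareta[,i]-\longrunvaretaT{i}\id[T])} = O(\sqrt{T}).
\]

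Plugging this uniform bound into the two displayed sums, each of the $n$ summands is $O(T)$ uniformly in $i$, so $\Frob{\calA\trans(\vareta-\LRetaT)}^2 = O(nT)$ and $\Frob{\calA(\vareta-\LRetaT)}^2 = O(nT)$, i.e. $\Frob{\calA\trans(\vareta-\LRetaT)} + \Frob{\calA(\vareta-\LRetaT)} = O(\sqrt{nT})$. Finally, $\sqrt{nT} = o(\sqrt{n}T)$ since $\sqrt{nT}/(\sqrt{n}T) = T^{-1/2}\to 0$ as $T\to\infty$, which completes the argument.

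There is no genuine obstacle here; the proof is bookkeeping. The only points that warrant care are (i) checking that \cref{partb}, via \cref{bddACFEta}, supplies precisely the \emph{uniform-in-$i$} summability needed to apply \cref{lemma:generalLRV} with the panel index playing the role of the abstract index, and (ii) using additivity of the \emph{squared Frobenius} norm across the block-diagonal structure (rather than the spectral norm, whose sub-multiplicativity would not capture the correct dependence on $n$). Everything else is immediate.
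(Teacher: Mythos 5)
Your proof is correct and is essentially the paper's own argument: both exploit block diagonality to write the squared Frobenius norm as a sum over the $n$ per-unit blocks, bound each block uniformly in $i$ via \cref{part:generalLRVT} of \cref{lemma:generalLRV} (whose summability hypothesis is supplied by \cref{bddACFEta}), and conclude $O(\sqrt{nT})=o(\sqrt{n}T)$. No differences worth noting.
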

\begin{proof}
	Using block diagonality and \cref{lemma:generalLRV}, we obtain the bound
	\begin{align}
	\Frob{  \calA\trans\left( \vareta  -\LRetaT    \right) }^2&=\sum_{i=1}^n \Frob{ \A\trans (\vareta[,i]- \longrunvaretaT{i} \id[T])    }^2\\
	&\le n \sup_{i\in \mathds{N}}\Frob{ \A\trans (\vareta[,i]-\longrunvaretaT{i} \id[T])}^2
	=O(nT).
	\end{align}
	The other part is analogous; every $\calA'$ and $A'$ are replaced by $\calA$ and $A$, respectively.
\end{proof}
\begin{lemma}
\label{lemma:specificLRV}
	Under \cref{assumption:serial correlation,assumption:rates,assumption:factor_loadings} we have, as $n,T\to\infty$,
	\begin{align}
	 \Frob{  \calA\trans\left(\vareps-   \proxvarepsT  \right) }
	+ \Frob{  \calA\left(  \vareps  -\proxvarepsT  \right) }
	=O(n\sqrt{T})=o(\sqrt{n} T).
	\end{align}
\end{lemma}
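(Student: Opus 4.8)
The plan is to reduce the claim to the preliminary \cref{lemma:generalLRV,lemma:etaLRV} by exploiting the Kronecker structure of both $\vareps$ and $\proxvarepsT$. First I would rewrite $\proxvarepsT$ in a form parallel to \cref{eqn:OmegaEpsilon}: since $\longrunvarmatrixF=\diag(\longrunvarFT1,\dots,\longrunvarFT K)$ we have $\Loading\longrunvarmatrixF\Loading\trans=\sum_{k=1}^K\longrunvarFT k\,\loading k\loading k\trans$, and $\longrunvarmatrixeta\kron I_T=\LRetaT$, so \cref{eqn:Omegavarepsilonprox} reads
\begin{align}
\proxvarepsT=\sum_{k=1}^K \loading k\loading k\trans\kron\bigl(\longrunvarFT k\, I_T\bigr)+\LRetaT.
\end{align}
Subtracting this from \cref{eqn:OmegaEpsilon} yields the clean decomposition
\begin{align}\label{eqn:decompPlan}
\vareps-\proxvarepsT=\sum_{k=1}^K \loading k\loading k\trans\kron\bigl(\varF[,k]-\longrunvarFT k\, I_T\bigr)+\bigl(\vareta-\LRetaT\bigr).
\end{align}

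Next, since $\calA=I_n\kron A$, the mixed-product property of the Kronecker product turns $\calA\trans$ times each summand of \cref{eqn:decompPlan} into $\loading k\loading k\trans\kron A\trans(\varF[,k]-\longrunvarFT k I_T)$, so by the triangle inequality and the identity $\Frob{B\kron C}=\Frob B\,\Frob C$,
\begin{align}
\Frob{\calA\trans(\vareps-\proxvarepsT)}\le\sum_{k=1}^K \Frob{\loading k\loading k\trans}\,\Frob{A\trans(\varF[,k]-\longrunvarFT k I_T)}+\Frob{\calA\trans(\vareta-\LRetaT)}.
\end{align}
For the first sum, $\Frob{\loading k\loading k\trans}=\loading k\trans\loading k=(\Loading\trans\Loading)_{kk}=O(n)$ by \cref{assumption:factor_loadings}, while \cref{part:generalLRVT} of \cref{lemma:generalLRV}, applied to the finite family $\{f_{kt}\}_{k=1,\dots,K}$ (whose autocovariances satisfy the summability hypothesis by \cref{assumption:serial correlation}(a)), gives $\Frob{A\trans(\varF[,k]-\longrunvarFT k I_T)}=O(\sqrt T)$; hence each term is $O(n\sqrt T)$, and since $K$ is fixed the whole sum is $O(n\sqrt T)$. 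The last term is $O(\sqrt{nT})$ by \cref{lemma:etaLRV}, and $\sqrt{nT}\le n\sqrt T$, so $\Frob{\calA\trans(\vareps-\proxvarepsT)}=O(n\sqrt T)$. The bound for $\Frob{\calA(\vareps-\proxvarepsT)}$ is obtained identically, replacing every $A\trans$ by $A$ and invoking the second halves of \cref{lemma:generalLRV,lemma:etaLRV}. Finally, $n\sqrt T/(\sqrt n T)=\sqrt{n/T}\to0$ by \cref{assumption:rates}, which upgrades $O(n\sqrt T)$ to $o(\sqrt n T)$.

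I do not anticipate a real obstacle: all the analytic content sits in the preliminary lemmas, and the remaining effort is bookkeeping — getting the Kronecker algebra right in \cref{eqn:decompPlan} and tracking which half of \cref{lemma:generalLRV} and \cref{lemma:etaLRV} applies to $\calA\trans$ versus $\calA$. The one point worth a sentence in the writeup is that \cref{lemma:specificLRV} does not assume \cref{assumption:framework}, so in particular the factors may be over-differenced and $\longrunvarF k$ may vanish; this is harmless, because \cref{lemma:generalLRV} only uses summability of the autocovariances, not positivity of the long-run variance.
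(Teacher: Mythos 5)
Your proposal is correct and follows essentially the same route as the paper's proof: the same decomposition of $\vareps-\proxvarepsT$ into the factor terms $\loading{k}\loading{k}\trans\kron(\varF[,k]-\longrunvarFT{k}\id[T])$ plus $\vareta-\LRetaT$, with the factor part bounded via $\Frob{\loading{k}\loading{k}\trans}=O(n)$ and \cref{part:generalLRVT} of \cref{lemma:generalLRV}, and the idiosyncratic part handled by \cref{lemma:etaLRV}. The only (harmless) difference is cosmetic: you make the Kronecker mixed-product step and the final rate comparison $n\sqrt{T}=o(\sqrt{n}T)$ under \cref{assumption:rates} explicit, which the paper leaves implicit.
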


\begin{proof}\label{proof:specificLRV}
	From the definitions of $\vareps$ and $\proxvarepsT$ we obtain
	\begin{align}
	\calA\trans\left(\vareps - \proxvarepsT \right)=\sum_{k=1}^K \calA\trans\left( \loading{k}\loading{k}\trans \kron \left( \varF[,k]  -\longrunvarFT{k}\id[T]\right)
	\right)+\calA\trans \left( \vareta - \longrunvarmatrixetaT \kron \id[T] \right),
	\end{align}
	which yields the bound $\Frob{ \calA\trans\left(\vareps - \proxvarepsT \right) }\leq I+II$
	with
	\begin{align}
	I&= \sum_{k=1}^K \Frob{  \left( \loading{k}\loading{k}\trans \kron  \A\trans \left( \varF[,k]  -\longrunvarFT{k}\id[T]\right)
	\right)  } \text{ and }
	II= \Frob{ \calA\trans \left( \vareta - \longrunvarmatrixetaT \kron \id[T] \right) }.
	\end{align}
	
Part $II$ is already treated in \cref{lemma:etaLRV}. For part $I$, again using \cref{lemma:generalLRV}, we  get a slightly weaker bound since for the factor part there is no block diagonality:
	\begin{align}
	I=& \sum_{k=1}^K \Frob{  \loading{k}\loading{k}\trans }  \Frob{\A\trans \left( \varF[,k]  -\longrunvarFT{k}\id[T]\right)}\\
	  \leq& \sum_{k=1}^K   \loading{k}\trans\loading{k}
	 \Frob{\A\trans \left( \varF[,k]  -\longrunvarFT{k}\id[T]\right)}
		=O(n\sqrt{T})
		=o(\sqrt{n}T).
	\end{align}
	The proof for $\Frob{  \calA\left(  \vareps -\proxvarepsT   \right) }$ is analogous.
	\end{proof}

We now present a general weak convergence result for partial sums using joint asymptotics. \Cref{prop:limitexperiment} is a special case of \cref{lemma:normality} with $a_{i,n,T}=1$. We provide  \cref{lemma:normality} in general terms here as it might be of independent interest and we also use it in the proof of \cref{prop:local_powers_BN_and_MP} to demonstrate the joint convergence of $P_a$ and the local likelihood ratio.
\begin{lemma}\label{lemma:normality}
Let $a_{i,n,T}$ be a bounded sequence of non-random numbers and $ \frac{1}{n}\sum_{i=1}^n a_{i,n,T}^2 \to \alpha$. Then, under  $\lawMP$ or $\lawPANIC$, as $\jointlimits$,
\begin{align}\frac{1}{\sqrt n}\sum_{i=1}^n \frac{a_{i,n,T}}{\longrunvaretaT{i}}
\left( \frac{1}{T} \sum_{t=1}^T\sum_{s=1}^{t-1} \eta_{is} \eta_{it}
-
\oslongrunvareta{i}
\right) \vto N(0, \alpha/2 ).\end{align}
\end{lemma}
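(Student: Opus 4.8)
The plan is to read the left-hand side as a central limit theorem for a row-independent triangular array. Under both $\lawMP$ and $\lawPANIC$ the process $\{\eta_{it}\}$ has the law prescribed by \cref{partb}, so the vectors $\eta_i=(\eta_{i1},\dots,\eta_{iT})\trans$ are independent across $i$ with $\eta_i\sim N(0,\vareta[,i])$. Since $\sum_{t=1}^T\sum_{s=1}^{t-1}\eta_{is}\eta_{it}=\sum_{1\le s<t\le T}\eta_{is}\eta_{it}=\eta_i'A\eta_i$, I would introduce
\begin{align}
 W_{i,n,T}:=\frac{a_{i,n,T}}{\longrunvaretaT{i}}\left(\frac1T\eta_i'A\eta_i-\oslongrunvaretaT{i}\right),\qquad i=1,\dots,n,
\end{align}
which are independent across $i$ and \emph{exactly} centred, because $\expec[\tfrac1T\eta_i'A\eta_i]=\tfrac1T\tr[A\vareta[,i]]=\oslongrunvaretaT{i}$. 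The sum in the statement equals $\tfrac1{\sqrt n}\sum_{i=1}^n W_{i,n,T}$ plus the deterministic remainder $\tfrac1{\sqrt n}\sum_{i=1}^n \tfrac{a_{i,n,T}}{\longrunvaretaT{i}}\bigl(\oslongrunvaretaT{i}-\oslongrunvareta{i}\bigr)$; by an argument identical to \cref{part:difLRV} of \cref{lemma:generalLRV} applied to one-sided sums one has $\oslongrunvaretaT{i}-\oslongrunvareta{i}=O(1/T)$ uniformly in $i$, and $a_{i,n,T}/\longrunvaretaT{i}$ is uniformly bounded (boundedness of $a_{i,n,T}$ and \cref{rem:bddLRV}), so this remainder is $O(\sqrt n/T)=o(1)$ by \cref{assumption:rates}. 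Hence it suffices to show $\tfrac1{\sqrt n}\sum_{i=1}^n W_{i,n,T}\vto N(0,\alpha/2)$.

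The next step is the variance computation. Using $A+A'=\ones\ones'-I_T$ gives the convenient identity $\tfrac1T\eta_i'A\eta_i=\tfrac12\,T^{-1}(\ones'\eta_i)^2-\tfrac12\,T^{-1}\eta_i'\eta_i$. As $\ones'\eta_i\sim N(0,T\longrunvaretaT{i})$, the Gaussian fourth-moment formula yields $\var\bigl(T^{-1}(\ones'\eta_i)^2\bigr)=2\longrunvaretaT{i}^2$, whereas $\var\bigl(T^{-1}\eta_i'\eta_i\bigr)=2T^{-2}\tr[\vareta[,i]^2]$ and $\cov\bigl(T^{-1}(\ones'\eta_i)^2,T^{-1}\eta_i'\eta_i\bigr)=2T^{-2}\ones'\vareta[,i]^2\ones$, both of which are $O(1/T)$ uniformly in $i$, since $\tr[\vareta[,i]^2]$ and $\ones'\vareta[,i]^2\ones=\Frob{\vareta[,i]\ones}^2$ are $O(T)$ uniformly by \cref{bddACFEta}. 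Therefore $\var\bigl(\tfrac1T\eta_i'A\eta_i\bigr)=\tfrac12\longrunvaretaT{i}^2+O(1/T)$, and since $a_{i,n,T}$ is bounded and $\longrunvaretaT{i}$ is bounded away from $0$ (\cref{rem:bddLRV}),
\begin{align}
 \var W_{i,n,T}=\tfrac12 a_{i,n,T}^2+O(1/T),\qquad \frac1n\sum_{i=1}^n\var W_{i,n,T}\;\longrightarrow\;\frac\alpha2 .
\end{align}

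Finally I would apply the Lyapunov central limit theorem to the row-independent array $\{n^{-1/2}W_{i,n,T}\}_{i=1}^n$. Its Lyapunov condition with exponent $4$ reads $n^{-2}\sum_{i=1}^n\expec W_{i,n,T}^4\to0$, which holds as soon as $\expec W_{i,n,T}^4$ is bounded uniformly in $i$. But $W_{i,n,T}$ is a centred quadratic form in a Gaussian vector, and any such form $W$ satisfies $\expec W^4\le 15(\expec W^2)^2$ (diagonalise, writing $W=\sum_j\mu_j(Z_j^2-1)$ with $Z_j\sim N(0,1)$ i.i.d.\ and using $\expec(Z^2-1)^4=60$ together with $\sum_j\mu_j^4\le(\sum_j\mu_j^2)^2$), so $\expec W_{i,n,T}^4\le 15(\var W_{i,n,T})^2=O(1)$. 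Combined with the variance limit above, the Lyapunov CLT gives $\tfrac1{\sqrt n}\sum_{i=1}^n W_{i,n,T}\vto N(0,\alpha/2)$, and the claim follows by Slutsky's lemma together with the reduction in the first paragraph.

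The step I expect to be the main obstacle is the uniform variance expansion $\var\bigl(\tfrac1T\eta_i'A\eta_i\bigr)=\tfrac12\longrunvaretaT{i}^2+O(1/T)$: the contributions of order $T^2$ must cancel exactly, so a crude bound using $\rnorm{A}=O(T)$ would be hopelessly lossy, and one has to exploit the identity $A+A'=\ones\ones'-I_T$ and the Gaussian moment formula while tracking uniformity in $i$ through \cref{bddACFEta}. Everything else — the centering correction and the Lyapunov CLT — is routine.
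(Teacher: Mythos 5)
Your proof is correct, and it reaches the conclusion by a somewhat different route than the paper. The paper also starts from $A+A'=\ones\ones'-I_T$, but it uses this identity at the distributional level: the statistic is split into $\tfrac{1}{2\sqrt n}\sum_i a_{i,n,T}\bigl((\ones'\eta_i)^2/(T\longrunvaretaT{i})-1\bigr)$, a (weighted) sum of i.i.d.\ centred $\chi^2(1)$ variables whose law does not depend on $T$, plus a mean-zero remainder built from $\tfrac1T\eta_i'\eta_i-\ACFgamma{i}(0)$ whose variance is $O(1/T)$; the CLT (Serfling's double-array version for general weights, with Lindeberg immediate for weighted centred $\chi^2$'s) is then applied to the chi-square part only, so no higher-moment computation is needed. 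You instead apply the Lyapunov CLT directly to the full centred quadratic forms $W_{i,n,T}$, verifying the fourth-moment condition through the bound $\expec W^4\le 15(\expec W^2)^2$ for centred Gaussian quadratic forms, and you use the identity only inside the variance computation $\var W_{i,n,T}=\tfrac12 a_{i,n,T}^2+O(1/T)$; that computation, including the exact cancellation of the $T^2$-order terms and the uniformity in $i$ obtained from \cref{bddACFEta}, is correct, as is the moment inequality. Your version has the small advantage of making explicit the replacement of $\oslongrunvareta{i}$ by the exact mean $\oslongrunvaretaT{i}$, an $O(\sqrt n/T)=o(1)$ adjustment that the paper's displayed argument leaves implicit, while the paper's version avoids any fourth-moment bookkeeping by exploiting the exact $\chi^2$ representation of the leading term. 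The only loose end on your side is the degenerate case $\alpha=0$, where the Lyapunov ratio is not defined; there the claim follows trivially from Chebyshev's inequality since the variance of the normalized sum tends to zero, so a one-line remark suffices.
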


\begin{proof}
First consider the case of $a_{i,n,T}$ being identically equal to one and observe that this implies convergence of $\etaCS$.
Recall $A+A'= \ones\ones'-I_T$ and $2\oslongrunvaretaT{i}=\longrunvaretaT{i}-\ACFgamma{i}(0)$, hence, with $\longrunvaretaT{i}=\frac{1}{T}\ones'\Sigma_{\eta,i}\ones$,
\begin{align}
  \etaCS=&\frac{1}{\sqrt{n}T}\sum_{i=1}^{n}\frac{1}{\longrunvaretaT{i}}\eta_i'\frac{A+A'}{2}\eta_i-\frac{1}{\sqrt{n}}\sum_{i=1}^{n}\frac{\oslongrunvaretaT{i}}{\longrunvaretaT{i}}\\
  =&\frac{1}{2\sqrt{n}}\sum_{i=1}^{n}\left(\left(\frac{\ones'\eta_i}{\sqrt{T}\longrunvaretaTroot{i}}\right)^2-1\right)
  -\frac{1}{2\sqrt{n}}\sum_{i=1}^{n}\frac{1}{\longrunvaretaT{i}}\left(\frac{1}{T}\eta_i'\eta_i-\ACFgamma{i}(0)\right).
\end{align}
Observe that $X_{i,T}:=\frac{\ones'\eta_i}{\sqrt{T\longrunvaretaT{i}}}\sim N(0,1)$ and are independent across $i\in\mathds{N}$. Thus, for each $T$, $\frac{1}{\sqrt{2n}}\sum_{i=1}^{n}(X_{i,T}^2-1)$  has the same distribution as $\frac{1}{\sqrt{2n}}\sum_{i=1}^{n}(X_i^2-1)$, where $X_i^2\overset{iid}{\sim}\chi^2(1)$. Therefore, as the latter converges to a standard normal distribution as $n\to\infty$ (CLT), so does the former under joint limits. Thus, the first, leading term converges in distribution to  $N(0,1/2)$.

Asymptotic negligibility of the second, mean-zero term follows from
\begin{align}
  \sup_i \var(\frac{1}{T}\eta_i'\eta_i)=&\frac{2}{T^2}\sup_i \tr[\vareta[,i]^2]=\frac{2}{T^2}\sup_i\Frob{\vareta[,i]}^2\\
  =&\frac{2}{T}\sup_i \left|\sum_{m=-(T-1)}^{T-1} (1-\frac{|m|}{T})\ACFgamma{i}^2(m)\right|
  =O(T^{-1}).
\end{align}

For general $a_{i,n,T}$ we can apply a double array CLT, see 1.9.3 in \cite{serfling1980approximation}, to the first (slightly adapted) term in the expansion. The Lindeberg condition is readily verified since we have a weighted sum of i.i.d.\ centered $\chi^2$ variables. Asymptotic negligibility of the second remainder term follows from the boundedness condition on the $a_{i,n,T}$. 
\end{proof}
\begin{remark}\label{remarkGaussianity}
We can obtain the same conclusion without requiring Gaussian innovations: as long as the Lindeberg condition holds, for example thanks to higher moment conditions, the same Theorem 1.9.3 of \cite{serfling1980approximation} applies.  
\end{remark}

We conclude this subsection by taking care of important terms that appear repeatedly in the remainder.
\begin{lemma}
	\label{lemma:extrasFeasible}
Suppose that Assumptions~\ref{assumption:serial correlation}-\ref{assumption:rates} hold. Then, under $\lawMP$ or $\lawPANIC$ and as $n,T \to \infty$, we have
	\begin{enumerate}	\item $  \Frob{ \left(\frac{1}{n} \Loading\trans \longrunvarmatrixeta^{-1}\Loading\right)^{-1}}=O(1)$,\label{lem:bddFacInv}
	\item $\Frob{\sum_{t=2}^{T}\eta_{\cdot, t}}=O_P(\sqrt{nT})$,\label{sumeta}
	\item $\Frob{\sum_{t=2}^T f_{\cdot,t}}=O_P(\sqrt{T})$,\label{sumf}
		 \item \label{sumLoadingeta}$\Frob{\iota'\tilde\eta \longrunvarmatrixeta^{-1} \Loading} = O_P(\sqrt {n T})$, and
	\item \label{eta1Loading}$\Frob{\tilde\eta \longrunvarmatrixeta^{-1} \Loading}=O_P(\sqrt{nT})$.
	\end{enumerate}
		\end{lemma}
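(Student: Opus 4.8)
The plan is to prove the five claims mostly independently, exploiting the norm inequalities from the preliminaries (submultiplicativity of $\Frob{\cdot}$, $\Frob{CD}\le\rnorm{C}\Frob{D}$, $\Frob{C\kron D}=\Frob{C}\Frob{D}$) together with \cref{lemma:generalLRV} and \cref{lemma:eigenvalues}. For \cref{lem:bddFacInv}, I would write $\frac1n\Loading\trans\longrunvarmatrixeta^{-1}\Loading = \frac1n\Loading\trans\Loading + \frac1n\Loading\trans(\longrunvarmatrixeta^{-1}-\lambda I_n)\Loading$ is not quite the right split; instead argue directly that $\mineig{\frac1n\Loading\trans\longrunvarmatrixeta^{-1}\Loading}\ge (\max_i\longrunvaretaT{i})^{-1}\mineig{\frac1n\Loading\trans\Loading}$, which by \cref{assumption:factor_loadings} converges to $(\max_i\longrunvaretaT{i})^{-1}\mineig{\limitfactorloadings}>0$, while $\max_i\longrunvaretaT{i}$ is uniformly bounded by \cref{rem:bddLRV} and \cref{part:difLRV} of \cref{lemma:generalLRV}. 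Hence the smallest eigenvalue of the $K\times K$ matrix is bounded away from zero for $n,T$ large, so the inverse has $\rnorm{\cdot}=O(1)$, and since $K$ is fixed, $\Frob{\cdot}=O(1)$ as well.

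For \cref{sumeta} and \cref{sumf}: here $\sum_{t=2}^T\eta_{\cdot,t}=\tilde\eta\trans\iota^{(-1)}$ where $\iota^{(-1)}$ is the $T$-vector with a zero in the first coordinate and ones elsewhere (equivalently, $\iota$ minus the first basis vector); its squared Frobenius norm is $\sum_{i=1}^n \big(\sum_{t=2}^T\eta_{it}\big)^2$, a sum of $n$ terms each with mean zero (by stationarity and zero mean) and variance $O(T)$ uniformly in $i$ by the same autocovariance-summability computation as in the proof of \cref{lemma:normality} (namely $\var(\sum_{t=2}^T\eta_{it})\le T\sum_m|\gamma_{\eta,i}(m)|$, uniformly bounded by \cref{bddACFEta}). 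By Markov's inequality the whole sum is $O_P(nT)$, giving $\Frob{\sum_{t=2}^T\eta_{\cdot,t}}=O_P(\sqrt{nT})$; \cref{sumf} is the $K$-factor analogue with $n$ replaced by the fixed $K$, and uses ergodic-stationarity / summability of $\gamma_{f,k}$ from \cref{assumption:serial correlation}(a).

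For \cref{sumLoadingeta} and \cref{eta1Loading}: write $\tilde\eta\longrunvarmatrixeta^{-1}\Loading$ and note $\Frob{\tilde\eta\longrunvarmatrixeta^{-1}\Loading}\le \rnorm{\longrunvarmatrixeta^{-1}}\,\Frob{\tilde\eta\Loading}$ is too lossy; better to use $\Frob{\tilde\eta\longrunvarmatrixeta^{-1}\Loading}\le \sum_{k}\Frob{\tilde\eta\longrunvarmatrixeta^{-1}\loading{k}}$ and control each $\Frob{\tilde\eta\longrunvarmatrixeta^{-1}\loading{k}}^2 = \sum_{t=2}^T\big(\sum_i \eta_{it}\,\loading{ki}/\longrunvaretaT{i}\big)^2$; each summand over $t$ has mean zero and variance $\sum_{i,j}(\loading{ki}\loading{kj}/(\longrunvaretaT{i}\longrunvaretaT{j}))\gamma_{\eta,i}(0)1_{i=j}$ by cross-sectional independence of the $\{\eta_{it}\}$, which is $O(n)$ using boundedness of the loadings (\cref{assumption:factor_loadings}) and of $\longrunvaretaT{i}$ from below and $\gamma_{\eta,i}(0)$ from above; summing over $t$ gives $O(nT)$ in expectation and hence $\Frob{\tilde\eta\longrunvarmatrixeta^{-1}\loading{k}}=O_P(\sqrt{nT})$, and summing the fixed number $K$ of terms yields \cref{eta1Loading}. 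For \cref{sumLoadingeta}, $\iota\trans\tilde\eta\longrunvarmatrixeta^{-1}\Loading$ is a $K$-vector whose $k$-th entry is $\sum_i(\iota\trans\eta_i)\loading{ki}/\longrunvaretaT{i}$; its variance is $\sum_i (\loading{ki}^2/\longrunvaretaT{i}{}^2)\,\iota\trans\vareta[,i]\iota = \sum_i (\loading{ki}^2/\longrunvaretaT{i}{}^2)\,T\longrunvaretaT{i}= T\sum_i\loading{ki}^2/\longrunvaretaT{i}=O(nT)$ again by the uniform bounds, so the $K$-vector has Frobenius norm $O_P(\sqrt{nT})$.

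The main obstacle is bookkeeping rather than depth: making sure the uniform-in-$i$ bounds on $\longrunvaretaT{i}$ (both above, from \cref{bddACFEta}, and below, from \cref{rem:bddLRV} together with \cref{part:difLRV} of \cref{lemma:generalLRV}), on $\gamma_{\eta,i}(0)$, and on the loadings are all invoked correctly, and that cross-sectional independence is used to kill the off-diagonal terms in the variance computations in \cref{sumLoadingeta}--\cref{eta1Loading}. Everything else is an application of Markov's inequality to an explicitly computed second moment, so no genuinely hard step arises; the one place to be slightly careful is \cref{sumf}, where one should cite ergodic stationarity (not just a variance bound) or equivalently note that $\frac1T\big(\sum_{t}f_{kt}\big)^2\to\longrunvarF{k}$ is $O_P(1)$ by the CLT for the summable-autocovariance stationary sequence $\{f_{kt}\}$.
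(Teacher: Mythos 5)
Your proposal is correct and follows essentially the same route as the paper: the eigenvalue comparison $\Loading\trans\longrunvarmatrixeta^{-1}\Loading\ge(\sup_i\longrunvaretaT{i})^{-1}\Loading\trans\Loading$ for the first claim, and explicit second-moment computations (using $\iota\trans\vareta[,i]\iota=T\longrunvaretaT{i}$, uniform bounds on loadings, $\ACFgamma{i}(0)$ and $\longrunvaretaT{i}$, and cross-sectional independence) followed by Markov/Chebyshev for the remaining four, just written entrywise rather than in the paper's trace form. The only superfluous worry is in the factor claim: the variance bound $\var(\sum_t f_{kt})\le T\sum_m|\ACFgammaF{k}(m)|$ plus Chebyshev already suffices, so no appeal to a CLT or ergodicity is needed there.
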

\begin{proof}
For \cref{lem:bddFacInv},  recall that $K$ is fixed, so that the norm we consider is irrelevant.
   As
   \begin{align}
  \Loading\trans \longrunvarmatrixeta^{-1}\Loading
  =\sum_{i=1}^{n}\frac{1}{\longrunvaretaT{i}}\lambda_i\lambda_i'
  \ge \frac{1}{\sup_{i\in\mathds{N}}\longrunvaretaT{i}}\sum_{i=1}^{n}\lambda_i\lambda_i',
\end{align}
the smallest eigenvalue of $\Loading\trans \longrunvarmatrixeta^{-1}\Loading$ is larger than that of $\Loading\trans \Loading$. Thus,
\begin{align}
  \rnorm{ \left(\frac{1}{n} \Loading\trans \longrunvarmatrixeta^{-1}\Loading\right)^{-1}}
  \le \sup_{i\in\mathds{N}}\longrunvaretaT{i}\rnorm{ \left(\frac{1}{n} \Loading\trans \Loading\right)^{-1}}
  \to \sup_{i\in\mathds{N}}\omega^2_i\rnorm{\limitfactorloadings^{-1}}<\infty,
\end{align}
thanks to \cref{assumption:factor_loadings,assumption:serial correlation}.

\Cref{sumeta} follows from
\begin{align}
  \expec\Frob{\sum_{t=1}^{T}\eta_{\cdot, t}}^2
  =\expec\Frob{\tilde\eta'\iota }^2
  =\iota '\expec \tilde\eta\tilde\eta'\iota 
  =\iota '\sum_{i=1}^{n}\expec \eta_i\eta_i'\iota 
  =T\sum_{i=1}^{n}\longrunvaretaT{i}=O(nT).
\end{align}
Note that the expectation of $\Frob{\sum_{t=2}^{T}\eta_{\cdot, t}}^2$ is given by $(T-1)\sum_{i=1}^{n}\omega^2_{\eta,i,T-1}$ and is thus of the same order.

\cref{sumf} can be obtained along a similar line of proof.

 For \cref{sumLoadingeta}, note $\expec \tilde\eta'\iota \iota \tilde\eta=T\longrunvarmatrixeta$, so that
\begin{align}
\expec\Frob{\iota'\tilde\eta \longrunvarmatrixeta^{-1} \Loading}^2 
&=\tr{\expec [\tilde\eta'\iota \iota \tilde\eta]\longrunvarmatrixeta^{-1}\Loading\Loading'\longrunvarmatrixeta^{-1}}\\
&=T\tr{\Loading\Loading'\longrunvarmatrixeta^{-1}}
\le T\Frob{\Loading}^2\rnorm{\longrunvarmatrixeta^{-1}}=O(nT).
\end{align}

\Cref{eta1Loading} follows similarly from $\expec \crosseta{t}\crosseta{t}'=\diag(\ACFgamma{1}(0),\dots,\ACFgamma{n}(0))=: D$, so 
\begin{align}
   E \Frob{\tilde\eta \longrunvarmatrixeta^{-1} \Loading}^2
  &=\tr(\Loading'\longrunvarmatrixeta^{-1}\sum_{t=1}^{T}\expec[\crosseta{t}\crosseta{t}']\longrunvarmatrixeta^{-1}\Loading)
  \le T\Frob{\Loading}^2\rnorm{\longrunvarmatrixeta^{-1}}^2\rnorm{D},
  \end{align}
  which is indeed $O(nT)$ thanks to \cref{assumption:factor_loadings,assumption:serial correlation}.
  \end{proof}

\subsection{Proofs of \cref{sec:limexp}}
\begin{proof}[Proof of \cref{lemma:PANICCS}]
In the following all probabilities and expectations are evaluated under $\lawPANIC$. To obtain the desired result, we consider the difference between the two central sequences $\etaCS-\CSorigPANIC$ and the difference between the two Fisher informations $\FIorigPANIC-\frac{1}{2}$.
We show that expectations and variances of both differences converge to zero, implying $L_2$ convergence. 

\emph{Part A:}
Under the null, $\Delta E=\eta$ and hence
\begin{align}
  \etaCS-\CSorigPANIC= \frac{1}{\sqrt n T} \eta\trans\calA\trans (\LRetaT^{-1}-\vareta^{-1}) \eta -\frac{1}{\sqrt n}\sum_{i=1}^n \frac{\oslongrunvaretaT{i}}{\longrunvaretaT{i}}.
\end{align}
 We first show that the difference has mean zero. We have, using $\tr(\calA)=0$ and block diagonality of $\vareta$,
\begin{align}
  \expec [\etaCS-\CSorigPANIC]&=\frac{1}{\sqrt n T}\tr(\calA\trans (\LRetaT^{-1}-\vareta^{-1})\vareta) -\frac{1}{\sqrt n}\sum_{i=1}^n \frac{\oslongrunvaretaT{i}}{\longrunvaretaT{i}}\\
  &=\frac{1}{\sqrt n T}\tr(\calA\trans \LRetaT^{-1}\vareta) -\frac{1}{\sqrt n}\sum_{i=1}^n \frac{\oslongrunvaretaT{i}}{\longrunvaretaT{i}}\\
  &=\frac{1}{\sqrt n T}\tr(({\longrunvarmatrixetaT}^{-1}\kron A')\vareta) -\frac{1}{\sqrt n}\sum_{i=1}^n \frac{\oslongrunvaretaT{i}}{\longrunvaretaT{i}}\\
  &=\frac{1}{\sqrt{n}}
\frac{1}{T}
\sum_{i=1}^n \frac{1}{\longrunvaretaT{i}} \tr\left[\A\trans \vareta[,i]\right]-\frac{1}{\sqrt n}\sum_{i=1}^n \frac{\oslongrunvaretaT{i}}{\longrunvaretaT{i}}=0,
\end{align}
as $\tr\left[\A\trans \vareta[,i]\right]=T\oslongrunvaretaT{i}$.

To show that the variance of $\CSorigPANIC-\etaCS$ goes to zero, observe 
\begin{align}\label{varformula}
nT^2\var (\CSorigPANIC-\etaCS)=&\var(\eta\trans C_\eta  \eta)=\tr[ C_\eta  \vareta C_\eta  \vareta ] +  \tr[ C_\eta  \vareta C_\eta \trans \vareta]\\
\le&\Frob{C_\eta  \vareta}^2+\Frob{C_\eta\vareta}\Frob{\vareta C_\eta },
\end{align}
with $C_\eta =
\calA\trans (\LRetaT^{-1}-\vareta^{-1}) 
$. Hence, it suffices to show $\Frob{C_\eta \vareta }=o(\sqrt{n}T)$ and $\Frob{\vareta C_\eta}=o(\sqrt{n}T)$. Since $\LReta^{-1}$ and $\calA\trans$ commute, we obtain 
\begin{align}
  \Frob{C_\eta\vareta}
  =&\Frob{\calA'\LRetaT^{-1}(\vareta-\LRetaT)}
  \le& \rnorm{\LRetaT^{-1}}\Frob{\calA\trans(\vareta-\LRetaT)},
\end{align}
which is indeed $o(\sqrt{n}T)$ by \cref{lemma:eigenvalues,lemma:etaLRV}.
 For $\Frob{\vareta C_\eta }$, we first  have to approximate $\calA\vareta$ with $\calA\LRetaT$ before we can use the commutativity as above:
 \begin{align}
  \Frob{\vareta C_\eta }
  &\le \Frob{\LRetaT C_\eta }
  +\Frob{C_\eta '(\vareta-\LRetaT)}\\
  &=\Frob{\calA' (\vareta-\LRetaT)\vareta^{-1}}
 +\Frob{ \left( \LRetaT^{-1}-\vareta^{-1}     \right)\calA(\vareta-\LRetaT)}\\
 &\le \rnorm{\vareta^{-1}}\Frob{\calA\trans (\LRetaT-\vareta)}\\
  &+\left(    \rnorm{\LRetaT^{-1}}+\rnorm{\vareta^{-1}}   \right)\Frob{\calA(\vareta-\LRetaT)}
  =o(\sqrt{n}T).
\end{align}

\emph{Part B:}
First, we show that the expectation of $\FIorigPANIC$ converges to $\frac{1}{2}$. We have
\begin{align}
  nT^2\expec \FIorigPANIC
  &=\tr\left[  \calA\trans \vareta^{-1} \calA \vareta  \right]
  =\tr\left[  \calA\trans \LRetaT^{-1} \calA \vareta  \right]-
  \tr\left[  \calA\trans C_\eta \trans \vareta \right]\\
  &=\tr [ \calA\trans\calA]
  +\tr  [\calA\trans \LRetaT^{-1}\calA(\vareta-\LRetaT)]
  -\tr [\vareta C_\eta \calA].
\end{align}
This implies that the leading term is $\frac{1}{2}nT^2$, since the final two terms are $o(nT^2)$: use the arguments already presented in Part A together with the relation between the trace and the Frobenius norm and  
\begin{align}
\frac{1}{nT^2}\Frob{\calA}^2=
  \frac{1}{nT^2}\tr\left[  \calA\trans\calA \right]
  =\frac{1}{T^2}\tr\left[  A\trans A\right]
  =\frac{T(T-1)}{2T^2}\to \frac{1}{2}.
\end{align}

Next, we show that the variance converges to zero.  By the arguments in \cref{varformula}, with $D_\eta=\calA\trans\vareta^{-1}\calA$,
\begin{align}
  n^2T^4 \var(\FIorigPANIC)\le 2\Frob{\vareta D_\eta}^2.
\end{align}
%
The required order is now easily verified, since 
\begin{align}
  \Frob{\vareta D_\eta} 
  \le & \Frob{\calA\trans \LRetaT^{-1}\calA \vareta}+\Frob{\vareta C_\eta  \calA}\\
  \le & \Frob{\calA\trans\calA}+\Frob{\calA\trans\LRetaT^{-1}\calA(\vareta-\LRetaT)}+\Frob{\vareta C_\eta  \calA}
\end{align}
and $\Frob{\calA'\calA}=\sqrt{n}\Frob{A'A}\le \sqrt{n}\Frob{A}^2=\sqrt{n}T(T-1)/2.$
\end{proof}
%
\begin{proof}[Proof of \Cref{lemma:proxyCS}]\label{sec:proofproxyCS}
In the following all probabilities and expectations are evaluated under $\lawMP$.
The proof of this lemma follows the idea of the proof of \cref{lemma:PANICCS} by considering means and variances. The proof that $\FIorigMP$ converges to $\frac{1}{2}$ in $L_2$ is almost identical to its counterpart in the proof of \cref{lemma:PANICCS}: just replace $\eta$ by $\varepsilon$, $\Sigma_\eta$ by $\Sigma_\varepsilon$, $C_\eta$ by $C_\varepsilon$ etc. The same replacements yield that the variance of $\CSMPLRV-\CSorigMP$ converges to zero, by applying them to the arguments starting at \cref{varformula}. We are left to show that the expectation of  $\CSMPLRV-\CSorigMP$ converges to zero. This remaining expectation is more complicated since the variance matrices $\Sigma_\varepsilon$ and $\proxvareps$ have additional terms due to the presence of unobservable factors. 

Recall, under $\lawMP$, $\Delta Y=\varepsilon$ and note
\begin{align}
\CSMPLRV-\CSorigMP&=\frac{1}{\sqrt{n}}\left( \frac{1}{T} \varepsilon \trans \calA\trans \left(  \proxvareps^{-1} -\vareps^{-1}  \right) \varepsilon
-\sum_{i=1}^n \frac{\oslongrunvaretaT{i}}{
\longrunvaretaT{i} } 
\right).
\end{align}
%
%
Thus, we have
\begin{align}
	\expec [\CSMPLRV-\CSorigMP]
&=\frac{1}{\sqrt{n}T}\tr[
\calA\trans \proxvarepsT^{-1}\vareps]
-\frac{1}{\sqrt{n}}\sum_{i=1}^n \frac{\oslongrunvaretaT{i}}{
\longrunvaretaT{i} }\\
&=\frac{1}{\sqrt{n}T}\tr[
\calA\trans \LRetaT^{-1}\vareta]
-\frac{1}{\sqrt{n}}\sum_{i=1}^n \frac{\oslongrunvaretaT{i}}{
\longrunvaretaT{i} }\\
&+\frac{1}{\sqrt{n}T} \sum_{k=1}^K  \tr\left[   \proxvarepsgen^{-1} \loading{k}\loading{k}\trans \kron \A\trans \varF[,k]  \right]\\
&+\frac{1}{\sqrt{n}T}
\tr\left[    \left( (\proxvarepsgen^{-1}-\Omega_\eta^{-1}) \kron \A\trans \right) \vareta \right]
=: I +II+III.
\end{align}
In the proof of \cref{lemma:PANICCS} we have established that the first term equals zero. Therefore, the current proof is complete once we show the final two terms converge to zero. 

Convergence to zero of $II$ follows from $\frac{1}{T}\tr(A'\varF[,k])=\oslongrunvarfT{k}=O(1)$
 in combination with
\begin{align}
  \sum_{k=1}^K\tr\left[   \proxvarepsgen^{-1} \loading{k}\loading{k}\trans\right]
  =&\tr[\Lambda'\proxvarepsgen^{-1}\Lambda]
  =\tr\left[
\longrunvarmatrixF^{-1}
-
\longrunvarmatrixF^{-1}
 \left( \longrunvarmatrixF^{-1} + \Loading\trans
\longrunvarmatrixeta^{-1}\Loading\right)^{-1} \longrunvarmatrixF^{-1}
\right]\\
\leq& \tr\left[
\longrunvarmatrixF^{-1}
\right]=\sum_{k=1}^K \frac{1}{\longrunvarFT{k}}
\to\sum_{k=1}^K \frac{1}{\longrunvarF{k}}<\infty.\label{eq:loadingsBound}
\end{align}
Convergence to zero of $III$ follows from
\begin{align}
  |III|
\le&\frac{1}{\sqrt{n}T}\sum_{i=1}^{n}
\left(
\longrunvarmatrixeta^{-1}\Loading\left( \longrunvarmatrixF^{-1} +\Loading\trans \longrunvarmatrixeta^{-1}\Loading \right)^{-1}\Loading\trans \longrunvarmatrixeta^{-1}
\right)_{i,i}|\tr\left[\A\trans
 \vareta[,i]
\right]|\\
\le& \frac{1}{\sqrt{n}T}\tr(\longrunvarmatrixeta^{-1}\Loading\left( \longrunvarmatrixF^{-1} +\Loading\trans \longrunvarmatrixeta^{-1}\Loading \right)^{-1}\Loading\trans \longrunvarmatrixeta^{-1})
	\sup_i |\tr\left[\A\trans\vareta[,i]\right]|.\\
	\le&\frac{1}{\sqrt{n}T}\Frob{  \Loading}^2\rnorm{ \longrunvarmatrixeta^{-1} }^2
	\rnorm{ \left( \longrunvarmatrixF^{-1} +\Loading\trans \longrunvarmatrixeta^{-1}\Loading \right)^{-1}  }
	\sup_i |\tr\left[\A\trans\vareta[,i]\right]|.\label{smallcconv}
\end{align}
Observe $\sup_i \tr\left[\A\trans\vareta[,i]\right]=O(T)$ by \cref{part:generalOSLRV} of \cref{lemma:generalLRV}.
From \cref{assumption:factor_loadings} we get $\Frob{  \Loading}=O(\sqrt{n})$
and 
\begin{align}
\MoveEqLeft
n\rnorm{ \left( \longrunvarmatrixF^{-1} +\Loading\trans \longrunvarmatrixeta^{-1}\Loading \right)^{-1}  }
=
\rnorm{ \left( \frac{1}{n} \longrunvarmatrixF^{-1} + \frac{1}{n} \Loading\trans \longrunvarmatrixeta^{-1}\Loading \right)^{-1}  }\\
&=\lambda_{min}^{-1}( \frac{1}{n} \longrunvarmatrixF^{-1} + \frac{1}{n} \Loading\trans \longrunvarmatrixeta^{-1}\Loading)
\le\lambda_{min}^{-1}(\frac{1}{n} \Loading\trans \longrunvarmatrixeta^{-1}\Loading)\\
&\le\lambda_{min}^{-1}(\frac{1}{n} \Loading\trans\Loading)\sup_{i\in\mathds{N}}\longrunvaretaT{i} 
\to \lambda_{min}^{-1}(\Psi_\Lambda) \sup_{i\in\mathds{N}}\omega^2_{\eta,i}<\infty.
\end{align}
A combination of these observations with the penultimate display yields $III=o(1)$.
\end{proof}
\begin{proof}[Proof of \cref{lemma:auxCS}]
We have 
		\begin{align}
|\auxCS - \CSMPLRV | 
&=\frac{1}{\sqrt n T} |\tr(A\tilde\varepsilon({\auxproxvarepsgen}^{-1}-{\proxvarepsgen}^{-1})\tilde\varepsilon')|\\
&\le\frac{1}{\sqrt n T} \Frob{{\auxproxvarepsgen}^{-1}-{\proxvarepsgen}^{-1}}\Frob{\tilde\varepsilon'A\tilde\varepsilon}.
\end{align}
We consider each norm separately. We have 
\begin{align}
  \Frob{{\auxproxvarepsgen}^{-1}-{\proxvarepsgen}^{-1}}
  &\le \rnorm{(
\Loading\trans  \longrunvarmatrixeta^{-1} \Loading +\longrunvarmatrixF )^{-1}-
(
\Loading\trans  \longrunvarmatrixeta^{-1} \Loading)^{-1}}\rnorm{\longrunvarmatrixeta^{-1}}^2\Frob{\Loading}^2\\
&=O(n^{-2})O(1)O(n)=O(n^{-1}),
\end{align}
as $\Frob{\Loading}=O(\sqrt{n})$ by \cref{assumption:factor_loadings}, $\rnorm{\longrunvarmatrixeta^{-1}}=O(1)$ by \cref{assumption:serial correlation}, and
\begin{align}
  \MoveEqLeft n\rnorm{(
\Loading\trans  \longrunvarmatrixeta^{-1} \Loading +\longrunvarmatrixF )^{-1}-
(
\Loading\trans  \longrunvarmatrixeta^{-1} \Loading)^{-1}}\\
=&\rnorm{\left(
\frac{\Loading\trans  \longrunvarmatrixeta^{-1}\Loading}{n} +\frac{\longrunvarmatrixF}{n} \right)^{-1}-
\left(
\frac{\Loading\trans  \longrunvarmatrixeta^{-1}\Loading}{n}\right)^{-1}}\\
=&\rnorm{-\left(
\frac{\Loading\trans  \longrunvarmatrixeta^{-1}\Loading}{n} +\frac{\longrunvarmatrixF}{n} \right)^{-1}\frac{\longrunvarmatrixF}{n}
\left(
\frac{\Loading\trans  \longrunvarmatrixeta^{-1}\Loading}{n} \right)^{-1}}\\
\le & \rnorm{\frac{\longrunvarmatrixF}{n}}\rnorm{\left(
\frac{\Loading\trans  \longrunvarmatrixeta^{-1}\Loading}{n} +\frac{\longrunvarmatrixF}{n} \right)^{-1}}\rnorm{
\left(
\frac{\Loading\trans  \longrunvarmatrixeta^{-1}\Loading}{n} \right)^{-1}},
\end{align}
which is $O(n^{-1})$: the second norm converges to the third, which is O(1) by \cref{lem:bddFacInv} of \cref{lemma:extrasFeasible}.
For $\Frob{\tilde\varepsilon'A\tilde\varepsilon}$, we note that $\Frob{\tilde\varepsilon'A\tilde\varepsilon}=\Frob{\tilde\varepsilon'\frac{A+A'}{2}\tilde\varepsilon}$ and recall that $A+A'=\iota\iota'-I_T$, so that
\begin{align}\label{epsilonY}
 2 \Frob{\tilde\varepsilon'A\tilde\varepsilon}=\Frob{\tilde\varepsilon'(\iota\iota'-I_T)\tilde\varepsilon}
  \le \Frob{\iota'\tilde\varepsilon}^2+\Frob{\tilde\varepsilon}^2=O_P(nT),
\end{align}
as $\Frob{\tilde\varepsilon}\le \Frob{\Loading}\Frob{\tilde f}+\Frob{\tilde\eta}=O(\sqrt{n})O_P(\sqrt{T})+O_P(\sqrt{nT})$ and, using \cref{sumeta,sumf} of \cref{lemma:extrasFeasible}, a similar bound holds for $\Frob{\iota'\tilde\varepsilon}$. 
 Conclude that the central sequence difference is $O_p(n^{-1/2})$. 
\end{proof}

 \begin{proof}[Proof of \Cref{lemma:NormalityCS}]
 As ${\auxproxvarepsgen}^{-1}$ projects out the factors, we have
 \begin{align}
\auxCS - \etaCS  &=  
\frac{1}{\sqrt n T} \tr(A\tilde\varepsilon{\auxproxvarepsgen}^{-1}\tilde\varepsilon')
-\frac{1}{\sqrt n T}\tr(A\tilde\eta{\longrunvarmatrixeta }^{-1}\tilde\eta')\\
&=\frac{1}{\sqrt n T} \tr(A\tilde\eta({\auxproxvarepsgen}^{-1}-{\longrunvarmatrixeta }^{-1})\tilde\eta').
\end{align}
Note that for a symmetric matrix $B$, 
\begin{align}
  \tr(A\tilde\eta B\tilde\eta')
  =
  \tr(\tilde\eta B\tilde\eta'A')
  =
  \tr(A'\tilde\eta B\tilde\eta')
  =
  \tr\left(\frac{A+A'}{2}\tilde\eta B\tilde\eta'\right),
\end{align}
so, as ${\auxproxvarepsgen}^{-1}$ and $\longrunvarmatrixeta$ are symmetric and $A+A'=\iota\iota'-I_T$, we have
\begin{align}
\MoveEqLeft|\tr(A\tilde\eta({\auxproxvarepsgen}^{-1}-{\longrunvarmatrixeta }^{-1})\tilde\eta')|
=\frac{1}{2}|\tr((\iota\iota'-I_T)\tilde\eta({\auxproxvarepsgen}^{-1}-{\longrunvarmatrixeta }^{-1})\tilde\eta')|\\
&\le |\tr(\iota'\tilde\eta({\auxproxvarepsgen}^{-1}-{\longrunvarmatrixeta }^{-1})\tilde\eta'\iota)|
+|\tr(\tilde\eta({\auxproxvarepsgen}^{-1}-{\longrunvarmatrixeta }^{-1})\tilde\eta')|\\
&\le \Frob{\left(\Loading\trans  \longrunvarmatrixeta^{-1} \Loading \right)^{-1}}
\left(\Frob{\iota'\tilde\eta\longrunvarmatrixeta^{-1}\Loading}^2+\Frob{\tilde\eta\longrunvarmatrixeta^{-1}\Loading}^2\right)\\
&=O(n^{-1})(O_P(nT)+O_P(nT))=O_P(T),
\end{align}
using \cref{sumLoadingeta,eta1Loading,lem:bddFacInv} of \cref{lemma:extrasFeasible}. 
\end{proof}
\section{Additional Derivations}
\begin{proof}[Proof of \cref{lem:FactEst}]
	
%
As \cite{MoonPerron2004}, we take  $H_K=\frac{\tilde f'\tilde f}{T}\frac{\Loading'\bar\Loading}{n}$.
  First note that from the definitions of $H_K$ and $\hat\Lambda$ and using $\tilde \varepsilon=\tilde f\Loading'+\tilde\eta$ we have
\begin{align}\label{expandDiff}
  \hat \Loading-\Loading H_K 
 =\frac{1}{nT}(\tilde\varepsilon'\tilde\varepsilon-\Loading \tilde f'\tilde f\Loading')\bar\Loading
 =\frac{1}{nT}(\tilde\eta'\tilde f\Loading'+\Loading\tilde f'\tilde\eta+\tilde\eta'\tilde\eta)\bar\Loading,
\end{align}
so that 
\begin{align}
  \Frob{\Loading H_K -\hat \Loading}
  \le&\frac{\Frob{\tilde\eta'\tilde f\Loading'\bar\Loading}}{nT}
  +\frac{\Frob{\Loading\tilde f'\tilde\eta\bar\Loading}}{nT}
  +\frac{1}{nT}\Frob{\tilde\eta'\tilde\eta
  \bar\Loading}\\
  \le &2\sqrt{\frac{n}{T}}\frac{\Frob{\tilde\eta'\tilde f}}{\sqrt{nT}}\frac{\Frob{\Loading}}{\sqrt{n}}\frac{\Frob{\bar\Loading}}{\sqrt{n}}
  +\frac{1}{nT}\rnorm{\tilde\eta'\tilde\eta}\Frob{
  \bar\Loading}.\label{eq:estFacDif}
\end{align}
By the definition of $\bar\Loading$, $\Frob{\bar\Loading}=\sqrt{nK}=O(\sqrt{n})$. 
We have
\begin{align}
  \expec \Frob{\tilde\eta'\tilde f}^2
  =&\expec\sum_{k=1}^{K}\sum_{i=1}^{n}\left(\sum_{t=1}^{T}f_{kt}\eta_{it}\right)^2\\
  =&\sum_{k=1}^{K}\sum_{i=1}^{n}\sum_{t=1}^{T}\sum_{s=1}^{T}\ACFgamma{i}(t-s)\ACFgammaF{k}(t-s)\\
  \le& Mn\sum_{k=1}^{K}\sum_{t=1}^{T}\sum_{s=1}^{T}|\ACFgammaF{k}(t-s)|\\
  =& Mn\sum_{k=1}^{K}\sum_{m=-(T-1)}^{T-1}(T-|m|)|\ACFgammaF{k}(m)|=O(nT),
\end{align}
for some finite constant $M$, using that, thanks to \cref{assumption:serial correlation}, $\ACFgamma{i}(t-s)$ is bounded uniformly in $i$ and $t-s$. 
Thus, each term of the first summand in \cref{eq:estFacDif} is $O_p(1)$. 

Finally, we consider the second summand, which is treated differently from \cite{MoonPerron2004}. 
We obtain $\Frob{\Loading H_K -\hat \Loading}= o_p(1)$ if we can indeed show that $\rnorm{\tilde\eta'\tilde\eta}=o_p(\sqrt{n}T)$ (\cite{MoonPerron2004} only use $\Frob{\tilde\eta'\tilde\eta}=O_p(\sqrt{n}T)$). For this, note that $\frac{1}{T}\tilde\eta'\tilde\eta=\frac{1}{T}\sum_{t=1}^{T} \tilde\eta_{\cdot,t}\tilde\eta_{\cdot,t}'$, which can be considered an approximation to $\Gamma_\eta :=\diag(\ACFgamma{1}(0),\dots,\ACFgamma{n}(0))$, the  $n\times n$ cross-sectional covariance matrix of the $\eta$. From \cref{assumption:serial correlation}, $\rnorm{\Gamma_\eta}<\infty$. We now show that indeed the approximation works. Using Isserlis' Theorem to write $\expec[\eta_{i,t}^2\eta_{i,s}^2]=2\ACFgamma{i}(t-s)^2+\expec[\eta_{i,t}^2]\expec[\eta_{i,s}^2]$, we have
\begin{align}
  \MoveEqLeft\expec\Frob{\frac{1}{T}\sum_{t=1}^{T} \tilde\eta_{\cdot,t}\tilde\eta_{\cdot,t}'-\Gamma_\eta}^2
  =\sum_{i=1}^{n}\sum_{j=1}^{n}\expec\left(\frac{1}{T}\sum_{t=1}^{T}\eta_{i,t}\eta_{j,t}-\expec[\eta_{i,t}\eta_{j,t}]\right)^2\\
  &=\sum_{i=1}^{n}\sum_{j=1}^{n}\frac{1}{T^2} \sum_{t=1}^{T}\sum_{s=1}^{T}
  \expec[\eta_{i,t}\eta_{j,t}\eta_{i,s}\eta_{j,s}]-\expec[\eta_{i,t}\eta_{j,t}]\expec[\eta_{i,s}\eta_{j,s}]\\
  &=\sum_{i=1}^{n}\frac{1}{T^2} \sum_{t=1}^{T}\sum_{s=1}^{T}
  2\ACFgamma{i}(t-s)^2\\
  &\quad+\sum_{i\neq j}^{n}\frac{1}{T^2} \sum_{t=1}^{T}\sum_{s=1}^{T}
  \ACFgamma{i}(t-s)\ACFgamma{j}(t-s)\\
  &=O(n/T)+O(n^2/T).
\end{align}
Conclude that the difference in Frobenius norm is $O_p(n/\sqrt{T})$.\footnote{\label{footnoteGaussianity}Note that, even without Gaussianity, this conclusion holds as long as the long-run variances of the $\{\eta_{i,t}^2\}$ are uniformly bounded.} Thus,
\begin{align}
  \rnorm{\tilde\eta'\tilde\eta}
  \le & \Frob{\sum_{t=1}^{T} \tilde\eta_{\cdot,t}\tilde\eta_{\cdot,t}'-T\Gamma_\eta}
  +\rnorm{T\Gamma_\eta}\\
  =&O_p(n\sqrt{T})+O(T)=o_p(\sqrt{n}T).
\end{align}
Finally, we show the boundedness properties of $H_K$. First note that 
\begin{align}
  \Frob{H_K}\le \frac{\Frob{\tilde f'\tilde f}}{T}\frac{\Frob{\Loading}}{\sqrt{n}}\frac{\Frob{\bar\Loading}}{\sqrt{n}}=O_P(1).
\end{align}
To show boundedness of the inverse, we will show that the limiting eigenvalues of $H_K$ are positive. Introduce $\Gamma_f:=\diag(\ACFgammaF{1}(0),\dots,\ACFgammaF{K}(0)) $, the  $K\times K$ covariance matrix of the $f$, and write
\begin{align}
  \rnorm{H_K-\Gamma_f \frac{\Loading'\bar\Loading}{n}}
  \le \Frob{\frac{\Loading'\bar\Loading}{n}}\Frob{\frac{\tilde f'\tilde f}{T}-\Gamma_F}=O_P(1)o_P(1),
\end{align}
where the latter follows from \cref{assumption:serial correlation}. As $\Gamma_F$ has full rank, it is sufficient to show that the eigenvalues of $\frac{\Loading'\bar\Loading}{n}$ are bounded away from zero. $\bar\Loading$ is defined through the eigenvectors of $\tilde\varepsilon'\tilde\varepsilon/(nT)$. As the eigenvalues of $\tilde\varepsilon'\tilde\varepsilon$ are closely related to those of $\Loading \tilde f'\tilde f\Loading'$, we can use this relation to learn about the rank of $\Loading'\bar\Loading$.
Formally, define $D$ to be the $K\times K$ matrix with the $K$ largest eigenvalues of  $\tilde\varepsilon'\tilde\varepsilon/(nT)$. Then, from the definition of $\bar \Loading$,
\begin{align}
  D=\frac{\bar \Loading'}{\sqrt{n}}\frac{\tilde\varepsilon'\tilde\varepsilon}{nT}\frac{\bar \Loading}{\sqrt{n}}.
\end{align}
Recalling some of the above results we obtain
\begin{align}\label{approxCovm}
  \rnorm{\frac{\tilde\varepsilon'\tilde\varepsilon}{nT}-\frac{\Loading \tilde f'\tilde f\Loading'}{nT}}=o_P(n^{-1/2}),
\end{align}
so that 
\begin{align}
  D=&\frac{\bar \Loading'}{\sqrt{n}}\frac{\Loading \tilde f'\tilde f\Loading'}{nT}\frac{\bar \Loading}{\sqrt{n}}+o_P(n^{-1/2})
  =\frac{\bar \Loading'\Loading}{n}\Gamma_f\frac{\Loading'\bar \Loading}{n}+o_P(1).
\end{align}
As the $K$th largest eigenvalue of $\tilde\varepsilon'\tilde\varepsilon/(nT)$ is bounded away from zero (using \cref{approxCovm} the nonzero limiting eigenvalues are given by those of $\limitfactorloadings \Gamma_F$, a product of two rank $K$ matrices), so must the limit of $\frac{\Loading'\bar \Loading}{n}$ and thus $H_K$. 
\end{proof}

\clearpage

\begin{proof}[Proof of \Cref{lemma:hatCS}]
We split the difference in three parts: one for replacing $\auxproxvarepsgen$ with $\hatproxvarepsgen$, one to take care of the initial value, and one for estimating the correction term. Thus
$\hatCS -\auxCS = I-II - III$, with 
\begin{align}
I & = \frac{1}{\sqrt n T} \tr(A'\tilde \varepsilon \left(\hatproxvarepsgen^{-1} - {\auxproxvarepsgen}^{-1}\right)\tilde\varepsilon') \\
II  
&=\frac{1}{\sqrt n T} \sum_{t=2}^T \varepsilon_{\cdot, 1}\trans \hatproxvarepsgen^{-1}\varepsilon_{\cdot, t}
\\
III & = \frac{1}{\sqrt n}\sum_{i=1}^n \left( \frac{\hatoslongrunvareta{i}}{\hatlongrunvareta{i}} - \frac{\oslongrunvareta{i}}{\longrunvareta{i}} \right)
.
\end{align}

For part $I$, insert \cref{auxinvdev,hatinvdev} to find
\begin{align}
  |I|&=\frac{1}{\sqrt n T}|\tr(\tilde\varepsilon'A'\tilde\varepsilon(\hatproxvarepsgen^{-1} - {\auxproxvarepsgen}^{-1}))|\\
  &\le \frac{1}{\sqrt n T}|\tr(\tilde\varepsilon'A'\tilde\varepsilon(\hatlongrunvarmatrixeta^{-1} - \longrunvarmatrixeta^{-1})|\\
  &+\frac{1}{\sqrt n T} \left|\tr\left(\hat\Loading\trans \hatlongrunvarmatrixeta^{-1}\tilde\varepsilon'A'\tilde\varepsilon\hatlongrunvarmatrixeta^{-1}\hat\Loading\left(\hat\Loading\trans \hatlongrunvarmatrixeta^{-1}\hat\Loading \right)^{-1} 
-\Loading\trans \longrunvarmatrixeta^{-1}\tilde\varepsilon'A'\tilde\varepsilon\longrunvarmatrixeta^{-1}\Loading\left(\Loading\trans \longrunvarmatrixeta^{-1}\Loading \right)^{-1}  \right)\right|\\
  &\le \frac{1}{\sqrt n T}\tr(|\tilde\varepsilon'A'\tilde\varepsilon|)\sup_i |\hatlongrunvareta{i} - \longrunvareta{i}|\\
  &+\frac{1}{\sqrt n T} \Frob{\hatlongrunvarmatrixeta^{-1}\hat\Loading\left(\hat\Loading\trans \hatlongrunvarmatrixeta^{-1}\hat\Loading \right)^{-1} \hat\Loading\trans \hatlongrunvarmatrixeta^{-1}
-\longrunvarmatrixeta^{-1}\Loading\left(\Loading\trans \longrunvarmatrixeta^{-1}\Loading \right)^{-1}\Loading\trans \longrunvarmatrixeta^{-1}}\Frob{\tilde\varepsilon'A'\tilde\varepsilon}.\\
\end{align}
The first summand is $o_P(1)$ thanks to \cref{assumption:estimators} and, similar to the arguments in \cref{epsilonY},
\begin{align}
 2\tr(|\tilde\varepsilon'A'\tilde\varepsilon|) 
 =
 \tr(|\tilde\varepsilon'(\iota\iota'-I_T)\tilde\varepsilon|) 
  \le\Frob{\iota'\tilde\varepsilon}^2
  +\Frob{\tilde\varepsilon}^2
  =O_P(nT),
\end{align}
as $\Frob{\tilde\varepsilon}\le \Frob{\Loading}\Frob{\tilde f}+\Frob{\tilde\eta}=O(\sqrt{n})O_P(\sqrt{T})+O_P(\sqrt{nT})$ and, using \cref{sumeta,sumf} of \cref{lemma:extrasFeasible}, a similar bound holds for $\Frob{\iota'\tilde\varepsilon}$.
The second part is $o_P(1)$ thanks to \cref{epsilonY} and \cref{LamOmUp} of \cref{lem:estimatedTerms}.

For $II$, we have
\begin{align}
\sqrt{n}T  II
  &\le \rnorm{\hatproxvarepsgen^{-1}}\Frob{\varepsilon_{\cdot, 1}}(\Frob{\iota'\tilde\varepsilon}+\Frob{\varepsilon_{\cdot, 1}})\\
  &=O_p(1)O_P(\sqrt{n})(O_P(\sqrt{nT})+O_P(\sqrt{n})=O_P(n\sqrt{T}),
\end{align}
where $\Frob{\varepsilon_{\cdot, 1}}\le \Frob{\Loading}\Frob{\tilde f_{\cdot, 1}}+\Frob{\tilde\eta{\cdot, 1}}=O_P(\sqrt{n})$ and $\rnorm{\hatproxvarepsgen^{-1}}=O_P(1)$ follows from \cref{assumption:estimators} and \cref{LamOmUp} of \cref{lem:estimatedTerms} implying
\begin{align}
\rnorm{\hatproxvarepsgen^{-1}-{\auxproxvarepsgen}^{-1}}&=O_P(n^{-1/2})\text{ and}\\
  \rnorm{{\auxproxvarepsgen}^{-1}}
  &\le\rnorm{\longrunvarmatrixeta^{-1}}+\rnorm{\longrunvarmatrixeta^{-1}}^2 \Frob{\Loading}^2 \Frob{\left(
\Loading\trans  \longrunvarmatrixeta^{-1} \Loading \right)^{-1}}\\
&=O(1)+O(1)O(n)O(n^{-1})=O(1),
\end{align}
using \cref{assumption:factor_loadings,assumption:serial correlation} and \cref{lem:bddFacInv} of \cref{lemma:extrasFeasible}. We conclude that $II=O_P\left(\frac{\sqrt{n}}{\sqrt{T}}\right)=o_P(1)$.

Finally, to show $III = o_P(1)$ we first note that \cref{assumption:serial correlation} together with \cref{assumption:estimators} imply
\begin{align}\label{eq:estimate_longrunvar_bounds}
0< \inf_{i\in\SN} \hatlongrunvareta{i}\leq \sup_{i\in\SN} \hatlongrunvareta{i}<\infty \quad \text{and} \quad \sup_{i\in\SN} \hatoslongrunvareta{i}<\infty.\end{align}
Then, we rewrite $III$ as
\begin{align}III = \frac{1}{\sqrt n} \sum_{i=1}^n \frac{1}{\longrunvareta{i}}(\hatoslongrunvareta{i} - \oslongrunvareta{i}) + \frac{1}{\sqrt n }\sum_{i=1}^n \frac{\hatoslongrunvareta{i}}{\hatlongrunvareta{i} \longrunvareta{i}}(\longrunvareta{i} - \hatlongrunvareta{i}) \end{align}
to see that both parts converge to zero in probability.
\end{proof}

\subsection{Auxiliary Lemmas}
\begin{lemma}\label{lem:estimatedTerms}
Consider the factor estimates and the $H_K$ from \cref{lem:FactEst}. Then, under \cref{assumption:estimators,assumption:factor_loadings,assumption:framework,assumption:rates,assumption:serial correlation}, under $\lawMP$ or $\lawPANIC$ and as $n,T \to \infty$, we have
\begin{enumerate}
	\item $\Frob{\left(\hat\Loading\trans \hatlongrunvarmatrixeta^{-1}\hat\Loading \right)^{-1} -\left(H_K\trans\Loading\trans \longrunvarmatrixeta^{-1}\Loading H_K\right)^{-1}}=o_P(n^{-3/2})$,\label{difInv} and
	\item $\Frob{\hatlongrunvarmatrixeta^{-1}\hat\Loading\left(\hat\Loading\trans \hatlongrunvarmatrixeta^{-1}\hat\Loading \right)^{-1} \hat\Loading\trans \hatlongrunvarmatrixeta^{-1}
-\longrunvarmatrixeta^{-1}\Loading\left(\Loading\trans \longrunvarmatrixeta^{-1}\Loading \right)^{-1}\Loading\trans \longrunvarmatrixeta^{-1}}=o_P(n^{-1/2})$. \label{LamOmUp}
	\end{enumerate}
\end{lemma}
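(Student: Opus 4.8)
The plan is to establish \cref{difInv} first and then read \cref{LamOmUp} off from it. Throughout, all statements are under $\lawMP$ or $\lawPANIC$ as $n,T\to\infty$. Write $R:=\hat\Loading-\Loading H_K$ and $\Delta:=\hatlongrunvarmatrixeta^{-1}-\longrunvarmatrixeta^{-1}=\diag\big(\hatlongrunvareta{i}^{-1}-\longrunvaretaT{i}^{-1}\big)$. By \cref{lem:FactEst}, $\Frob{R}=o_P(1)$ and $\Frob{H_K},\Frob{H_K^{-1}}=O_P(1)$; by \cref{assumption:factor_loadings} the entries of $\Loading$ are uniformly bounded, so $\Frob{\Loading}=O(\sqrt n)$ and $\rnorm{\Loading}=O(\sqrt n)$; and by \cref{assumption:serial correlation}, $\rnorm{\longrunvarmatrixeta^{-1}}=O(1)$. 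For $\Delta$: \cref{assumption:serial correlation}, \cref{part:difLRV} of \cref{lemma:generalLRV} and \cref{assumption:estimators} give $\inf_i\hatlongrunvareta{i}>0$ with probability tending to one, and $\sup_i|\hatlongrunvareta{i}-\longrunvaretaT{i}|\le\sup_i|\hatlongrunvareta{i}-\longrunvareta{i}|+O(T^{-1})=o_P(n^{-1/2})$ (the $O(T^{-1})$ term being $o(n^{-1/2})$ because \cref{assumption:rates} forces $n/T^2\to0$), hence $\rnorm{\Delta}=\sup_i|\hatlongrunvareta{i}^{-1}-\longrunvaretaT{i}^{-1}|=o_P(n^{-1/2})$. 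Finally, by \cref{rotations} the matrix $\longrunvarmatrixeta^{-1}\Loading(\Loading\trans\longrunvarmatrixeta^{-1}\Loading)^{-1}\Loading\trans\longrunvarmatrixeta^{-1}$ coincides with its rotated version in which $\Loading$ is replaced by $\Loading H_K$; this rotated form is the correct comparison object, since $\hat\Loading$ estimates $\Loading H_K$ rather than $\Loading$.

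For \cref{difInv}, set $\hat G:=\hat\Loading\trans\hatlongrunvarmatrixeta^{-1}\hat\Loading$ and $G:=H_K\trans\Loading\trans\longrunvarmatrixeta^{-1}\Loading H_K$. Expanding $\hat G=(\Loading H_K+R)\trans(\longrunvarmatrixeta^{-1}+\Delta)(\Loading H_K+R)$ and subtracting $G$, every remaining term is $o_P(\sqrt n)$ in Frobenius norm: the $R$-cross terms are bounded, up to lower-order pieces, by a constant times $\rnorm{H_K}\rnorm{\Loading}\rnorm{\longrunvarmatrixeta^{-1}}\Frob{R}$ (crucially using $\rnorm{\longrunvarmatrixeta^{-1}}$, not its Frobenius norm), while $H_K\trans\Loading\trans\Delta\Loading H_K$ is controlled by writing $\Loading\trans\Delta\Loading=\sum_i\big(\hatlongrunvareta{i}^{-1}-\longrunvaretaT{i}^{-1}\big)\lambda_i\lambda_i\trans$, so that $\Frob{\Loading\trans\Delta\Loading}\le\rnorm{\Delta}\Frob{\Loading}^2=o_P(n^{-1/2})\,O(n)=o_P(\sqrt n)$. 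Thus $\Frob{\hat G-G}=o_P(\sqrt n)$. Next, $\mineig{G}\ge\mineig{\Loading\trans\longrunvarmatrixeta^{-1}\Loading}\,\mineig{H_K\trans H_K}$, where $\mineig{\Loading\trans\longrunvarmatrixeta^{-1}\Loading}=n\,\mineig{n^{-1}\Loading\trans\longrunvarmatrixeta^{-1}\Loading}\ge n/\Frob{(n^{-1}\Loading\trans\longrunvarmatrixeta^{-1}\Loading)^{-1}}\ge cn$ for some $c>0$ by \cref{lem:bddFacInv} of \cref{lemma:extrasFeasible}, and $\mineig{H_K\trans H_K}=\rnorm{H_K^{-1}}^{-2}\ge\Frob{H_K^{-1}}^{-2}$ is bounded below in probability. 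Hence $\Frob{G^{-1}}=O_P(n^{-1})$, and since $\mineig{\hat G}\ge\mineig{G}-\Frob{\hat G-G}$ is bounded below by a positive constant times $n$ with probability tending to one, also $\Frob{\hat G^{-1}}=O_P(n^{-1})$. The resolvent identity $\hat G^{-1}-G^{-1}=\hat G^{-1}(G-\hat G)G^{-1}$ together with submultiplicativity of $\Frob{\cdot}$ then gives $\Frob{\hat G^{-1}-G^{-1}}\le\Frob{\hat G^{-1}}\Frob{\hat G-G}\Frob{G^{-1}}=O_P(n^{-1})\,o_P(\sqrt n)\,O_P(n^{-1})=o_P(n^{-3/2})$.

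For \cref{LamOmUp}, put $\hat M:=\hatlongrunvarmatrixeta^{-1}\hat\Loading$ and $M:=\longrunvarmatrixeta^{-1}\Loading H_K$, so that by \cref{rotations} the difference in the lemma equals $\hat M\hat G^{-1}\hat M\trans-MG^{-1}M\trans$. Expanding $\hat M-M=\longrunvarmatrixeta^{-1}R+\Delta\Loading H_K+\Delta R$ and using the bounds above yields $\Frob{\hat M-M}=o_P(1)$, while $\Frob{M}\le\rnorm{\longrunvarmatrixeta^{-1}}\Frob{\Loading}\rnorm{H_K}=O_P(\sqrt n)$ and likewise $\Frob{\hat M}=O_P(\sqrt n)$. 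Writing
\begin{align*}
\hat M\hat G^{-1}\hat M\trans-MG^{-1}M\trans=(\hat M-M)\hat G^{-1}\hat M\trans+M(\hat G^{-1}-G^{-1})\hat M\trans+MG^{-1}(\hat M-M)\trans
\end{align*}
and bounding each summand by submultiplicativity, the three pieces are of order $o_P(1)\,O_P(n^{-1})\,O_P(\sqrt n)$, $\;O_P(\sqrt n)\,o_P(n^{-3/2})\,O_P(\sqrt n)$, and $\;O_P(\sqrt n)\,O_P(n^{-1})\,o_P(1)$ respectively, each of which is $o_P(n^{-1/2})$; summing gives the claim.

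The main obstacle is obtaining the sharp rate $\Frob{\hat G-G}=o_P(\sqrt n)$: the crude Frobenius estimate $\Frob{\Loading\trans\Delta\Loading}\le\Frob{\Loading}^2\Frob{\Delta}$ only yields $o_P(n)$, so one must exploit that $\Delta$ is diagonal and small as an \emph{operator}, and that $\longrunvarmatrixeta^{-1}$ and one copy of $\Loading$ should enter the mixed products through their spectral norms. Combined with the uniform (in-probability) lower bound showing $\mineig{\hat G}$ is of exact order $n$, everything else reduces to routine bookkeeping with the resolvent identity and submultiplicativity of the Frobenius norm.
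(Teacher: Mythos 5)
Your proof is correct and follows essentially the same route as the paper's: show $\Frob{\hat G-G}=o_P(\sqrt{n})$ by exploiting spectral-norm bounds on the diagonal perturbation and the loading cross terms, transfer the rate to the inverses using the well-conditioning of $n^{-1}G$ (via \cref{lem:bddFacInv} of \cref{lemma:extrasFeasible} and the bounds on $H_K$), and obtain part (2) from the same three-term telescoping decomposition together with the rotation invariance of \cref{rotations}. The only deviation is in how the rate is passed to the inverses: the paper normalizes by $n$ and appeals to continuity of matrix inversion on a compact set of invertible matrices, whereas you use the resolvent identity $\hat G^{-1}-G^{-1}=\hat G^{-1}(G-\hat G)G^{-1}$ with a Weyl lower bound on $\mineig{\hat G}$ — an equally valid and in fact more explicit argument, and you also spell out the $\longrunvaretaT{i}$ versus $\longrunvareta{i}$ discrepancy that the paper's proof glosses over.
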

\begin{proof}
We start by noting that $\Frob{H_K\trans\Loading\trans \longrunvarmatrixeta^{-1}\Loading H_K-\hat\Loading\trans \hatlongrunvarmatrixeta^{-1}\hat\Loading}=o_P(\sqrt{n})$: This follows from $\Frob{\Loading H_K-\hat\Loading}=o_P(1)$ (\cref{lem:FactEst}) and $\rnorm{\longrunvarmatrixeta^{-1}-\hatlongrunvarmatrixeta^{-1}}=o_P(n^{-1/2})$ (\cref{assumption:estimators}) in combination with $H_K$ being bounded and $\Frob{\Loading}=O(\sqrt{n})$. 
Next, we have that 
\begin{align}
  \Frob{\frac{1}{n}H_K\trans\Loading\trans \longrunvarmatrixeta^{-1}\Loading H_K}
  \le\Frob{H_K}^2\frac{\Frob{\Loading}^2}{n}\rnorm{\longrunvarmatrixeta^{-1}}=O(1),
\end{align}
and 
\begin{align}
\mineig{\frac{1}{n}H_K\trans\Loading\trans \longrunvarmatrixeta^{-1}\Loading H_K}
=&\rnorm{H_K^{-1}\left(\frac{1}{n}\Loading\trans \longrunvarmatrixeta^{-1}\Loading\right)^{-1} (H_K')^{-1}}^{-1} \\ \ge&\Frob{H_K^{-1}}^{-2}\rnorm{\left(\frac{1}{n}\Loading\trans \longrunvarmatrixeta^{-1}\Loading\right)^{-1}}^{-1},
\end{align}
which is bounded away from zero thanks to $\Frob{H_K^{-1}}$ being bounded and \cref{lem:bddFacInv} of \cref{lemma:extrasFeasible}. Thus, we can restrict attention to a compact subset of the invertible matrices on $\mathds{R}^K$, on which the matrix inverse is uniformly continuous. Therefore, $\Frob{\frac{1}{n}H_K\trans\Loading\trans \longrunvarmatrixeta^{-1}\Loading H_K-\frac{1}{n}\hat\Loading\trans \hatlongrunvarmatrixeta^{-1}\hat\Loading}=o_P(n^{-1/2})$ implies the same for 
$\Frob{\left(\frac{1}{n}H_K\trans\Loading\trans \longrunvarmatrixeta^{-1}\Loading H_K\right)^{-1}-\left(\frac{1}{n}\hat\Loading\trans \hatlongrunvarmatrixeta^{-1}\hat\Loading \right)^{-1} }$.

For \cref{LamOmUp}, let $a= \longrunvarmatrixeta^{-1}\Loading H_K$ and $b= \left(H_K'\Loading\trans \longrunvarmatrixeta^{-1}\Loading H_K\right)^{-1}$ and define $\hat a=\hat\longrunvarmatrixeta^{-1}\hat\Loading$ and $\hat b=\left(\hat\Loading\trans \hat\longrunvarmatrixeta^{-1}\hat\Loading \right)^{-1}$ analogously. Thus \begin{align}
\MoveEqLeft\Frob{\hatlongrunvarmatrixeta^{-1}\hat\Loading\left(\hat\Loading\trans \hatlongrunvarmatrixeta^{-1}\hat\Loading \right)^{-1} \hat\Loading\trans \hatlongrunvarmatrixeta^{-1}
-\longrunvarmatrixeta^{-1}\Loading\left(\Loading\trans \longrunvarmatrixeta^{-1}\Loading \right)^{-1}\Loading\trans \longrunvarmatrixeta^{-1}}\\
&=\Frob{\hat a\hat b\hat a'-aba'}\\
&\le \Frob{\hat a -a}\Frob{\hat b}\Frob{\hat a}+\Frob{a}\Frob{\hat b -b}\Frob{\hat a}
+\Frob{a}\Frob{b}\Frob{\hat a -a}.
  \end{align}
  From \cref{assumption:factor_loadings} and $H_K$ being bounded it follows that $\Frob{b}=O_P(n^{-1})$ and in combination with \cref{assumption:serial correlation} we obtain 
  \begin{align}
  \Frob{a}\le \rnorm{\longrunvarmatrixeta^{-1}}\Frob{\Loading}\Frob{H_K}=O_P(\sqrt{n}).
\end{align}
From \cref{difInv}, $\Frob{\hat b -b}=o_P(n^{-3/2})$ so that also $\Frob{\hat b}=O_P(n^{-1})$. Finally, we have
  \begin{align}
  \Frob{\hat a -a}\le& \rnorm{\hatlongrunvarmatrixeta^{-1}-\longrunvarmatrixeta^{-1}}\Frob{\hat\Loading}\Frob{H_K}+\rnorm{\longrunvarmatrixeta^{-1}}\Frob{\hat\Loading-\Loading H_K}\\
  =&o_P(n^{-1/2})O_P(\sqrt{n})O_P(1)+O(1)o_P(1)=o_P(1),
\end{align}
where $\rnorm{\hatlongrunvarmatrixeta^{-1}-\longrunvarmatrixeta^{-1}}=o_P(n^{-1/2})$ by \cref{assumption:estimators} and $\Frob{\hat\Loading-\Loading H_K}=o_P(1)$ by \cref{lem:FactEst}. Combining all these results indeed yields the correct rate. 
\end{proof}

\begin{proof}[Independent proof of \cref{prop:local_powers_BN_and_MP}]\label{prooflocalpower}
Here we demonstrate the joint asymptotic normality required to apply the second part of \cref{equivalence}.
We divide the proof into two parts. In Part~A, we prove the theorem for $P_a$ while in Part~B we discuss $t_a$. We omit the proofs concerning $P_b$ and $t_b$ as they follow along  the same lines.

\emph{Part A:}
First, we establish the joint convergence, under $\lawMP$ and $\lawPANIC$, of $P_a$ and the local likelihood ratio. As already hinted at in \cref{remark:equivalence}, the results in \cref{expandPANIC,expaMP} imply that we only have to show this convergence once to get the powers in both experiments, as both likelihood ratios are asymptotically equivalent and the models coincide under the hypothesis. Having established this joint convergence, an application of Le Cam's third lemma will lead to the asymptotic distribution of $P_a$ under $\lawMP[h]$ and $\lawPANIC[h]$. 

Specifically, \cref{lemma:NormalityCS,lemma:PANICCS} imply that the limiting distributions of $\left(P_a,\log \frac{ \rd \lawPANIC[h]}{\rd \lawPANIC}\right)$ and $\left(P_a,\log \frac{ \rd \lawMP[h]}{\rd \lawMP}\right)$ are equal to that of $\left(P_a,h\etaCS-\frac{1}{4}h^2\right)$, under $\lawMP$ and $\lawPANIC$. From Lemma 1 and Lemma 2 in \cite{BaiNg2010} we see that $P_a$ is adaptive with respect to the estimation of nuisance parameters while Lemma A.2 in \cite{MoonPerron2004} shows that $\frac{1}{n T^2}\sum_{i=1}^n E_{i,-1}^\prime E_{i,-1}$ converges in probability to $\frac{1}{2} \omega^2$. Therefore, $P_a$ is asymptotically equivalent to $\tilde P_ a = \frac{\frac{1}{\sqrt n T} \sum_{i=1}^n E_{i,-1}^\prime\Delta E_{i}- \frac{1}{\sqrt n} \sum_{i=1}^n \oslongrunvareta{i}}{\sqrt{\phi^4/2}}$.

Under $\lawMP$ or $\lawPANIC$, we can compute the asymptotic distribution of all possible linear combinations of $\tilde P_a$ and $\etaCS$ by an application of \cref{lemma:normality}. For all $\alpha$, $\beta$ in $\mathds R$, we find, using $a_{i,n,T}=\alpha \frac{\longrunvaretaT{i} }{\sqrt{\phi^4/2}}+\beta$ in \cref{lemma:normality},
\begin{align}
\alpha \tilde P_a + \beta \etaCS \vto N\left(0, \left(\alpha^2 + \alpha\beta \sqrt{\frac{2 \omega^4}{\phi^4}} + \frac{\beta^2}{2}\right)\right).
\end{align}

Thus, the Cram\'er-Wold theorem and the asymptotic equivalence of $P_a$ and $\tilde P_a$, yield, still under $\lawMP$ or $\lawPANIC$,
\begin{align}\left( P_a ,  \etaCS \right)
\vto
N\left( \left( \begin{array}{c} 0 \\ 0\end{array}\right) , \left( \begin{array}{cc} 1 &   \sqrt{\frac{\omega^4}{ 2\phi^4}}\\  \sqrt{\frac{\omega^4}{ 2\phi^4}} & 1/2\end{array}\right)  \right).\end{align}
Equivalently,
\begin{align}\left( P_a ,  \log\frac{ \rd \law[h]}{\rd \law} \right)
\vto
N\left( \left( \begin{array}{c} 0 \\ -\frac{1}{4}h^2\end{array}\right) , \left( \begin{array}{cc} 1 &   h\sqrt{\frac{\omega^4}{ 2\phi^4}}\\  h\sqrt{\frac{\omega^4}{ 2\phi^4}} & 1/2 h^2\end{array}\right)  \right).\end{align}

Applying Le Cam's third lemma, we obtain $P_a \vto N\left(h\sqrt{\frac{\omega^4}{2\phi^4}},1\right)$ under $\lawMP[h]$ or $\lawPANIC[h]$.

\emph{Part B:}
As far as $t_a$ is concerned, we recall that $t_a$ is adaptive with respect to the estimation of nuisance parameters (see proofs of Theorem 2a) and b) in \cite{MoonPerron2004}) and that $\frac{1}{nT^2}\sum_{t=1}^T Y_{\cdot,t-1}^\prime Q_\gamma Y_{\cdot,t-1}$ converges in probability to $\frac{1}{2}\omega^2$ under $\lawMP[0]$ . Thus,
$t_a$ is asymptotically equivalent to 
\begin{align}
  \tilde t_a =  \frac{\frac{1}{\sqrt n T}\sum_{i=1}^n  Y_{\cdot,t-1}^\prime Q_\Loading \Delta Y_{\cdot,t-1}- \sqrt n \sum_{i=1}^n \oslongrunvareta{i}}{\sqrt{\phi^4/2}}.
\end{align}
  Moreover, we have \begin{align}
  \MoveEqLeft\frac{1}{\sqrt n T} \sum_{t=1}^T Y_{\cdot,t}^\prime Q_\Loading \Delta Y_{\cdot,t-1} 
  = \frac{1}{\sqrt n T} \sum_{t=1}^T E_{\cdot,t}^\prime Q_\Loading \Delta E_{\cdot,t-1}\\
   & = \frac{1}{\sqrt n T} \sum_{t=1}^T E_{\cdot,t}^\prime \Delta E_{\cdot,t-1} - \frac{1}{\sqrt n T}  \sum_{t=1}^T E_{\cdot,t}^\prime \Loading(\Loading^\prime\Loading)^{-1}\Loading \Delta E_{\cdot,t-1} \\
   &= \frac{1}{\sqrt n T} \sum_{i=1}^n E_{-1,i}^\prime\Delta E_{i} + o_P(1),
\end{align}
 where the last equality follows from the proof of Lemma~2 c) in \cite{MoonPerron2004}. Therefore, $t_a$ is asymptotically equivalent to $\tilde P_a$. Thus, following the same steps as in Part~A,  we find $t_a \vto N\left(h \sqrt{\frac{\omega^4}{ 2\phi^4}}, 1 \right)$ under $\lawMP[h]$ or $\lawPANIC[h]$.
\end{proof}

\clearpage
\setcounter{page}{1}
\section*{Additional Monte-Carlo Results}
In this appendix we present sizes and powers for additional DGPs and additional long-run variance estimates. The first subsection provides sizes and powers for additional DGPs. In the second subsection, we consider the same DGPs as in \cref{sec:MC,appendix:DGP}, but with long-run variances estimated using the \cite{NeweyWest1994} bandwidth. \Cref{table:sizesNe1_0.4_0.4_BN_,table:sizesNeThree3_0.4_0.4_BN_,table:sizesNeT1_0.4_0.4_BN_} are analogous to \cref{table:sizesAn1_0.4_0.4_BN_,table:sizesAnThree3_0.4_0.4_BN_,table:sizesAnT1_0.4_0.4_BN_}. \Cref{DifMPNpowersMPDifNeiidBN0.811,NpowersNeiidBN0.811,NpowersStatNeiidBN0.811,DifNpowersNeiidBN111,NpowersCorrNeMABN0.811,NpowersCorrNeARBN0.811,NpowersThreeNeiidBN0.831,NpowersHetNeiidBN0.810,NpowersTNeiidBN0.811} are analogous to \cref{DifMPNpowersMPDifiidBN0.811,NpowersiidBN0.811,NpowersStatiidBN0.811,DifNpowersiidBN111,NpowersCorrMABN0.811,NpowersCorrARBN0.811,NpowersThreeiidBN0.831,NpowersHetiidBN0.810,NpowersTiidBN0.811}. In general, the sizes for the MA case are slightly better controlled with the \cite{NeweyWest1994} bandwidth, at the expense of slightly lower power for small sample sizes.
\subsection*{Sizes and Powers in Additional DGPs}
\label{appendix:DGP}

First, \cref{NpowersCorrMABN0.811,NpowersCorrARBN0.811} consider the powers in the presence of  MA and AR serial correlation, respectively. The results are similar to those for i.i.d\ innovations. \Cref{NpowersStatiidBN0.811} shows the results when the factor innovations are overdifferenced, i.e., the factor is stationary under the hypothesis. The powers appear to be unaffected. \Cref{NpowersThreeiidBN0.831} considers the case of the dependence being generated by three factors, with the corresponding sizes reported in \cref{table:sizesAnThree3_0.4_0.4_BN_}. For very small sample sizes, powers of both tests are affected, but generally the results are similar also here. 

We now consider deviations from our assumptions.
\cref{NpowersHetiidBN0.810} reports the size-corrected powers of our tests against heterogeneous alternatives of the form 
\begin{align}\label{localAlternatives}
	\rho_{i}=1+\frac{hU_i}{\sqrt{n}T},
\end{align}
where the $U_i$ are i.i.d.\ random variables with mean one. We draw the $U_i$ from a Uniform(0.2,1.8) distribution. Once again, the finite-sample behaviour does not appear to be affected significantly, for both small and large samples.

Finally, we consider non-Gaussian innovations. \cref{NpowersTiidBN0.811} reports size corrected powers with the innovations drawn from a $t$ distribution with five degrees of freedom. The corresponding sizes are reported in \cref{table:sizesAnT1_0.4_0.4_BN_}. Also here, the conclusions remain the same.

\begin{figure}[ht]
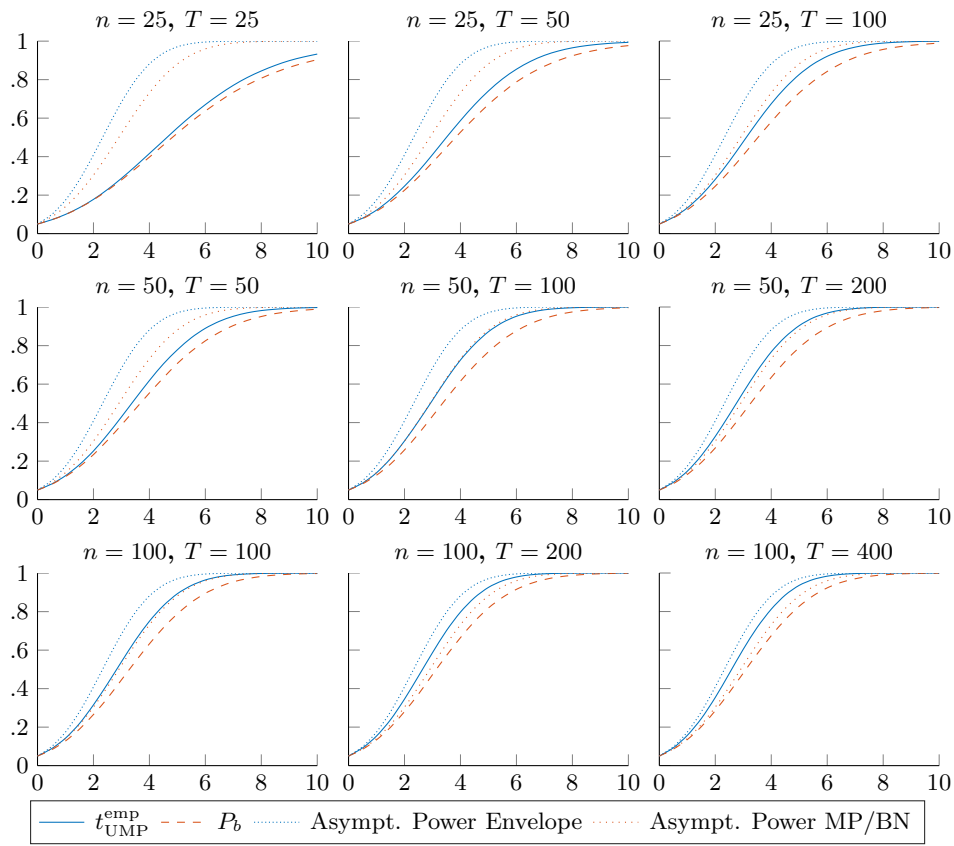
\footnotesize\begin{subfigure}[b]{0.3466\textwidth}\input{graphs/stamps/NpowersCorr25_25_1_0.8_MA_0.4_MA_0.4_BN_.tex}\end{subfigure}\begin{subfigure}[b]{0.3267\textwidth}\input{graphs/stamps/NpowersCorr50_25_1_0.8_MA_0.4_MA_0.4_BN_.tex}\end{subfigure}\begin{subfigure}[b]{0.3267\textwidth}\input{graphs/stamps/NpowersCorr100_25_1_0.8_MA_0.4_MA_0.4_BN_.tex}\end{subfigure}\\ \begin{subfigure}[b]{0.3466\textwidth}\input{graphs/stamps/NpowersCorr50_50_1_0.8_MA_0.4_MA_0.4_BN_.tex}\end{subfigure}\begin{subfigure}[b]{0.3267\textwidth}\input{graphs/stamps/NpowersCorr100_50_1_0.8_MA_0.4_MA_0.4_BN_.tex}\end{subfigure}\begin{subfigure}[b]{0.3267\textwidth}\input{graphs/stamps/NpowersCorr200_50_1_0.8_MA_0.4_MA_0.4_BN_.tex}\end{subfigure}\\ \begin{subfigure}[b]{0.3466\textwidth}\input{graphs/stamps/NpowersCorr100_100_1_0.8_MA_0.4_MA_0.4_BN_.tex}\end{subfigure}\begin{subfigure}[b]{0.3267\textwidth}\input{graphs/stamps/NpowersCorr200_100_1_0.8_MA_0.4_MA_0.4_BN_.tex}\end{subfigure}\begin{subfigure}[b]{0.3267\textwidth}\input{graphs/stamps/NpowersCorr400_100_1_0.8_MA_0.4_MA_0.4_BN_.tex}\end{subfigure}\\ \begin{subfigure}[b]{\textwidth}\centering\input{graphs/stamps/lNpowersCorr400_100_1_0.8_MA_0.4_MA_0.4_BN_.tex}\end{subfigure}\normalsize\caption{\label{NpowersCorrMABN0.811}Size-corrected power of unit-root tests as a function of $-h$ for varying sample sizes in the PANIC framework with MA factor innovations and MA idiosyncratic parts and $\sqrt{\omega^4/\phi^4}=0.8$. Based on \num{100000} replications.}\end{figure}
\begin{figure}[ht]\footnotesize\begin{subfigure}[b]{0.3466\textwidth}\input{graphs/stamps/NpowersCorr25_25_1_0.8_AR_0.4_AR_0.4_BN_.tex}\end{subfigure}\begin{subfigure}[b]{0.3267\textwidth}\input{graphs/stamps/NpowersCorr50_25_1_0.8_AR_0.4_AR_0.4_BN_.tex}\end{subfigure}\begin{subfigure}[b]{0.3267\textwidth}\input{graphs/stamps/NpowersCorr100_25_1_0.8_AR_0.4_AR_0.4_BN_.tex}\end{subfigure}\\ \begin{subfigure}[b]{0.3466\textwidth}\input{graphs/stamps/NpowersCorr50_50_1_0.8_AR_0.4_AR_0.4_BN_.tex}\end{subfigure}\begin{subfigure}[b]{0.3267\textwidth}\input{graphs/stamps/NpowersCorr100_50_1_0.8_AR_0.4_AR_0.4_BN_.tex}\end{subfigure}\begin{subfigure}[b]{0.3267\textwidth}\input{graphs/stamps/NpowersCorr200_50_1_0.8_AR_0.4_AR_0.4_BN_.tex}\end{subfigure}\\ \begin{subfigure}[b]{0.3466\textwidth}\input{graphs/stamps/NpowersCorr100_100_1_0.8_AR_0.4_AR_0.4_BN_.tex}\end{subfigure}\begin{subfigure}[b]{0.3267\textwidth}\input{graphs/stamps/NpowersCorr200_100_1_0.8_AR_0.4_AR_0.4_BN_.tex}\end{subfigure}\begin{subfigure}[b]{0.3267\textwidth}\input{graphs/stamps/NpowersCorr400_100_1_0.8_AR_0.4_AR_0.4_BN_.tex}\end{subfigure}\\ \begin{subfigure}[b]{\textwidth}\centering\input{graphs/stamps/lNpowersCorr400_100_1_0.8_AR_0.4_AR_0.4_BN_.tex}\end{subfigure}\normalsize\caption{\label{NpowersCorrARBN0.811}Size-corrected power of unit-root tests as a function of $-h$ for varying sample sizes in the PANIC framework with AR factor innovations and AR idiosyncratic parts and $\sqrt{\omega^4/\phi^4}=0.8$. Based on \num{100000} replications.}\end{figure}
\begin{figure}[ht]\footnotesize\begin{subfigure}[b]{0.3466\textwidth}\input{graphs/stamps/NpowersStat25_25_1_0.8_iid_0.4_overdiffiid_0.4_BN_.tex}\end{subfigure}\begin{subfigure}[b]{0.3267\textwidth}\input{graphs/stamps/NpowersStat50_25_1_0.8_iid_0.4_overdiffiid_0.4_BN_.tex}\end{subfigure}\begin{subfigure}[b]{0.3267\textwidth}\input{graphs/stamps/NpowersStat100_25_1_0.8_iid_0.4_overdiffiid_0.4_BN_.tex}\end{subfigure}\\ \begin{subfigure}[b]{0.3466\textwidth}\input{graphs/stamps/NpowersStat50_50_1_0.8_iid_0.4_overdiffiid_0.4_BN_.tex}\end{subfigure}\begin{subfigure}[b]{0.3267\textwidth}\input{graphs/stamps/NpowersStat100_50_1_0.8_iid_0.4_overdiffiid_0.4_BN_.tex}\end{subfigure}\begin{subfigure}[b]{0.3267\textwidth}\input{graphs/stamps/NpowersStat200_50_1_0.8_iid_0.4_overdiffiid_0.4_BN_.tex}\end{subfigure}\\ \begin{subfigure}[b]{0.3466\textwidth}\input{graphs/stamps/NpowersStat100_100_1_0.8_iid_0.4_overdiffiid_0.4_BN_.tex}\end{subfigure}\begin{subfigure}[b]{0.3267\textwidth}\input{graphs/stamps/NpowersStat200_100_1_0.8_iid_0.4_overdiffiid_0.4_BN_.tex}\end{subfigure}\begin{subfigure}[b]{0.3267\textwidth}\input{graphs/stamps/NpowersStat400_100_1_0.8_iid_0.4_overdiffiid_0.4_BN_.tex}\end{subfigure}\\ \begin{subfigure}[b]{\textwidth}\centering\input{graphs/stamps/lNpowersStat400_100_1_0.8_iid_0.4_overdiffiid_0.4_BN_.tex}\end{subfigure}\normalsize\caption{\label{NpowersStatiidBN0.811}Size-corrected power of unit-root tests as a function of $-h$ for varying sample sizes in the PANIC framework with overdifferenced i.i.d.\ factor innovations and i.i.d.\ idiosyncratic parts and $\sqrt{\omega^4/\phi^4}=0.8$. The factor is stationary.  Based on \num{100000} replications.}\end{figure}
\begin{figure}[ht]\footnotesize\begin{subfigure}[b]{0.3466\textwidth}\input{graphs/stamps/NpowersThree25_25_3_0.8_iid_0.4_iid_0.4_BN_.tex}\end{subfigure}\begin{subfigure}[b]{0.3267\textwidth}\input{graphs/stamps/NpowersThree50_25_3_0.8_iid_0.4_iid_0.4_BN_.tex}\end{subfigure}\begin{subfigure}[b]{0.3267\textwidth}\input{graphs/stamps/NpowersThree100_25_3_0.8_iid_0.4_iid_0.4_BN_.tex}\end{subfigure}\\ \begin{subfigure}[b]{0.3466\textwidth}\input{graphs/stamps/NpowersThree50_50_3_0.8_iid_0.4_iid_0.4_BN_.tex}\end{subfigure}\begin{subfigure}[b]{0.3267\textwidth}\input{graphs/stamps/NpowersThree100_50_3_0.8_iid_0.4_iid_0.4_BN_.tex}\end{subfigure}\begin{subfigure}[b]{0.3267\textwidth}\input{graphs/stamps/NpowersThree200_50_3_0.8_iid_0.4_iid_0.4_BN_.tex}\end{subfigure}\\ \begin{subfigure}[b]{0.3466\textwidth}\input{graphs/stamps/NpowersThree100_100_3_0.8_iid_0.4_iid_0.4_BN_.tex}\end{subfigure}\begin{subfigure}[b]{0.3267\textwidth}\input{graphs/stamps/NpowersThree200_100_3_0.8_iid_0.4_iid_0.4_BN_.tex}\end{subfigure}\begin{subfigure}[b]{0.3267\textwidth}\input{graphs/stamps/NpowersThree400_100_3_0.8_iid_0.4_iid_0.4_BN_.tex}\end{subfigure}\\ \begin{subfigure}[b]{\textwidth}\centering\input{graphs/stamps/lNpowersThree400_100_3_0.8_iid_0.4_iid_0.4_BN_.tex}\end{subfigure}\normalsize\caption{\label{NpowersThreeiidBN0.831}Size-corrected power of unit-root tests as a function of $-h$ for varying sample sizes in the PANIC framework with i.i.d.\ factor innovations and i.i.d.\ idiosyncratic parts and $\sqrt{\omega^4/\phi^4}=0.8$. Dependence based on three factors. Based on \num{100000} replications.}\end{figure}

\begin{figure}[ht]\footnotesize\begin{subfigure}[b]{0.3466\textwidth}\input{graphs/stamps/NpowersHet25_25_1_0.8_iid_0.4_iid_0.4_BN__Uniform.tex}\end{subfigure}\begin{subfigure}[b]{0.3267\textwidth}\input{graphs/stamps/NpowersHet50_25_1_0.8_iid_0.4_iid_0.4_BN__Uniform.tex}\end{subfigure}\begin{subfigure}[b]{0.3267\textwidth}\input{graphs/stamps/NpowersHet100_25_1_0.8_iid_0.4_iid_0.4_BN__Uniform.tex}\end{subfigure}\\ \begin{subfigure}[b]{0.3466\textwidth}\input{graphs/stamps/NpowersHet50_50_1_0.8_iid_0.4_iid_0.4_BN__Uniform.tex}\end{subfigure}\begin{subfigure}[b]{0.3267\textwidth}\input{graphs/stamps/NpowersHet100_50_1_0.8_iid_0.4_iid_0.4_BN__Uniform.tex}\end{subfigure}\begin{subfigure}[b]{0.3267\textwidth}\input{graphs/stamps/NpowersHet200_50_1_0.8_iid_0.4_iid_0.4_BN__Uniform.tex}\end{subfigure}\\ \begin{subfigure}[b]{0.3466\textwidth}\input{graphs/stamps/NpowersHet100_100_1_0.8_iid_0.4_iid_0.4_BN__Uniform.tex}\end{subfigure}\begin{subfigure}[b]{0.3267\textwidth}\input{graphs/stamps/NpowersHet200_100_1_0.8_iid_0.4_iid_0.4_BN__Uniform.tex}\end{subfigure}\begin{subfigure}[b]{0.3267\textwidth}\input{graphs/stamps/NpowersHet400_100_1_0.8_iid_0.4_iid_0.4_BN__Uniform.tex}\end{subfigure}\\ \begin{subfigure}[b]{\textwidth}\centering\input{graphs/stamps/lNpowersHet400_100_1_0.8_iid_0.4_iid_0.4_BN__Uniform.tex}\end{subfigure}\normalsize\caption{\label{NpowersHetiidBN0.810}Size-corrected power of unit-root tests as a function of $-h$ for varying sample sizes in the PANIC framework with i.i.d.\ factor innovations and i.i.d.\ idiosyncratic parts and $\sqrt{\omega^4/\phi^4}=0.8$. Alternatives drawn from a Uniform(0.2,1.8) distribution. Based on \num{100000} replications.}\end{figure}

\begin{figure}[ht]
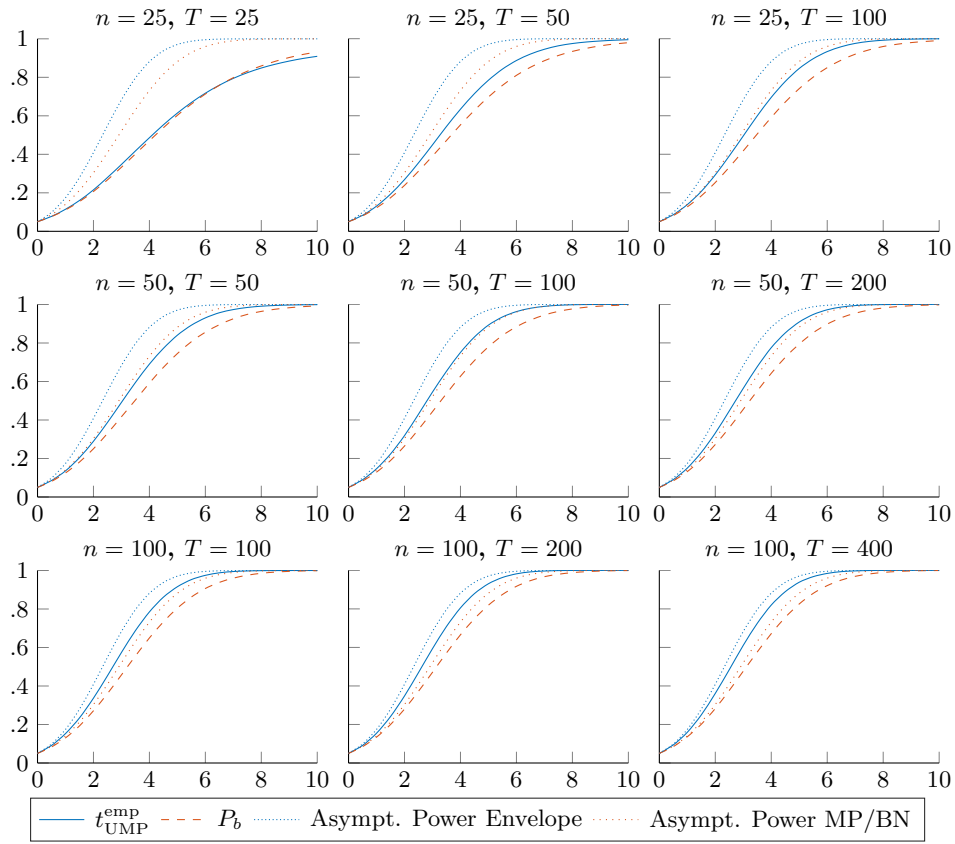
\footnotesize\begin{subfigure}[b]{0.3466\textwidth}\input{graphs/stamps/NpowersT25_25_1_0.8_iid_0.4_iid_0.4_BN_.tex}\end{subfigure}\begin{subfigure}[b]{0.3267\textwidth}\input{graphs/stamps/NpowersT50_25_1_0.8_iid_0.4_iid_0.4_BN_.tex}\end{subfigure}\begin{subfigure}[b]{0.3267\textwidth}\input{graphs/stamps/NpowersT100_25_1_0.8_iid_0.4_iid_0.4_BN_.tex}\end{subfigure}\\ \begin{subfigure}[b]{0.3466\textwidth}\input{graphs/stamps/NpowersT50_50_1_0.8_iid_0.4_iid_0.4_BN_.tex}\end{subfigure}\begin{subfigure}[b]{0.3267\textwidth}\input{graphs/stamps/NpowersT100_50_1_0.8_iid_0.4_iid_0.4_BN_.tex}\end{subfigure}\begin{subfigure}[b]{0.3267\textwidth}\input{graphs/stamps/NpowersT200_50_1_0.8_iid_0.4_iid_0.4_BN_.tex}\end{subfigure}\\ \begin{subfigure}[b]{0.3466\textwidth}\input{graphs/stamps/NpowersT100_100_1_0.8_iid_0.4_iid_0.4_BN_.tex}\end{subfigure}\begin{subfigure}[b]{0.3267\textwidth}\input{graphs/stamps/NpowersT200_100_1_0.8_iid_0.4_iid_0.4_BN_.tex}\end{subfigure}\begin{subfigure}[b]{0.3267\textwidth}\input{graphs/stamps/NpowersT400_100_1_0.8_iid_0.4_iid_0.4_BN_.tex}\end{subfigure}\\ \begin{subfigure}[b]{\textwidth}\centering\input{graphs/stamps/lNpowersT400_100_1_0.8_iid_0.4_iid_0.4_BN_.tex}\end{subfigure}\normalsize\caption{\label{NpowersTiidBN0.811}Size-corrected power of unit-root tests as a function of $-h$ for varying sample sizes in the PANIC framework with i.i.d.\ factor innovations and i.i.d.\ idiosyncratic parts and $\sqrt{\omega^4/\phi^4}=0.8$. Innovations drawn from a $t_5$ distribution. Note that the power envelopes refer to the Gaussian experiment.  Based on \num{100000} replications.}\end{figure}
\clearpage

\begin{table}
\resizebox{\textwidth}{!}{
\begin{tabular}{@{}S[table-figures-decimal=0,table-figures-integer=3]S[table-figures-decimal=0,table-figures-integer=3]S[table-figures-decimal=1,table-figures-integer=1]cSSScSSScSSS@{}}
\toprule
&&&&\multicolumn{3}{c}{i.i.d.}&&\multicolumn{3}{c}{AR(1)}&&\multicolumn{3}{c}{MA(1)}\\ \cmidrule{5-7} \cmidrule{9-11} \cmidrule{13-15}
{$n$} & {$T$} & {$\sqrt{\omega^4/\phi^4}$} & {} & {$\tUMP$} & {$\tUMPEMP$} & {$P_b$} & {} & {$\tUMP$} & {$\tUMPEMP$} & {$P_b$} & {} & {$\tUMP$} & {$\tUMPEMP$} & {$P_b$} \\
\midrule
25.0000 & 25.0000 & 0.6000 &  & 0.7962 & 3.8669 & 5.8818 &  & 4.6350 & 10.4648 & 9.4674 &  & 3.9021 & 10.8189 & 9.5658 \\
25.0000 & 50.0000 & 0.6000 &  & 1.3906 & 5.6856 & 6.6231 &  & 3.1102 & 8.2077 & 6.4387 &  & 4.1678 & 12.0357 & 9.3665 \\
25.0000 & 100.0000 & 0.6000 &  & 1.7551 & 6.5244 & 7.1257 &  & 3.5090 & 9.2923 & 6.4749 &  & 5.1133 & 13.7465 & 9.9041 \\
50.0000 & 50.0000 & 0.6000 &  & 1.6533 & 4.3996 & 4.7311 &  & 4.8251 & 8.2410 & 5.6254 &  & 6.7739 & 12.8545 & 8.4252 \\
50.0000 & 100.0000 & 0.6000 &  & 2.1212 & 5.0920 & 5.1464 &  & 4.2693 & 7.9506 & 4.8411 &  & 7.4065 & 14.0242 & 8.4378 \\
50.0000 & 200.0000 & 0.6000 &  & 2.4416 & 5.4761 & 5.3857 &  & 4.6366 & 8.5042 & 4.9744 &  & 6.3889 & 11.8987 & 7.2555 \\
100.0000 & 100.0000 & 0.6000 &  & 2.9127 & 4.9994 & 4.6021 &  & 5.4476 & 7.7822 & 4.7116 &  & 11.3178 & 16.5656 & 9.2628 \\
100.0000 & 200.0000 & 0.6000 &  & 3.1384 & 5.1736 & 4.8487 &  & 5.0329 & 7.4031 & 4.5095 &  & 8.4932 & 12.5348 & 7.3641 \\
100.0000 & 400.0000 & 0.6000 &  & 3.2521 & 5.2887 & 4.9561 &  & 5.7283 & 8.3145 & 4.9024 &  & 6.0023 & 8.8612 & 5.6634 \\
25.0000 & 25.0000 & 0.8000 &  & 0.9810 & 3.7338 & 5.2208 &  & 4.9076 & 9.8435 & 9.6164 &  & 4.1192 & 10.0258 & 9.4909 \\
25.0000 & 50.0000 & 0.8000 &  & 1.9054 & 5.6916 & 5.9999 &  & 2.8183 & 6.6715 & 6.0226 &  & 3.9827 & 10.0882 & 8.9819 \\
25.0000 & 100.0000 & 0.8000 &  & 2.5064 & 6.6163 & 6.5990 &  & 2.9499 & 7.0354 & 6.0180 &  & 4.7119 & 11.1431 & 9.5114 \\
50.0000 & 50.0000 & 0.8000 &  & 2.4165 & 4.9557 & 4.9910 &  & 4.5332 & 7.0930 & 6.5310 &  & 6.6538 & 11.4154 & 9.8946 \\
50.0000 & 100.0000 & 0.8000 &  & 3.0251 & 5.6452 & 5.4536 &  & 3.6292 & 6.1694 & 5.3056 &  & 6.8389 & 11.7291 & 9.6216 \\
50.0000 & 200.0000 & 0.8000 &  & 3.3481 & 5.9672 & 5.7878 &  & 3.7025 & 6.2602 & 5.3175 &  & 5.7277 & 9.7367 & 8.1180 \\
100.0000 & 100.0000 & 0.8000 &  & 3.5691 & 5.4105 & 4.9983 &  & 4.6138 & 6.3163 & 5.6847 &  & 10.1845 & 14.2208 & 11.5615 \\
100.0000 & 200.0000 & 0.8000 &  & 3.8384 & 5.6195 & 5.2732 &  & 3.9635 & 5.6051 & 4.9913 &  & 7.4150 & 10.4451 & 8.5821 \\
100.0000 & 400.0000 & 0.8000 &  & 3.8806 & 5.5888 & 5.3605 &  & 4.3856 & 6.1650 & 5.3575 &  & 5.2138 & 7.3246 & 6.3555 \\
25.0000 & 25.0000 & 1.0000 &  & 1.2397 & 3.9659 & 5.2410 &  & 5.0684 & 9.5982 & 10.2110 &  & 4.3615 & 9.8044 & 10.1019 \\
25.0000 & 50.0000 & 1.0000 &  & 2.3887 & 6.0448 & 6.0907 &  & 2.8032 & 6.2196 & 6.3289 &  & 4.1320 & 9.5884 & 9.4867 \\
25.0000 & 100.0000 & 1.0000 &  & 3.0960 & 6.9557 & 6.7638 &  & 2.8186 & 6.2345 & 6.1083 &  & 4.7735 & 10.4260 & 10.0881 \\
50.0000 & 50.0000 & 1.0000 &  & 2.8943 & 5.3350 & 5.3508 &  & 4.4731 & 6.7677 & 7.6859 &  & 6.5797 & 10.8650 & 11.4672 \\
50.0000 & 100.0000 & 1.0000 &  & 3.4405 & 5.8994 & 5.6885 &  & 3.3951 & 5.5735 & 5.8338 &  & 6.6612 & 10.9441 & 10.9341 \\
50.0000 & 200.0000 & 1.0000 &  & 3.7978 & 6.2282 & 6.0694 &  & 3.3643 & 5.5223 & 5.6364 &  & 5.5642 & 8.9847 & 8.8904 \\
100.0000 & 100.0000 & 1.0000 &  & 3.8792 & 5.6321 & 5.2958 &  & 4.3799 & 5.8997 & 6.6096 &  & 9.8836 & 13.5523 & 13.9209 \\
100.0000 & 200.0000 & 1.0000 &  & 4.0833 & 5.7237 & 5.5148 &  & 3.6800 & 5.1107 & 5.4257 &  & 7.1744 & 9.8918 & 9.9169 \\
100.0000 & 400.0000 & 1.0000 &  & 4.2268 & 5.8105 & 5.6839 &  & 4.0533 & 5.5600 & 5.7140 &  & 4.9708 & 6.7904 & 6.8131 \\
\midrule\multicolumn{3}{c}{Mean abs. dev. from 5\%}  &  &2.2601 &0.7588 &0.64551 &  &0.99337 &2.178 &1.2401 &  &1.6767 &6.1228 &4.2216\\
\bottomrule
\end{tabular}}
\caption{Sizes (in percent) of nominal 5\% level tests  with no heterogeneity in the alternatives. Based on \num{1000000} replications. Andrews Bandwidth, three factors.}
\label{table:sizesAnThree3_0.4_0.4_BN_}
\end{table}

\begin{table}
\resizebox{\textwidth}{!}{
\begin{tabular}{@{}S[table-figures-decimal=0,table-figures-integer=3]S[table-figures-decimal=0,table-figures-integer=3]S[table-figures-decimal=1,table-figures-integer=1]cSSScSSScSSS@{}}
\toprule
&&&&\multicolumn{3}{c}{i.i.d.}&&\multicolumn{3}{c}{AR(1)}&&\multicolumn{3}{c}{MA(1)}\\ \cmidrule{5-7} \cmidrule{9-11} \cmidrule{13-15}
{$n$} & {$T$} & {$\sqrt{\omega^4/\phi^4}$} & {} & {$\tUMP$} & {$\tUMPEMP$} & {$P_b$} & {} & {$\tUMP$} & {$\tUMPEMP$} & {$P_b$} & {} & {$\tUMP$} & {$\tUMPEMP$} & {$P_b$} \\
\midrule
25.0000 & 25.0000 & 0.6000 &  & 0.6519 & 2.9429 & 3.3228 &  & 1.9529 & 4.9187 & 4.5589 &  & 2.3252 & 7.1691 & 5.9148 \\
25.0000 & 50.0000 & 0.6000 &  & 1.3766 & 4.7112 & 4.2022 &  & 1.7975 & 5.0220 & 3.7168 &  & 3.1576 & 9.0570 & 6.4136 \\
25.0000 & 100.0000 & 0.6000 &  & 1.8412 & 5.5369 & 4.7090 &  & 2.3311 & 6.1002 & 4.1882 &  & 3.8906 & 10.1311 & 6.7980 \\
50.0000 & 50.0000 & 0.6000 &  & 1.9896 & 4.3363 & 3.7307 &  & 2.5779 & 4.6328 & 3.5677 &  & 5.2546 & 9.9898 & 6.7567 \\
50.0000 & 100.0000 & 0.6000 &  & 2.6096 & 5.1453 & 4.2821 &  & 2.8677 & 5.2536 & 3.7584 &  & 6.0728 & 10.9300 & 7.0383 \\
50.0000 & 200.0000 & 0.6000 &  & 2.9079 & 5.4449 & 4.5221 &  & 3.3588 & 5.9291 & 4.1363 &  & 5.2537 & 9.1803 & 6.0980 \\
100.0000 & 100.0000 & 0.6000 &  & 3.2333 & 4.9976 & 4.1734 &  & 3.3417 & 4.8772 & 3.8013 &  & 9.0868 & 13.1803 & 8.2036 \\
100.0000 & 200.0000 & 0.6000 &  & 3.5501 & 5.3180 & 4.4490 &  & 3.6401 & 5.2962 & 4.0341 &  & 6.9468 & 9.9451 & 6.6518 \\
100.0000 & 400.0000 & 0.6000 &  & 3.6748 & 5.4193 & 4.6036 &  & 4.3762 & 6.1913 & 4.5044 &  & 4.9468 & 7.0638 & 5.1842 \\
25.0000 & 25.0000 & 0.8000 &  & 0.8633 & 3.1380 & 3.5236 &  & 1.9518 & 4.5172 & 4.9306 &  & 2.4244 & 6.7597 & 6.4840 \\
25.0000 & 50.0000 & 0.8000 &  & 1.7525 & 5.0215 & 4.5947 &  & 1.7399 & 4.4680 & 4.0614 &  & 3.1456 & 8.2590 & 7.1735 \\
25.0000 & 100.0000 & 0.8000 &  & 2.3157 & 5.8541 & 5.2189 &  & 2.1850 & 5.2841 & 4.5616 &  & 3.8778 & 9.2635 & 7.7405 \\
50.0000 & 50.0000 & 0.8000 &  & 2.3259 & 4.5832 & 4.1542 &  & 2.4437 & 4.2243 & 4.2674 &  & 5.2027 & 9.4108 & 8.3227 \\
50.0000 & 100.0000 & 0.8000 &  & 2.9815 & 5.3636 & 4.8061 &  & 2.6466 & 4.6723 & 4.3354 &  & 5.8894 & 10.1058 & 8.5486 \\
50.0000 & 200.0000 & 0.8000 &  & 3.3378 & 5.7085 & 5.1843 &  & 3.0297 & 5.1846 & 4.6531 &  & 4.9821 & 8.3556 & 7.1979 \\
100.0000 & 100.0000 & 0.8000 &  & 3.5279 & 5.2200 & 4.6670 &  & 3.0675 & 4.4142 & 4.4180 &  & 8.7215 & 12.4005 & 10.4291 \\
100.0000 & 200.0000 & 0.8000 &  & 3.8300 & 5.4703 & 5.0033 &  & 3.3368 & 4.7260 & 4.5346 &  & 6.5951 & 9.2644 & 7.9484 \\
100.0000 & 400.0000 & 0.8000 &  & 3.9354 & 5.5291 & 5.1134 &  & 3.9441 & 5.4998 & 5.0133 &  & 4.6953 & 6.5273 & 5.8729 \\
25.0000 & 25.0000 & 1.0000 &  & 0.9758 & 3.2925 & 3.8269 &  & 1.9857 & 4.4192 & 5.5800 &  & 2.4779 & 6.6578 & 7.3304 \\
25.0000 & 50.0000 & 1.0000 &  & 1.9868 & 5.1818 & 5.0885 &  & 1.7063 & 4.2346 & 4.4774 &  & 3.2573 & 8.1076 & 8.2202 \\
25.0000 & 100.0000 & 1.0000 &  & 2.6054 & 6.0263 & 5.8432 &  & 2.1568 & 5.0555 & 5.0845 &  & 3.9255 & 9.0100 & 8.8553 \\
50.0000 & 50.0000 & 1.0000 &  & 2.4841 & 4.6933 & 4.5562 &  & 2.4197 & 4.1105 & 5.0323 &  & 5.1279 & 9.1164 & 9.9805 \\
50.0000 & 100.0000 & 1.0000 &  & 3.1490 & 5.4369 & 5.2120 &  & 2.5772 & 4.4302 & 4.7827 &  & 5.8110 & 9.8907 & 9.9904 \\
50.0000 & 200.0000 & 1.0000 &  & 3.4869 & 5.8121 & 5.6425 &  & 2.9910 & 4.9786 & 5.1586 &  & 4.9494 & 8.1371 & 8.1176 \\
100.0000 & 100.0000 & 1.0000 &  & 3.6000 & 5.2589 & 4.9303 &  & 2.9881 & 4.2744 & 4.9690 &  & 8.5781 & 12.0980 & 12.6370 \\
100.0000 & 200.0000 & 1.0000 &  & 3.8518 & 5.4593 & 5.2434 &  & 3.2298 & 4.5700 & 4.9094 &  & 6.4189 & 8.9621 & 9.0305 \\
100.0000 & 400.0000 & 1.0000 &  & 4.0782 & 5.6390 & 5.5125 &  & 3.7884 & 5.2570 & 5.4063 &  & 4.5581 & 6.3042 & 6.3770 \\
\midrule\multicolumn{3}{c}{Mean abs. dev. from 5\%}  &  &2.2991 &0.59818 &0.55586 &  &2.2433 &0.46686 &0.55957 &  &1.3832 &4.0843 &2.7524\\
\bottomrule
\end{tabular}}
\caption{Sizes (in percent) of nominal 5\% level tests  with no heterogeneity in the alternatives. Based on \num{1000000} replications. Andrews Bandwidth, $t$-distribution with five degrees of freedom.}
\label{table:sizesAnT1_0.4_0.4_BN_}
\end{table}

\clearpage

\subsection*{Finite-Sample Results with the \cite{NeweyWest1994} Bandwidth}
\label{appendix:Newey}

\begin{figure}[ht]\footnotesize\begin{subfigure}[b]{0.36\textwidth}\input{graphs/stamps/NpowersMPDifNe25_25_1_0.8_iid_0.4_iid_0.4_MP_DifMP.tex}\end{subfigure}\begin{subfigure}[b]{0.32\textwidth}\input{graphs/stamps/NpowersMPDifNe50_25_1_0.8_iid_0.4_iid_0.4_MP_DifMP.tex}\end{subfigure}\begin{subfigure}[b]{0.32\textwidth}\input{graphs/stamps/NpowersMPDifNe100_25_1_0.8_iid_0.4_iid_0.4_MP_DifMP.tex}\end{subfigure}\\ \begin{subfigure}[b]{0.36\textwidth}\input{graphs/stamps/NpowersMPDifNe50_50_1_0.8_iid_0.4_iid_0.4_MP_DifMP.tex}\end{subfigure}\begin{subfigure}[b]{0.32\textwidth}\input{graphs/stamps/NpowersMPDifNe100_50_1_0.8_iid_0.4_iid_0.4_MP_DifMP.tex}\end{subfigure}\begin{subfigure}[b]{0.32\textwidth}\input{graphs/stamps/NpowersMPDifNe200_50_1_0.8_iid_0.4_iid_0.4_MP_DifMP.tex}\end{subfigure}\\ \begin{subfigure}[b]{0.36\textwidth}\input{graphs/stamps/NpowersMPDifNe100_100_1_0.8_iid_0.4_iid_0.4_MP_DifMP.tex}\end{subfigure}\begin{subfigure}[b]{0.32\textwidth}\input{graphs/stamps/NpowersMPDifNe200_100_1_0.8_iid_0.4_iid_0.4_MP_DifMP.tex}\end{subfigure}\begin{subfigure}[b]{0.32\textwidth}\input{graphs/stamps/NpowersMPDifNe400_100_1_0.8_iid_0.4_iid_0.4_MP_DifMP.tex}\end{subfigure}\\ \begin{subfigure}[b]{\textwidth}\centering\input{graphs/stamps/lNpowersMPDifNe400_100_1_0.8_iid_0.4_iid_0.4_MP_DifMP.tex}\end{subfigure}\normalsize\caption{\label{DifMPNpowersMPDifNeiidBN0.811}Difference between powers in the MP vs the PANIC framework as a function of $-h$ with i.i.d.\ factor innovations and i.i.d.\ idiosyncratic parts and $\sqrt{\omega^4/\phi^4}=0.8$. Based on \num{1000000} replications.}\end{figure}
\begin{figure}[ht]\footnotesize\begin{subfigure}[b]{0.3466\textwidth}\input{graphs/stamps/NpowersNe25_25_1_0.8_iid_0.4_iid_0.4_BN_.tex}\end{subfigure}\begin{subfigure}[b]{0.3267\textwidth}\input{graphs/stamps/NpowersNe50_25_1_0.8_iid_0.4_iid_0.4_BN_.tex}\end{subfigure}\begin{subfigure}[b]{0.3267\textwidth}\input{graphs/stamps/NpowersNe100_25_1_0.8_iid_0.4_iid_0.4_BN_.tex}\end{subfigure}\\ \begin{subfigure}[b]{0.3466\textwidth}\input{graphs/stamps/NpowersNe50_50_1_0.8_iid_0.4_iid_0.4_BN_.tex}\end{subfigure}\begin{subfigure}[b]{0.3267\textwidth}\input{graphs/stamps/NpowersNe100_50_1_0.8_iid_0.4_iid_0.4_BN_.tex}\end{subfigure}\begin{subfigure}[b]{0.3267\textwidth}\input{graphs/stamps/NpowersNe200_50_1_0.8_iid_0.4_iid_0.4_BN_.tex}\end{subfigure}\\ \begin{subfigure}[b]{0.3466\textwidth}\input{graphs/stamps/NpowersNe100_100_1_0.8_iid_0.4_iid_0.4_BN_.tex}\end{subfigure}\begin{subfigure}[b]{0.3267\textwidth}\input{graphs/stamps/NpowersNe200_100_1_0.8_iid_0.4_iid_0.4_BN_.tex}\end{subfigure}\begin{subfigure}[b]{0.3267\textwidth}\input{graphs/stamps/NpowersNe400_100_1_0.8_iid_0.4_iid_0.4_BN_.tex}\end{subfigure}\\ \begin{subfigure}[b]{\textwidth}\centering\input{graphs/stamps/lNpowersNe400_100_1_0.8_iid_0.4_iid_0.4_BN_.tex}\end{subfigure}\normalsize\caption{\label{NpowersNeiidBN0.811}Size-corrected power of unit-root tests as a function of $-h$ for varying sample sizes in the PANIC framework with i.i.d.\ factor innovations and i.i.d.\ idiosyncratic parts and $\sqrt{\omega^4/\phi^4}=0.8$. Based on \num{100000} replications.}\end{figure}

\begin{figure}[ht]\footnotesize\begin{subfigure}[b]{0.3466\textwidth}\input{graphs/stamps/NpowersNe25_25_1_1_iid_0.4_iid_0.4_BN_Dif.tex}\end{subfigure}\begin{subfigure}[b]{0.3267\textwidth}\input{graphs/stamps/NpowersNe50_25_1_1_iid_0.4_iid_0.4_BN_Dif.tex}\end{subfigure}\begin{subfigure}[b]{0.3267\textwidth}\input{graphs/stamps/NpowersNe100_25_1_1_iid_0.4_iid_0.4_BN_Dif.tex}\end{subfigure}\\ \begin{subfigure}[b]{0.3466\textwidth}\input{graphs/stamps/NpowersNe50_50_1_1_iid_0.4_iid_0.4_BN_Dif.tex}\end{subfigure}\begin{subfigure}[b]{0.3267\textwidth}\input{graphs/stamps/NpowersNe100_50_1_1_iid_0.4_iid_0.4_BN_Dif.tex}\end{subfigure}\begin{subfigure}[b]{0.3267\textwidth}\input{graphs/stamps/NpowersNe200_50_1_1_iid_0.4_iid_0.4_BN_Dif.tex}\end{subfigure}\\ \begin{subfigure}[b]{0.3466\textwidth}\input{graphs/stamps/NpowersNe100_100_1_1_iid_0.4_iid_0.4_BN_Dif.tex}\end{subfigure}\begin{subfigure}[b]{0.3267\textwidth}\input{graphs/stamps/NpowersNe200_100_1_1_iid_0.4_iid_0.4_BN_Dif.tex}\end{subfigure}\begin{subfigure}[b]{0.3267\textwidth}\input{graphs/stamps/NpowersNe400_100_1_1_iid_0.4_iid_0.4_BN_Dif.tex}\end{subfigure}\\ \begin{subfigure}[b]{\textwidth}\centering\input{graphs/stamps/lNpowersNe400_100_1_1_iid_0.4_iid_0.4_BN_Dif.tex}\end{subfigure}\normalsize\caption{\label{DifNpowersNeiidBN111}(Size-corrected) power gains from using $\tUMPEMP$ over $P_b$ for varying values of $\sqrt{\omega^4/\phi^4}$ and sample sizes in the PANIC framework with i.i.d.\ factor innovations and i.i.d.\ idiosyncratic parts. Based on \num{0} replications.}\end{figure}
\begin{figure}[ht]\footnotesize\begin{subfigure}[b]{0.3466\textwidth}\input{graphs/stamps/NpowersCorrNe25_25_1_0.8_MA_0.4_MA_0.4_BN_.tex}\end{subfigure}\begin{subfigure}[b]{0.3267\textwidth}\input{graphs/stamps/NpowersCorrNe50_25_1_0.8_MA_0.4_MA_0.4_BN_.tex}\end{subfigure}\begin{subfigure}[b]{0.3267\textwidth}\input{graphs/stamps/NpowersCorrNe100_25_1_0.8_MA_0.4_MA_0.4_BN_.tex}\end{subfigure}\\ \begin{subfigure}[b]{0.3466\textwidth}\input{graphs/stamps/NpowersCorrNe50_50_1_0.8_MA_0.4_MA_0.4_BN_.tex}\end{subfigure}\begin{subfigure}[b]{0.3267\textwidth}\input{graphs/stamps/NpowersCorrNe100_50_1_0.8_MA_0.4_MA_0.4_BN_.tex}\end{subfigure}\begin{subfigure}[b]{0.3267\textwidth}\input{graphs/stamps/NpowersCorrNe200_50_1_0.8_MA_0.4_MA_0.4_BN_.tex}\end{subfigure}\\ \begin{subfigure}[b]{0.3466\textwidth}\input{graphs/stamps/NpowersCorrNe100_100_1_0.8_MA_0.4_MA_0.4_BN_.tex}\end{subfigure}\begin{subfigure}[b]{0.3267\textwidth}\input{graphs/stamps/NpowersCorrNe200_100_1_0.8_MA_0.4_MA_0.4_BN_.tex}\end{subfigure}\begin{subfigure}[b]{0.3267\textwidth}\input{graphs/stamps/NpowersCorrNe400_100_1_0.8_MA_0.4_MA_0.4_BN_.tex}\end{subfigure}\\ \begin{subfigure}[b]{\textwidth}\centering\input{graphs/stamps/lNpowersCorrNe400_100_1_0.8_MA_0.4_MA_0.4_BN_.tex}\end{subfigure}\normalsize\caption{\label{NpowersCorrNeMABN0.811}Size-corrected power of unit-root tests as a function of $-h$ for varying sample sizes in the PANIC framework with MA factor innovations and MA idiosyncratic parts and $\sqrt{\omega^4/\phi^4}=0.8$. Based on \num{100000} replications.}\end{figure}
\begin{figure}[ht]\footnotesize\begin{subfigure}[b]{0.3466\textwidth}\input{graphs/stamps/NpowersCorrNe25_25_1_0.8_AR_0.4_AR_0.4_BN_.tex}\end{subfigure}\begin{subfigure}[b]{0.3267\textwidth}\input{graphs/stamps/NpowersCorrNe50_25_1_0.8_AR_0.4_AR_0.4_BN_.tex}\end{subfigure}\begin{subfigure}[b]{0.3267\textwidth}\input{graphs/stamps/NpowersCorrNe100_25_1_0.8_AR_0.4_AR_0.4_BN_.tex}\end{subfigure}\\ \begin{subfigure}[b]{0.3466\textwidth}\input{graphs/stamps/NpowersCorrNe50_50_1_0.8_AR_0.4_AR_0.4_BN_.tex}\end{subfigure}\begin{subfigure}[b]{0.3267\textwidth}\input{graphs/stamps/NpowersCorrNe100_50_1_0.8_AR_0.4_AR_0.4_BN_.tex}\end{subfigure}\begin{subfigure}[b]{0.3267\textwidth}\input{graphs/stamps/NpowersCorrNe200_50_1_0.8_AR_0.4_AR_0.4_BN_.tex}\end{subfigure}\\ \begin{subfigure}[b]{0.3466\textwidth}\input{graphs/stamps/NpowersCorrNe100_100_1_0.8_AR_0.4_AR_0.4_BN_.tex}\end{subfigure}\begin{subfigure}[b]{0.3267\textwidth}\input{graphs/stamps/NpowersCorrNe200_100_1_0.8_AR_0.4_AR_0.4_BN_.tex}\end{subfigure}\begin{subfigure}[b]{0.3267\textwidth}\input{graphs/stamps/NpowersCorrNe400_100_1_0.8_AR_0.4_AR_0.4_BN_.tex}\end{subfigure}\\ \begin{subfigure}[b]{\textwidth}\centering\input{graphs/stamps/lNpowersCorrNe400_100_1_0.8_AR_0.4_AR_0.4_BN_.tex}\end{subfigure}\normalsize\caption{\label{NpowersCorrNeARBN0.811}Size-corrected power of unit-root tests as a function of $-h$ for varying sample sizes in the PANIC framework with AR factor innovations and AR idiosyncratic parts and $\sqrt{\omega^4/\phi^4}=0.8$. Based on \num{100000} replications.}\end{figure}
\begin{figure}[ht]\footnotesize\begin{subfigure}[b]{0.3466\textwidth}\input{graphs/stamps/NpowersStatNe25_25_1_0.8_iid_0.4_overdiffiid_0.4_BN_.tex}\end{subfigure}\begin{subfigure}[b]{0.3267\textwidth}\input{graphs/stamps/NpowersStatNe50_25_1_0.8_iid_0.4_overdiffiid_0.4_BN_.tex}\end{subfigure}\begin{subfigure}[b]{0.3267\textwidth}\input{graphs/stamps/NpowersStatNe100_25_1_0.8_iid_0.4_overdiffiid_0.4_BN_.tex}\end{subfigure}\\ \begin{subfigure}[b]{0.3466\textwidth}\input{graphs/stamps/NpowersStatNe50_50_1_0.8_iid_0.4_overdiffiid_0.4_BN_.tex}\end{subfigure}\begin{subfigure}[b]{0.3267\textwidth}\input{graphs/stamps/NpowersStatNe100_50_1_0.8_iid_0.4_overdiffiid_0.4_BN_.tex}\end{subfigure}\begin{subfigure}[b]{0.3267\textwidth}\input{graphs/stamps/NpowersStatNe200_50_1_0.8_iid_0.4_overdiffiid_0.4_BN_.tex}\end{subfigure}\\ \begin{subfigure}[b]{0.3466\textwidth}\input{graphs/stamps/NpowersStatNe100_100_1_0.8_iid_0.4_overdiffiid_0.4_BN_.tex}\end{subfigure}\begin{subfigure}[b]{0.3267\textwidth}\input{graphs/stamps/NpowersStatNe200_100_1_0.8_iid_0.4_overdiffiid_0.4_BN_.tex}\end{subfigure}\begin{subfigure}[b]{0.3267\textwidth}\input{graphs/stamps/NpowersStatNe400_100_1_0.8_iid_0.4_overdiffiid_0.4_BN_.tex}\end{subfigure}\\ \begin{subfigure}[b]{\textwidth}\centering\input{graphs/stamps/lNpowersStatNe400_100_1_0.8_iid_0.4_overdiffiid_0.4_BN_.tex}\end{subfigure}\normalsize\caption{\label{NpowersStatNeiidBN0.811}Size-corrected power of unit-root tests as a function of $-h$ for varying sample sizes in the PANIC framework with overdifferenced i.i.d.\ factor innovations and i.i.d.\ idiosyncratic parts and $\sqrt{\omega^4/\phi^4}=0.8$. Based on \num{100000} replications.}\end{figure}

\begin{figure}[ht]\footnotesize\begin{subfigure}[b]{0.3466\textwidth}\input{graphs/stamps/NpowersThreeNe25_25_3_0.8_iid_0.4_iid_0.4_BN_.tex}\end{subfigure}\begin{subfigure}[b]{0.3267\textwidth}\input{graphs/stamps/NpowersThreeNe50_25_3_0.8_iid_0.4_iid_0.4_BN_.tex}\end{subfigure}\begin{subfigure}[b]{0.3267\textwidth}\input{graphs/stamps/NpowersThreeNe100_25_3_0.8_iid_0.4_iid_0.4_BN_.tex}\end{subfigure}\\ \begin{subfigure}[b]{0.3466\textwidth}\input{graphs/stamps/NpowersThreeNe50_50_3_0.8_iid_0.4_iid_0.4_BN_.tex}\end{subfigure}\begin{subfigure}[b]{0.3267\textwidth}\input{graphs/stamps/NpowersThreeNe100_50_3_0.8_iid_0.4_iid_0.4_BN_.tex}\end{subfigure}\begin{subfigure}[b]{0.3267\textwidth}\input{graphs/stamps/NpowersThreeNe200_50_3_0.8_iid_0.4_iid_0.4_BN_.tex}\end{subfigure}\\ \begin{subfigure}[b]{0.3466\textwidth}\input{graphs/stamps/NpowersThreeNe100_100_3_0.8_iid_0.4_iid_0.4_BN_.tex}\end{subfigure}\begin{subfigure}[b]{0.3267\textwidth}\input{graphs/stamps/NpowersThreeNe200_100_3_0.8_iid_0.4_iid_0.4_BN_.tex}\end{subfigure}\begin{subfigure}[b]{0.3267\textwidth}\input{graphs/stamps/NpowersThreeNe400_100_3_0.8_iid_0.4_iid_0.4_BN_.tex}\end{subfigure}\\ \begin{subfigure}[b]{\textwidth}\centering\input{graphs/stamps/lNpowersThreeNe400_100_3_0.8_iid_0.4_iid_0.4_BN_.tex}\end{subfigure}\normalsize\caption{\label{NpowersThreeNeiidBN0.831}Size-corrected power of unit-root tests as a function of $-h$ for varying sample sizes in the PANIC framework with i.i.d.\ factor innovations and i.i.d.\ idiosyncratic parts and $\sqrt{\omega^4/\phi^4}=0.8$. Dependence based on three factors. Based on \num{100000} replications.}\end{figure}
\begin{figure}[ht]\footnotesize\begin{subfigure}[b]{0.3466\textwidth}\input{graphs/stamps/NpowersHetNe25_25_1_0.8_iid_0.4_iid_0.4_BN__Uniform.tex}\end{subfigure}\begin{subfigure}[b]{0.3267\textwidth}\input{graphs/stamps/NpowersHetNe50_25_1_0.8_iid_0.4_iid_0.4_BN__Uniform.tex}\end{subfigure}\begin{subfigure}[b]{0.3267\textwidth}\input{graphs/stamps/NpowersHetNe100_25_1_0.8_iid_0.4_iid_0.4_BN__Uniform.tex}\end{subfigure}\\ \begin{subfigure}[b]{0.3466\textwidth}\input{graphs/stamps/NpowersHetNe50_50_1_0.8_iid_0.4_iid_0.4_BN__Uniform.tex}\end{subfigure}\begin{subfigure}[b]{0.3267\textwidth}\input{graphs/stamps/NpowersHetNe100_50_1_0.8_iid_0.4_iid_0.4_BN__Uniform.tex}\end{subfigure}\begin{subfigure}[b]{0.3267\textwidth}\input{graphs/stamps/NpowersHetNe200_50_1_0.8_iid_0.4_iid_0.4_BN__Uniform.tex}\end{subfigure}\\ \begin{subfigure}[b]{0.3466\textwidth}\input{graphs/stamps/NpowersHetNe100_100_1_0.8_iid_0.4_iid_0.4_BN__Uniform.tex}\end{subfigure}\begin{subfigure}[b]{0.3267\textwidth}\input{graphs/stamps/NpowersHetNe200_100_1_0.8_iid_0.4_iid_0.4_BN__Uniform.tex}\end{subfigure}\begin{subfigure}[b]{0.3267\textwidth}\input{graphs/stamps/NpowersHetNe400_100_1_0.8_iid_0.4_iid_0.4_BN__Uniform.tex}\end{subfigure}\\ \begin{subfigure}[b]{\textwidth}\centering\input{graphs/stamps/lNpowersHetNe400_100_1_0.8_iid_0.4_iid_0.4_BN__Uniform.tex}\end{subfigure}\normalsize\caption{\label{NpowersHetNeiidBN0.810}Size-corrected power of unit-root tests as a function of $-h$ for varying sample sizes in the PANIC framework with i.i.d.\ factor innovations and i.i.d.\ idiosyncratic parts and $\sqrt{\omega^4/\phi^4}=0.8$. Alternatives drawn from a Uniform(0.2,1.8) distribution. Based on \num{100000} replications.}\end{figure}
\begin{figure}[ht]\footnotesize\begin{subfigure}[b]{0.3466\textwidth}\input{graphs/stamps/NpowersTNe25_25_1_0.8_iid_0.4_iid_0.4_BN_.tex}\end{subfigure}\begin{subfigure}[b]{0.3267\textwidth}\input{graphs/stamps/NpowersTNe50_25_1_0.8_iid_0.4_iid_0.4_BN_.tex}\end{subfigure}\begin{subfigure}[b]{0.3267\textwidth}\input{graphs/stamps/NpowersTNe100_25_1_0.8_iid_0.4_iid_0.4_BN_.tex}\end{subfigure}\\ \begin{subfigure}[b]{0.3466\textwidth}\input{graphs/stamps/NpowersTNe50_50_1_0.8_iid_0.4_iid_0.4_BN_.tex}\end{subfigure}\begin{subfigure}[b]{0.3267\textwidth}\input{graphs/stamps/NpowersTNe100_50_1_0.8_iid_0.4_iid_0.4_BN_.tex}\end{subfigure}\begin{subfigure}[b]{0.3267\textwidth}\input{graphs/stamps/NpowersTNe200_50_1_0.8_iid_0.4_iid_0.4_BN_.tex}\end{subfigure}\\ \begin{subfigure}[b]{0.3466\textwidth}\input{graphs/stamps/NpowersTNe100_100_1_0.8_iid_0.4_iid_0.4_BN_.tex}\end{subfigure}\begin{subfigure}[b]{0.3267\textwidth}\input{graphs/stamps/NpowersTNe200_100_1_0.8_iid_0.4_iid_0.4_BN_.tex}\end{subfigure}\begin{subfigure}[b]{0.3267\textwidth}\input{graphs/stamps/NpowersTNe400_100_1_0.8_iid_0.4_iid_0.4_BN_.tex}\end{subfigure}\\ \begin{subfigure}[b]{\textwidth}\centering\input{graphs/stamps/lNpowersTNe400_100_1_0.8_iid_0.4_iid_0.4_BN_.tex}\end{subfigure}\normalsize\caption{\label{NpowersTNeiidBN0.811}Size-corrected power of unit-root tests as a function of $-h$ for varying sample sizes in the PANIC framework with i.i.d.\ factor innovations and i.i.d.\ idiosyncratic parts and $\sqrt{\omega^4/\phi^4}=0.8$. Based on \num{100000} replications.}\end{figure}

\clearpage
\begin{table}
\resizebox{0.95\textwidth}{!}{
\begin{tabular}{@{}S[table-figures-decimal=0,table-figures-integer=3]S[table-figures-decimal=0,table-figures-integer=3]S[table-figures-decimal=1,table-figures-integer=1]cSSScSSScSSS@{}}
\toprule
&&&&\multicolumn{3}{c}{i.i.d.}&&\multicolumn{3}{c}{AR(1)}&&\multicolumn{3}{c}{MA(1)}\\ \cmidrule{5-7} \cmidrule{9-11} \cmidrule{13-15}
{$n$} & {$T$} & {$\sqrt{\omega^4/\phi^4}$} & {} & {$\tUMP$} & {$\tUMPEMP$} & {$P_b$} & {} & {$\tUMP$} & {$\tUMPEMP$} & {$P_b$} & {} & {$\tUMP$} & {$\tUMPEMP$} & {$P_b$} \\
\midrule
25.0000 & 25.0000 & 0.6000 &  & 0.3388 & 1.1623 & 1.5457 &  & 1.3131 & 3.4462 & 3.6370 &  & 1.0063 & 3.2835 & 3.6236 \\
25.0000 & 50.0000 & 0.6000 &  & 0.6440 & 2.4586 & 2.3327 &  & 1.4229 & 4.1603 & 3.0997 &  & 1.2706 & 4.1106 & 3.3459 \\
25.0000 & 100.0000 & 0.6000 &  & 1.3066 & 4.2087 & 3.5685 &  & 2.2698 & 6.0257 & 3.9663 &  & 2.2416 & 6.2762 & 4.4190 \\
50.0000 & 50.0000 & 0.6000 &  & 0.8959 & 2.0875 & 1.8857 &  & 2.1089 & 3.8543 & 2.9740 &  & 1.8969 & 3.8332 & 3.1266 \\
50.0000 & 100.0000 & 0.6000 &  & 1.8630 & 3.8484 & 3.0815 &  & 2.8624 & 5.2761 & 3.5671 &  & 3.1365 & 5.9815 & 4.1554 \\
50.0000 & 200.0000 & 0.6000 &  & 2.3599 & 4.5786 & 3.6983 &  & 3.3664 & 5.9994 & 3.8594 &  & 2.8088 & 5.1477 & 3.6055 \\
100.0000 & 100.0000 & 0.6000 &  & 2.2961 & 3.6674 & 2.8402 &  & 3.4368 & 5.0565 & 3.6375 &  & 4.0777 & 6.1400 & 4.3134 \\
100.0000 & 200.0000 & 0.6000 &  & 2.9105 & 4.4274 & 3.4651 &  & 3.7872 & 5.5406 & 3.7939 &  & 3.1585 & 4.6620 & 3.4292 \\
100.0000 & 400.0000 & 0.6000 &  & 3.2298 & 4.8187 & 3.8973 &  & 4.2232 & 6.0455 & 4.1226 &  & 3.1457 & 4.5824 & 3.4705 \\
25.0000 & 25.0000 & 0.8000 &  & 0.4225 & 1.3391 & 1.6874 &  & 1.3718 & 3.2440 & 4.1068 &  & 1.1106 & 3.1819 & 4.0738 \\
25.0000 & 50.0000 & 0.8000 &  & 0.8897 & 2.8047 & 2.6400 &  & 1.3979 & 3.7231 & 3.4075 &  & 1.3729 & 3.8814 & 3.7385 \\
25.0000 & 100.0000 & 0.8000 &  & 1.7360 & 4.6057 & 4.0250 &  & 2.1290 & 5.2631 & 4.3875 &  & 2.3297 & 5.8709 & 5.0136 \\
50.0000 & 50.0000 & 0.8000 &  & 1.1637 & 2.4323 & 2.0917 &  & 2.0337 & 3.5635 & 3.5755 &  & 1.9149 & 3.6634 & 3.7193 \\
50.0000 & 100.0000 & 0.8000 &  & 2.2364 & 4.1863 & 3.4217 &  & 2.6444 & 4.6693 & 4.0968 &  & 3.1080 & 5.5761 & 4.8208 \\
50.0000 & 200.0000 & 0.8000 &  & 2.7667 & 4.8961 & 4.1759 &  & 3.0927 & 5.2730 & 4.3847 &  & 2.7425 & 4.7416 & 4.0211 \\
100.0000 & 100.0000 & 0.8000 &  & 2.5527 & 3.8837 & 2.9999 &  & 3.2116 & 4.5998 & 4.1842 &  & 3.9520 & 5.8031 & 5.0632 \\
100.0000 & 200.0000 & 0.8000 &  & 3.1914 & 4.6460 & 3.7717 &  & 3.4551 & 4.9476 & 4.1871 &  & 2.9933 & 4.3505 & 3.6884 \\
100.0000 & 400.0000 & 0.8000 &  & 3.5457 & 5.0434 & 4.2708 &  & 3.8760 & 5.4350 & 4.5832 &  & 3.0457 & 4.3041 & 3.7559 \\
25.0000 & 25.0000 & 1.0000 &  & 0.4986 & 1.4582 & 1.8642 &  & 1.4347 & 3.2722 & 4.7637 &  & 1.1320 & 3.1501 & 4.5363 \\
25.0000 & 50.0000 & 1.0000 &  & 1.0655 & 3.0220 & 2.9235 &  & 1.4236 & 3.6271 & 3.9000 &  & 1.4269 & 3.8729 & 4.2192 \\
25.0000 & 100.0000 & 1.0000 &  & 1.9786 & 4.8377 & 4.4509 &  & 2.1016 & 4.9973 & 4.8947 &  & 2.3609 & 5.6851 & 5.5877 \\
50.0000 & 50.0000 & 1.0000 &  & 1.2592 & 2.5226 & 2.1738 &  & 2.0251 & 3.4968 & 4.2043 &  & 1.9637 & 3.6463 & 4.3561 \\
50.0000 & 100.0000 & 1.0000 &  & 2.3839 & 4.2495 & 3.6552 &  & 2.5926 & 4.4860 & 4.5672 &  & 3.1095 & 5.4795 & 5.4354 \\
50.0000 & 200.0000 & 1.0000 &  & 2.9174 & 4.9746 & 4.4111 &  & 2.9749 & 5.0344 & 4.7947 &  & 2.7566 & 4.6517 & 4.3721 \\
100.0000 & 100.0000 & 1.0000 &  & 2.6899 & 4.0159 & 3.0834 &  & 3.0944 & 4.4236 & 4.7045 &  & 3.9036 & 5.6798 & 5.7016 \\
100.0000 & 200.0000 & 1.0000 &  & 3.3241 & 4.7578 & 3.9273 &  & 3.3702 & 4.7801 & 4.5121 &  & 2.9883 & 4.2657 & 3.9064 \\
100.0000 & 400.0000 & 1.0000 &  & 3.6528 & 5.1201 & 4.5369 &  & 3.7611 & 5.2540 & 4.8960 &  & 2.9627 & 4.1590 & 3.8562 \\
\midrule\multicolumn{3}{c}{Mean abs. dev. from 5\%}  &  &2.9956 &1.3064 &1.7991 &  &2.3785 &0.73749 &0.896 &  &2.4846 &0.89999 &0.93512\\
\bottomrule
\end{tabular}}
\caption{Sizes (in percent) of nominal 5\% level tests  with no heterogeneity in the alternatives. Based on \num{1000000} replications. Newey Bandwidth.}
\label{table:sizesNe1_0.4_0.4_BN_}
\end{table}

\begin{table}
\resizebox{\textwidth}{!}{
\begin{tabular}{@{}S[table-figures-decimal=0,table-figures-integer=3]S[table-figures-decimal=0,table-figures-integer=3]S[table-figures-decimal=1,table-figures-integer=1]cSSScSSScSSS@{}}
\toprule
&&&&\multicolumn{3}{c}{i.i.d.}&&\multicolumn{3}{c}{AR(1)}&&\multicolumn{3}{c}{MA(1)}\\ \cmidrule{5-7} \cmidrule{9-11} \cmidrule{13-15}
{$n$} & {$T$} & {$\sqrt{\omega^4/\phi^4}$} & {} & {$\tUMP$} & {$\tUMPEMP$} & {$P_b$} & {} & {$\tUMP$} & {$\tUMPEMP$} & {$P_b$} & {} & {$\tUMP$} & {$\tUMPEMP$} & {$P_b$} \\
\midrule
25.0000 & 25.0000 & 0.6000 &  & 0.4962 & 1.5444 & 3.3224 &  & 2.9578 & 7.1775 & 7.9194 &  & 1.6268 & 4.9505 & 6.3470 \\
25.0000 & 50.0000 & 0.6000 &  & 0.5508 & 2.6096 & 4.1764 &  & 2.3702 & 6.7117 & 5.5572 &  & 1.6001 & 5.6003 & 5.5287 \\
25.0000 & 100.0000 & 0.6000 &  & 1.1492 & 4.7448 & 5.7383 &  & 3.3728 & 9.0958 & 6.2475 &  & 2.8726 & 8.6625 & 6.8781 \\
50.0000 & 50.0000 & 0.6000 &  & 0.6023 & 1.7404 & 2.5553 &  & 3.8536 & 6.8190 & 4.8049 &  & 2.3722 & 5.0253 & 4.3409 \\
50.0000 & 100.0000 & 0.6000 &  & 1.3413 & 3.5093 & 3.8275 &  & 4.1974 & 7.9236 & 4.6344 &  & 3.6692 & 7.5570 & 5.1173 \\
50.0000 & 200.0000 & 0.6000 &  & 1.8385 & 4.3726 & 4.4307 &  & 4.6179 & 8.5630 & 4.7291 &  & 3.2509 & 6.5819 & 4.3732 \\
100.0000 & 100.0000 & 0.6000 &  & 1.8665 & 3.3818 & 3.1584 &  & 5.5232 & 7.9254 & 4.5001 &  & 5.0200 & 7.7833 & 4.9222 \\
100.0000 & 200.0000 & 0.6000 &  & 2.3987 & 4.1007 & 3.7974 &  & 5.2103 & 7.6945 & 4.2308 &  & 3.6942 & 5.7748 & 3.7743 \\
100.0000 & 400.0000 & 0.6000 &  & 2.7867 & 4.6380 & 4.2494 &  & 5.6598 & 8.2504 & 4.5355 &  & 3.7122 & 5.6787 & 3.8055 \\
25.0000 & 25.0000 & 0.8000 &  & 0.5455 & 1.5147 & 2.7888 &  & 3.2335 & 6.8826 & 8.1472 &  & 1.7727 & 4.6477 & 6.2022 \\
25.0000 & 50.0000 & 0.8000 &  & 0.8098 & 2.8020 & 3.6265 &  & 2.1924 & 5.4822 & 5.1573 &  & 1.6382 & 4.7384 & 5.0447 \\
25.0000 & 100.0000 & 0.8000 &  & 1.7726 & 5.0763 & 5.1882 &  & 2.8438 & 6.9046 & 5.8022 &  & 2.7530 & 7.0275 & 6.3298 \\
50.0000 & 50.0000 & 0.8000 &  & 1.0307 & 2.2782 & 2.5762 &  & 3.7059 & 5.9373 & 5.5867 &  & 2.5361 & 4.6978 & 4.8684 \\
50.0000 & 100.0000 & 0.8000 &  & 2.1445 & 4.2225 & 3.9575 &  & 3.6027 & 6.1652 & 5.0887 &  & 3.5613 & 6.4271 & 5.5642 \\
50.0000 & 200.0000 & 0.8000 &  & 2.7145 & 5.0159 & 4.6926 &  & 3.7158 & 6.3520 & 5.0357 &  & 3.0855 & 5.4563 & 4.6310 \\
100.0000 & 100.0000 & 0.8000 &  & 2.5199 & 3.9437 & 3.3039 &  & 4.7029 & 6.4571 & 5.4348 &  & 4.7398 & 6.8616 & 5.7698 \\
100.0000 & 200.0000 & 0.8000 &  & 3.1329 & 4.7019 & 4.0220 &  & 4.1174 & 5.8601 & 4.6602 &  & 3.3139 & 4.8555 & 4.0193 \\
100.0000 & 400.0000 & 0.8000 &  & 3.4686 & 5.0439 & 4.5277 &  & 4.3461 & 6.1262 & 4.8763 &  & 3.3071 & 4.7319 & 4.0322 \\
25.0000 & 25.0000 & 1.0000 &  & 0.6544 & 1.6998 & 2.7389 &  & 3.3817 & 6.8133 & 8.6953 &  & 1.9387 & 4.6574 & 6.5738 \\
25.0000 & 50.0000 & 1.0000 &  & 1.0930 & 3.1774 & 3.6107 &  & 2.2096 & 5.1455 & 5.4346 &  & 1.7649 & 4.6148 & 5.2181 \\
25.0000 & 100.0000 & 1.0000 &  & 2.3010 & 5.4987 & 5.2511 &  & 2.7322 & 6.1530 & 5.8943 &  & 2.8672 & 6.6794 & 6.5638 \\
50.0000 & 50.0000 & 1.0000 &  & 1.3430 & 2.6593 & 2.6549 &  & 3.6903 & 5.7102 & 6.6092 &  & 2.6447 & 4.6126 & 5.5601 \\
50.0000 & 100.0000 & 1.0000 &  & 2.5544 & 4.5465 & 4.0289 &  & 3.3788 & 5.5997 & 5.5944 &  & 3.5488 & 6.1215 & 6.0997 \\
50.0000 & 200.0000 & 1.0000 &  & 3.1982 & 5.3621 & 4.8240 &  & 3.3973 & 5.6042 & 5.3284 &  & 3.0723 & 5.1269 & 4.8269 \\
100.0000 & 100.0000 & 1.0000 &  & 2.8577 & 4.2410 & 3.3702 &  & 4.4877 & 6.0596 & 6.3150 &  & 4.6587 & 6.5959 & 6.5963 \\
100.0000 & 200.0000 & 1.0000 &  & 3.4348 & 4.9097 & 4.0996 &  & 3.8409 & 5.3476 & 5.0376 &  & 3.2979 & 4.6861 & 4.2967 \\
100.0000 & 400.0000 & 1.0000 &  & 3.8467 & 5.3387 & 4.7443 &  & 4.0185 & 5.5378 & 5.1632 &  & 3.1722 & 4.3971 & 4.0748 \\
\midrule\multicolumn{3}{c}{Mean abs. dev. from 5\%}  &  &3.0573 &1.2962 &1.1886 &  &1.3491 &1.6037 &0.81768 &  &1.9833 &0.97667 &0.83067\\
\bottomrule
\end{tabular}}
\caption{Sizes (in percent) of nominal 5\% level tests  with no heterogeneity in the alternatives. Based on \num{1000000} replications. Newey Bandwidth, three factors.}
\label{table:sizesNeThree3_0.4_0.4_BN_}
\end{table}

\begin{table}
\resizebox{\textwidth}{!}{
\begin{tabular}{@{}S[table-figures-decimal=0,table-figures-integer=3]S[table-figures-decimal=0,table-figures-integer=3]S[table-figures-decimal=1,table-figures-integer=1]cSSScSSScSSS@{}}
\toprule
&&&&\multicolumn{3}{c}{i.i.d.}&&\multicolumn{3}{c}{AR(1)}&&\multicolumn{3}{c}{MA(1)}\\ \cmidrule{5-7} \cmidrule{9-11} \cmidrule{13-15}
{$n$} & {$T$} & {$\sqrt{\omega^4/\phi^4}$} & {} & {$\tUMP$} & {$\tUMPEMP$} & {$P_b$} & {} & {$\tUMP$} & {$\tUMPEMP$} & {$P_b$} & {} & {$\tUMP$} & {$\tUMPEMP$} & {$P_b$} \\
\midrule
25.0000 & 25.0000 & 0.6000 &  & 0.3409 & 1.2142 & 1.6505 &  & 1.4658 & 3.7385 & 3.9776 &  & 1.0648 & 3.4138 & 3.8460 \\
25.0000 & 50.0000 & 0.6000 &  & 0.6516 & 2.4646 & 2.4427 &  & 1.4798 & 4.2709 & 3.2114 &  & 1.3073 & 4.1964 & 3.4938 \\
25.0000 & 100.0000 & 0.6000 &  & 1.3262 & 4.2486 & 3.6417 &  & 2.2889 & 6.0418 & 4.0332 &  & 2.2535 & 6.3537 & 4.4937 \\
50.0000 & 50.0000 & 0.6000 &  & 0.9047 & 2.1034 & 1.9321 &  & 2.1749 & 3.9629 & 3.0134 &  & 1.9077 & 3.8500 & 3.1830 \\
50.0000 & 100.0000 & 0.6000 &  & 1.8729 & 3.8735 & 3.1185 &  & 2.8676 & 5.2993 & 3.6095 &  & 3.1404 & 5.9591 & 4.1796 \\
50.0000 & 200.0000 & 0.6000 &  & 2.3727 & 4.5805 & 3.6723 &  & 3.3682 & 5.9888 & 3.9003 &  & 2.8298 & 5.1637 & 3.5972 \\
100.0000 & 100.0000 & 0.6000 &  & 2.2972 & 3.6694 & 2.8567 &  & 3.4544 & 5.0675 & 3.6437 &  & 4.0526 & 6.1198 & 4.2833 \\
100.0000 & 200.0000 & 0.6000 &  & 2.8961 & 4.4491 & 3.4489 &  & 3.7737 & 5.5150 & 3.7563 &  & 3.1528 & 4.6599 & 3.4039 \\
100.0000 & 400.0000 & 0.6000 &  & 3.2887 & 4.8759 & 3.9439 &  & 4.2975 & 6.1233 & 4.1309 &  & 3.1535 & 4.5657 & 3.4830 \\
25.0000 & 25.0000 & 0.8000 &  & 0.4281 & 1.3461 & 1.7030 &  & 1.4810 & 3.4516 & 4.3192 &  & 1.1370 & 3.2526 & 4.1147 \\
25.0000 & 50.0000 & 0.8000 &  & 0.8791 & 2.8029 & 2.6092 &  & 1.4316 & 3.8096 & 3.4923 &  & 1.3680 & 3.8928 & 3.7450 \\
25.0000 & 100.0000 & 0.8000 &  & 1.7403 & 4.6442 & 3.9925 &  & 2.1275 & 5.2467 & 4.4039 &  & 2.3131 & 5.8618 & 4.9925 \\
50.0000 & 50.0000 & 0.8000 &  & 1.1484 & 2.3988 & 2.0396 &  & 2.0682 & 3.6285 & 3.6244 &  & 1.9738 & 3.7413 & 3.7599 \\
50.0000 & 100.0000 & 0.8000 &  & 2.2293 & 4.1505 & 3.4328 &  & 2.6583 & 4.7169 & 4.1595 &  & 3.1390 & 5.5939 & 4.8661 \\
50.0000 & 200.0000 & 0.8000 &  & 2.7940 & 4.9094 & 4.1489 &  & 3.0365 & 5.2509 & 4.3682 &  & 2.7455 & 4.7664 & 4.0542 \\
100.0000 & 100.0000 & 0.8000 &  & 2.6206 & 3.9694 & 3.0386 &  & 3.1945 & 4.6022 & 4.2217 &  & 3.9703 & 5.8404 & 5.0614 \\
100.0000 & 200.0000 & 0.8000 &  & 3.2240 & 4.6662 & 3.7743 &  & 3.4582 & 4.9381 & 4.1899 &  & 3.0392 & 4.3850 & 3.7296 \\
100.0000 & 400.0000 & 0.8000 &  & 3.5446 & 5.0466 & 4.3073 &  & 3.8903 & 5.4523 & 4.5547 &  & 3.0375 & 4.2743 & 3.7346 \\
25.0000 & 25.0000 & 1.0000 &  & 0.4851 & 1.4446 & 1.8301 &  & 1.5121 & 3.3971 & 4.8935 &  & 1.1821 & 3.2211 & 4.6313 \\
25.0000 & 50.0000 & 1.0000 &  & 1.0329 & 2.9825 & 2.8721 &  & 1.4215 & 3.6448 & 3.8487 &  & 1.4502 & 3.8940 & 4.2347 \\
25.0000 & 100.0000 & 1.0000 &  & 1.9995 & 4.8589 & 4.4433 &  & 2.1293 & 5.0200 & 4.9077 &  & 2.3971 & 5.7525 & 5.6331 \\
50.0000 & 50.0000 & 1.0000 &  & 1.2678 & 2.5350 & 2.1625 &  & 2.0655 & 3.5552 & 4.2583 &  & 1.9666 & 3.6485 & 4.3241 \\
50.0000 & 100.0000 & 1.0000 &  & 2.4012 & 4.2840 & 3.6406 &  & 2.6038 & 4.4901 & 4.5941 &  & 3.1090 & 5.4972 & 5.4502 \\
50.0000 & 200.0000 & 1.0000 &  & 2.9656 & 5.0245 & 4.4532 &  & 2.9955 & 5.0497 & 4.8265 &  & 2.7473 & 4.6747 & 4.3900 \\
100.0000 & 100.0000 & 1.0000 &  & 2.6830 & 4.0208 & 3.0671 &  & 3.1179 & 4.4591 & 4.7456 &  & 3.9433 & 5.6975 & 5.7287 \\
100.0000 & 200.0000 & 1.0000 &  & 3.2646 & 4.6956 & 3.8540 &  & 3.3639 & 4.7820 & 4.5191 &  & 2.9477 & 4.2286 & 3.8715 \\
100.0000 & 400.0000 & 1.0000 &  & 3.7035 & 5.1913 & 4.5774 &  & 3.7416 & 5.2167 & 4.8686 &  & 2.9444 & 4.1339 & 3.8330 \\
\midrule\multicolumn{3}{c}{Mean abs. dev. from 5\%}  &  &2.9866 &1.299 &1.7906 &  &2.353 &0.6972 &0.84918 &  &2.4714 &0.89039 &0.91217\\
\bottomrule
\end{tabular}}
\caption{Sizes (in percent) of nominal 5\% level tests  with no heterogeneity in the alternatives. Based on \num{1000000} replications. Newey Bandwidth, $t$-distribution with five degrees of freedom.}
\label{table:sizesNeT1_0.4_0.4_BN_}
\end{table}

\end{appendices}
\end{document}